\begin{document}
%
\title{
The Logic of Counting Query Answers
}


\author{
\IEEEauthorblockN{Hubie Chen}
\IEEEauthorblockA{Departamento LSI, 
Universidad del Pa\'{i}s Vasco\\
\emph{and} IKERBASQUE, Basque Foundation for Science\\
E-20018 San Sebasti\'{a}n, Spain
}
\and
\IEEEauthorblockN{Stefan Mengel}
\IEEEauthorblockA{
CNRS\\CRIL UMR 8188\\
France
}
}


%



\maketitle

\begin{abstract}
We consider the problem of counting the number of answers
to a first-order formula on a finite structure.
We present and study an extension of first-order logic
in which algorithms for this counting problem can be 
naturally and conveniently expressed, 
in senses that are made precise and that are motivated
by the wish to understand tractable cases of the counting problem.
\end{abstract}


%
\IEEEpeerreviewmaketitle

\newtheorem{theorem}{Theorem}[section]
\newtheorem{conjecture}[theorem]{Conjecture}
\newtheorem{corollary}[theorem]{Corollary}
\newtheorem{examples}[theorem]{Examples}
\newtheorem{proposition}[theorem]{Proposition}
\newtheorem{prop}[theorem]{Proposition}
\newtheorem{lemma}[theorem]{Lemma}
\newtheorem{definition}[theorem]{Definition}
\newtheorem{observation}[theorem]{Observation}
\newtheorem{remark}[theorem]{Remark}
\newtheorem{examplecore}[theorem]{Example}

\newenvironment{example}
  {\begin{examplecore}\rm}
  {\hfill $\Box$\end{examplecore}}
%




\newcommand{\sh}{\sharp}
\newcommand{\countp}{\mathsf{count}}

\newcommand{\mc}{\mathsf{MC}}

\newcommand{\nats}{\mathbb{N}}
\newcommand{\N}{\mathbb{N}}
\newcommand{\Q}{\mathbb{Q}}
\newcommand{\Z}{\mathbb{Z}}

\newcommand{\str}{\mathrm{str}}

\newcommand{\rela}{\mathbf{A}}
\newcommand{\relb}{\mathbf{B}}
\newcommand{\relc}{\mathbf{C}}
\newcommand{\reld}{\mathbf{D}}
\newcommand{\rele}{\mathbf{E}}
\newcommand{\relt}{\mathbf{T}}
\newcommand{\relp}{\mathbf{P}}
\newcommand{\relf}{\mathbf{F}}
\newcommand{\relg}{\mathbf{G}}
\newcommand{\relh}{\mathbf{H}}
\newcommand{\reli}{\mathbf{I}}
\newcommand{\relm}{\mathbf{M}}
\newcommand{\reln}{\mathbf{N}}
\newcommand{\relr}{\mathbf{R}}

\newcommand{\calP}{\mathcal{P}}
\newcommand{\calE}{\mathcal{E}}

\newcommand{\free}{\mathrm{free}}
\newcommand{\closed}{\mathrm{closed}}
\newcommand{\lib}{\mathrm{lib}}

\newcommand{\id}{\mathrm{id}}
\newcommand{\surj}{\mathrm{surj}}
\newcommand{\tw}{\mathsf{tw}}
\newcommand{\starsize}{\mathsf{starsize}}
\newcommand{\qaw}{\mathsf{qaw}}
\newcommand{\topp}{\mathsf{top}}
\newcommand{\atom}{\mathsf{atom}}

\newcommand{\res}{\upharpoonright}
\newcommand{\aug}{\mathsf{aug}}
\newcommand{\width}{\mathsf{width}}
\newcommand{\shwidth}{\sh\textup{-}\width}
\newcommand{\contract}{\mathsf{contract}}

\newcommand{\p}{\mathsf{P}}
\newcommand{\np}{\mathsf{NP}}

\newcommand{\FO}{\mathsf{FO}}
\newcommand{\PP}{\mathsf{PP}}
\newcommand{\EP}{\mathsf{EP}}

\newcommand{\fo}{\mathsf{FO}}
\newcommand{\pp}{\mathsf{PP}}
\newcommand{\ep}{\mathsf{EP}}

\newcommand{\pcountp}{\smallp\countp}

\newcommand{\dom}{\mathrm{dom}}
\newcommand{\powfin}{\wp_{\mathsf{fin}}}

\newcommand{\pn}[1]{\textsc{#1}}

\newcommand{\smallp}{\ensuremath{p\textup{-}}}
\newcommand{\clique}{\pn{Clique}}
\newcommand{\sclique}{\pn{\#Clique}}
\newcommand{\pclique}{\smallp\clique}
\newcommand{\psclique}{\smallp\sclique}

\newcommand{\param}[1]{\mathsf{param}\textup{-}#1}

\newcommand{\hnote}[1]{}
\newcommand{\snote}[1]{[Stefan note: #1]}

\newcommand{\lics}[1]{#1}
\newcommand{\pods}[1]{}
\newcommand{\licspods}[2]{\lics{#1}\pods{#2}}

\newenvironment{proof}{\noindent\textbf{Proof\/}.}{\hfill$\Box$\medskip}


\section{Introduction}

\subsection{Overview}

The computational problem of evaluating a
logical formula
on a finite relational structure is of central interest in 
database theory and logic.
In the context of database theory, this problem is
often referred to as \emph{query evaluation}, as
it models the posing of a query to a database, in a well-acknowledged way: 
the formula represents the query,
and the structure represents the database.
We refer to the results of such an evaluation as
\emph{answers}; 
logically, these are the satisfying assignments of the formula on the structure.
The particular case of this problem 
where the formula is a sentence 
is known as \emph{model checking}.
In the sequel, 
we assume by default that (unless otherwise specified)
all formulas and sentences under discussion
are first-order and relational.

This article concerns 
the problem of \emph{counting query answers}:
given a first-order formula and a finite structure,
output the \emph{number} of answers
(for previous studies, see for example the works~\cite{PichlerSkritek11-counting,GrecoScarcello14-counting,DurandMengel13-structuralcounting,ChenMengel15-pp-icdt}).
This problem is a counting version of query evaluation,
and
generalizes model checking, which can be viewed as the 
particular case thereof where one is given a sentence and structure,
and wants to decide if the number of answers is $1$ or $0$, 
corresponding to whether or not the empty assignment is satisfying.
Motivation for studying this counting problem 
stems both from
basic and fundamental interest, 
and from application scenarios:
all practical query languages
supported by database management systems have
a counting operator, and
it has indeed been argued~\cite{GrecoScarcello14-counting}
 that database queries with counting
are at the basis of decision support systems 
that handle large data volume.

With a first-order formula $\phi$ in hand, if one is interested in 
counting the \emph{number} of answers to $\phi$ on given
structures, 
\licspods{it is natural to}{it is natural---from the perspective of database theory---to} 
inquire if there is a language or logic in which 
one can directly express the mapping 
that provides, for each structure,
the number of answers to $\phi$.
Such a logic could serve as a target language
into which first-order formulas of interest
(in the mentioned sense)
could be compiled,
and then optimized, rewritten, and evaluated.
This article presents and studies such a logic, 
\emph{$\sh$-logic}, wherein the evaluation of a sentence
on a structure yields an integer value.
From the database-theoretic viewpoint, our presentation
of $\sh$-logic amounts to the introduction of
a query language designed particularly for counting answers.
We show that $\sh$-logic enjoys and balances the following properties.
\begin{itemize}

\item {\bf Expressivity.} In a sense made precise, 
$\sh$-logic allows for the expression of 
known efficient algorithms
for tractable cases of the counting query answers problem.
Moreover, this expression is (in our view) direct and clean,
and illustrates that $\sh$-logic captures precisely the
key computational primitives required by these algorithms;
this capture, in turn, justifies the particular definition
of $\sh$-logic.

\item {\bf Optimizability.}
Minimizing a crucial measure known as \emph{width} 
can be performed computably in an expressive fragment of $\sh$-logic;
this amounts to the fragment supporting 
an optimal form of
query optimization, relative to this quantity.

\end{itemize}

Our hope is that this article will contribute to and 
invigorate a broader investigation 
of query languages for counting answers.
Such an investigation could address issues such as
the identification of desirable theoretical properties of
such query languages, and techniques for performing
query rewriting, optimization, and evaluation.






\subsection{Background: complexity}

As has been previously articulated in the literature,
a typical situation in the database setting is the evaluation of a relatively short formula on a relatively large structure.  Consequently, it has been argued that, in measuring the time complexity of query evaluation tasks, one could reasonably allow a slow (non-polynomial-time) preprocessing of the formula, so long as the desired evaluation can be performed in polynomial time following the preprocessing~\cite{PapadimitriouYannakakis99-database,FlumGrohe06-parameterizedcomplexity}.
Relaxing polynomial-time computation to allow arbitrary preprocessing of a \emph{parameter} of a problem instance yields, in essence, the notion of \emph{fixed-parameter tractability}.
This notion is at the core of 
and is the primary tractability notion in 
\emph{parameterized complexity theory}, which provides a taxonomy for classifying problems where each instance has an associated parameter.  
Following this motivation, 
whenever the problem of counting query answers
(or the model checking problem) is considered using
parameterized complexity, in this article,
the formula is taken to be the parameter.

The problem of counting query answers
is well-known to be computationally intractable.
It is possible to restrict this problem by
considering restricted classes of queries,
and then trying to understand which classes of queries 
are computationally well-behaved in that they give
rise to a tractable case of the general problem.
Precisely, for a class $\Phi$ of first-order formulas,
define $\countp(\Phi)$ to be the problem
where an instance is a formula $\phi \in \Phi$ paired
with a finite structure $\relb$,
and the output is the number of answers of $\phi$ on $\relb$.
We hence have a family of problems, one problem
$\countp(\Phi)$
for each such formula class $\Phi$,
and one can inquire which of these problems are
tractable (and which are not).
We will also have cause to consider the case of 
model checking, so,
when $\Phi$ is a class of first-order sentences,
define $\mc(\Phi)$ to be the \emph{model checking}
problem where an instance is a sentence $\phi \in \Phi$
paired with a finite structure $\relb$,
and the output is yes or no depending on whether or not
$\relb$ satisfies $\phi$.
The complexity of the problem family $\mc(\Phi)$ has
been considered in numerous papers, 
such as~\cite{Grohe07-otherside,AdlerWeyer09-treewidth,ChenDalmau12-decomposingquantified,ChenMarx13-blocksorted,Chen14-existentialpositive,ChenMueller14-hierarchy,Chen14-frontier}.

\subsection{Background: width}

\emph{Width} is a syntactic measure of logical formulas.
The \emph{width} of a first-order formula 
$\phi$ is defined
as the maximum number of free variables over all subformulas 
of $\phi$.
In studies of the problem family $\mc(\Phi)$,
width has emerged as a crucially relevant measure; 
we now explain how.

Say that a class $\Phi$ of first-order formulas has
\emph{bounded width} if there exists a constant $k \geq 1$
such that each $\phi \in \Phi$ has width at most $k$.
It is now well-known that bounded width sentence classes are
computationally desirable for model checking, made precise as follows.

\begin{observation} \cite{Immerman82-fo-expressibility,Vardi95-boundedvariable}
\label{obs:mc-ptime}
Suppose that $\Phi$ is a class of sentences having bounded width.
The computational problem $\mc(\Phi)$
is polynomial-time decidable, via the algorithm that (given a sentence and a structure) simply computes the set of satisfying assignments for each subformula, inductively.
\end{observation}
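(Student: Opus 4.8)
The plan is to take the inductive algorithm named in the statement at face value and argue that, under the bounded-width hypothesis, it runs in polynomial time; its correctness follows immediately from the semantics of first-order logic by structural induction, so the real work is the running-time analysis. First I would fix a constant $k \geq 1$ witnessing that $\Phi$ has bounded width, and fix an input consisting of a sentence $\phi \in \Phi$ and a finite structure $\relb$ with universe $B$. For every subformula $\psi$ of $\phi$, let $V(\psi)$ denote its set of free variables, and let $\psi[\relb]$ denote the set of assignments $f \colon V(\psi) \to B$ satisfying $\psi$ in $\relb$, stored explicitly as a table of tuples. The central point is that, since $\psi$ is a subformula of $\phi$, the definition of width gives $|V(\psi)| \leq k$, hence $|\psi[\relb]| \leq |B|^{k}$; as $k$ is a constant, every table produced during the run has size bounded by a polynomial in the size of $\relb$, and the number of subformulas of $\phi$ is linear in the length of $\phi$.

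Next I would go through the inductive construction of the tables, checking that each step costs time polynomial in the sizes of its (already computed) input tables. For an atomic formula --- a relational atom $R(x_{i_1},\dots,x_{i_r})$ or an equality $x = y$ --- the table is read off directly from $\relb$ (contracting repeated coordinates as needed). For $\psi = \neg\psi'$ we have $V(\psi) = V(\psi')$, and $\psi[\relb]$ is the complement of $\psi'[\relb]$ inside $B^{V(\psi)}$. For $\psi = \psi_1 \wedge \psi_2$ we have $V(\psi) = V(\psi_1) \cup V(\psi_2)$, and $\psi[\relb]$ is the relational join of the two tables, i.e.\ the set of assignments on $V(\psi)$ whose restrictions to $V(\psi_1)$ and to $V(\psi_2)$ lie in $\psi_1[\relb]$ and $\psi_2[\relb]$ respectively; the case $\psi = \psi_1 \vee \psi_2$ is the analogous ``or'' combination. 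For $\psi = \exists y\, \psi'$, the table $\psi[\relb]$ is obtained by projecting away the $y$-coordinate from $\psi'[\relb]$, and $\forall y\,\psi'$ is handled dually (or rewritten as $\neg \exists y\, \neg \psi'$). Each of join, complementation, and projection is computable in time polynomial in the combined sizes of the input and output tables --- a naive nested loop suffices for the join, for instance --- hence in time polynomial in the size of $\relb$.

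Putting these together, processing the subformulas of $\phi$ from the bottom up takes time polynomial in the size of $\relb$, with the degree of the polynomial depending only on $k$, and polynomial in the length of $\phi$, which yields the claimed polynomial-time decidability of $\mc(\Phi)$. When the process reaches $\phi$ itself, which is a sentence, $V(\phi) = \emptyset$, so $\phi[\relb]$ is a subset of the singleton $B^{\emptyset}$, and we answer ``yes'' exactly when it is nonempty. There is no deep obstacle here; the one thing that genuinely must be gotten right --- and the sole place the hypothesis is used --- is the arity bound on the intermediate tables. Without bounded width, the free-variable set of a subformula such as $\psi_1 \wedge \psi_2$, and hence the table $\psi[\relb]$, could have size exponential in the length of $\phi$, so the join and complement steps would cease to be polynomial; the bounded-width assumption is precisely what prevents this blow-up, and everything else is routine bookkeeping.
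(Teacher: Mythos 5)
Your proposal is correct and follows exactly the approach the paper intends: the observation is stated with its algorithm (inductively tabulating the satisfying assignments of each subformula) and cited to Vardi, and your analysis — each table has at most $|B|^{k}$ entries by the width bound, and each join/complement/projection step is polynomial in the input and output table sizes — is the standard justification. The only cosmetic remark is that the paper's logic has no built-in equality, so the equality-atom case is not needed, but this does not affect the argument.
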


As suggested above, 
it is known that a problem is fixed-parameter tractable 
if, after performing preprocessing on the parameter,
an instance 
can be resolved in polynomial time~\cite[Theorem 1.37]{FlumGrohe06-parameterizedcomplexity}.
For a problem $\mc(\Phi)$, one can readily observe
that if each sentence $\phi \in \Phi$ can be algorithmically
translated to a sentence lying in a class having bounded width,
then $\mc(\Phi)$ is fixed-parameter tractable as a consequence
of Observation~\ref{obs:mc-ptime}.
This can be formalized as follows.

\begin{observation} \cite{Chen14-existentialpositive}
\label{obs:mc-fpt}
Suppose that $\Phi$ is a sentence class.
The following condition, which we will refer to as the
\emph{classical condition}, is sufficient for
the problem $\mc(\Phi)$ to be fixed-parameter tractable:
there exists a sentence class $\Phi'$ having bounded width
and an algorithm $f$ that computes, for each $\phi \in \Phi$,
a logically equivalent sentence $f(\phi)$ that is in $\Phi'$.
\end{observation}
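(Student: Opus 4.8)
The plan is to give a direct fixed-parameter algorithm for $\mc(\Phi)$, using the algorithm $f$ supplied by the classical condition as preprocessing on the parameter and the algorithm of Observation~\ref{obs:mc-ptime} for the polynomial-time phase. Recall that an instance of $\mc(\Phi)$ is a pair $(\phi,\relb)$ with $\phi \in \Phi$, and that the parameter is $\phi$. On such an instance I would: (i) run $f$ on $\phi$ to produce the logically equivalent sentence $\phi' = f(\phi)$, which lies in $\Phi'$; (ii) run the polynomial-time model checking algorithm of Observation~\ref{obs:mc-ptime} for $\Phi'$ on the instance $(\phi',\relb)$; and (iii) return its answer. Correctness is immediate: since $\phi'$ is logically equivalent to $\phi$, we have $\relb \models \phi'$ if and only if $\relb \models \phi$, and step (ii) correctly decides the former because $\phi' \in \Phi'$ and $\Phi'$ has bounded width, so Observation~\ref{obs:mc-ptime} applies.

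For the running time, step (i) takes time bounded by some computable function of $\phi$ alone, since $f$ does not look at $\relb$; in particular, the size $|\phi'|$ is bounded by a computable function of $\phi$. Let $k$ be a width bound valid for all of $\Phi'$. The algorithm of Observation~\ref{obs:mc-ptime} maintains, for each of the $|\phi'|$ subformulas of $\phi'$, a table of its satisfying assignments; since each subformula has at most $k$ free variables, each such table has size at most $|\relb|^{k}$, and it is computed from the tables of immediate subformulas in time polynomial in $|\relb|^{k}$ and $|\phi'|$. Hence step (ii) runs in time polynomial in $|\phi'| + |\relb|$ with an exponent depending only on $k$, say at most $|\phi'|^{O(1)} \cdot |\relb|^{O(k)}$. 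Combining the two phases, the whole algorithm runs in time $g(\phi) \cdot |\relb|^{O(1)}$ for a computable function $g$ (absorbing the cost of $f$ and the factor $|\phi'|^{O(1)}$ into $g(\phi)$), with the exponent of $|\relb|$ independent of $\phi$; this is an fpt running time. Equivalently, and more in the spirit of the surrounding discussion, one can phrase the argument as an instance of the preprocessing characterization of fixed-parameter tractability \cite[Theorem 1.37]{FlumGrohe06-parameterizedcomplexity}: the preprocessing of the parameter is $\phi \mapsto f(\phi)$, and the subsequent polynomial-time computation is the algorithm of Observation~\ref{obs:mc-ptime}.

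I do not expect a serious obstacle here; the one point that needs care is the uniformity of the exponent in step (ii), which is exactly why the hypothesis asks for \emph{bounded} width rather than merely that each $f(\phi)$ have some finite width. The model checking algorithm runs in time roughly $|\phi'| \cdot |\relb|^{O(k)}$, so if the width of $f(\phi)$ were permitted to grow with $\phi$, one would only obtain a running time of the form $|\relb|^{g(\phi)}$ rather than $g(\phi) \cdot |\relb|^{O(1)}$. Because $k$ is a single constant valid for the entire class $\Phi'$, the dependence on $\relb$ has a fixed polynomial degree, and the (arbitrarily large, but parameter-only) costs incurred by running $f$ and by the size of $\phi'$ are confined to the factor $g(\phi)$.
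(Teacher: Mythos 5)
Your proof is correct and follows exactly the route the paper indicates: preprocess the parameter via $f$, then apply the polynomial-time algorithm of Observation~\ref{obs:mc-ptime} to the bounded-width sentence $f(\phi)$, and conclude fixed-parameter tractability via the preprocessing characterization \cite[Theorem 1.37]{FlumGrohe06-parameterizedcomplexity}. Your remark on why a uniform width bound (rather than per-formula finiteness) is needed for the exponent of $|\relb|$ is exactly the right point of care.
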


Research on the problem family $\mc(\Phi)$ has succeeded in 
obtaining classifications on classes $\Phi$ of bounded arity
where
the quantifiers and connectives are restricted
(see for examples~\cite{Grohe07-otherside,ChenMarx13-blocksorted,Chen14-existentialpositive}).
An example relevant to the present article 
is a study~\cite{Chen14-existentialpositive}
 of existential positive logic
(by which we mean the positive fragment of first-order logic
consisting of formulas built from 
atoms, $\wedge$, $\vee$, and $\exists$);
making crucial use of a hardness result by 
Grohe~\cite{Grohe07-otherside}, 
this work 
observed that when $\Phi$ is a class of existential positive sentences
having bounded arity,
the problem $\mc(\Phi)$ is fixed-parameter tractable
if and only if the classical condition applies to $\Phi$.
(Here and elsewhere in our discussion, we assume the standard
complexity-theoretic hypothesis that FPT $\neq$ W[1].)
That is, the sufficient condition for fixed-parameter tractability
identified by Observation~\ref{obs:mc-fpt} is
the \emph{exclusive} explanation for fixed-parameter tractability,
in the described setting of existential positive logic.
Let us remark that the query preprocessing algorithm $f$ here 
(in Observation~\ref{obs:mc-fpt})
is related to and akin to the database notion of a query optimizer that computes a query execution plan.

Existential positive logic is a natural fragment of 
first-order logic~\cite{Rossman08-hompreservation} and is studied heavily
in database theory.
Existential positive formulas include and are 
semantically equivalent to so-called
\emph{unions of conjunctive queries}, also known as
\emph{select-project-join-union queries},
which have been argued to be the most 
common database queries~\cite{AbiteboulHullVianu95-foundationsdatabases}.
Recently, the present authors 
generalized the mentioned dichotomy 
on existential positive sentence classes,
by presenting
a classification theorem~\cite{ChenMengel16-pods-ep-counting}
describing the fixed-parameter tractable problems
of the form $\countp(\Phi)$, where $\Phi$ is a bounded arity class
of existential positive formulas.  The classification theorem is
in fact a trichotomy theorem, which demonstrates that
the studied problems $\countp(\Phi)$ can exhibit three types of complexity behavior.

\subsection{Contributions}

As a means of introducing our contributions,
we here wish to highlight a 
conceptual point: 
the applicability of the above \emph{classical condition}
indicates that for the model checking problem,
\emph{first-order logic itself can be used as a model of computation 
in which desirable, efficient algorithms can be expressed.}
This condition posits the existence of an algorithm that translates
first-order sentences to a  polynomial-time evaluable format;
and, the particular format used therein is 
that of a first-order sentence!
Let us highlight that here, logic can be viewed as playing 
two complementary roles: on the one hand, the 
computational problems of interest are phrased directly in terms of logic;
on the other hand,  appropriate algorithmic solutions to this problem
are themselves describable by logical sentences.

Inspired by this perspective of 
\emph{logic as a useful model of computation},
the present work was motivated by the desire to develop a 
logic that could serve as a useful model of computation for the
problem of counting query answers---analogously to 
how first-order logic itself
serves as a useful model of computation for the model checking problem.

Let us point out some desiderata that such a logic ought to fulfill.
First, recall that in the problem of counting query answers,
an instance is a first-order formula paired with a structure, and the
output is the \emph{number} of answers.  
The hope, then, 
would be to be able to translate a first-order formula $\phi$
to a sentence $\psi$ in the logic such that evaluating $\psi$
on a structure returns the number of answers to $\phi$ on the structure;
hence, in the logic, the evaluation of a sentence on a structure
ought to return a \emph{numerical quantity}, instead of
a propositional value as in usual first-order logic.
Second, the logic must accommodate the fact that, while 
the classical condition explains all tractable cases of
model checking in existential positive logic (in the sense made precise),
there are classes $\Phi$ of formulas 
for which the problem $\countp(\Phi)$
is fixed-parameter tractable, but
on which the classical condition
does not hold---in the sense that $\Phi$ does not have
bounded width, even if each formula therein
may be replaced with a logically equivalent one.\footnote{
An example of such a class is
$\Theta = \{ \theta_n ~|~ n \geq 1 \}$
where 
$\theta_n(x_1, \ldots, x_n) = U_1(x_1) \wedge \cdots \wedge U_n(x_n)$.
The number of answers for $\theta_n$ on any structure $\relb$
is straightforwardly verified to be $\prod_{i=1}^n |U_i^{\relb}|$.
Given $\theta_n$ and $\relb$, this quantity can indeed be computed
in polynomial time;
however, it can be verified that the classical condition
does not apply to $\Theta$, in the mentioned sense.
}

In this article, we introduce and study a logic,
which we call \emph{$\sh$-logic} 
and which possesses the sought-after characteristics just described.
In $\sh$-logic, 
the evaluation of a \emph{$\sh$-sentence}
(a type of formula in $\sh$-logic) on a structure
returns an integer value.

\licspods{
   \subsubsection{\bf $\sh$-logic, a preview}
}{
	\subsubsection{$\sh$-logic: a preview}
}
\label{subsubsect:sh-logic-preview}

Syntactically, $\sh$-logic consists of \emph{$\sh$-formulas};
each $\sh$-formula $\phi$ 
has an associated set of free variables,
denoted by $\free(\phi)$,
Let $\relb$ be a structure, let $\phi$ be a $\sh$-formula
over the signature of $\relb$,
and let $h: \free(\phi) \to B$ be an assignment.
Semantically,
evaluating $\phi$ with respect to $\relb$ and $h$
returns an integer value, as opposed to a propositional value
(as for a fo-formula).
To present and discuss the semantics
of $\sh$-logic,
we will notationally use $[\relb, \phi]$
to denote the mapping that takes an assignment
$h: \free(\phi) \to B$ to the 
corresponding
integer value
(that is, the integer value provided to the triple consisting of
$\phi$, $\relb$, and $h$).


To offer the reader a feel for the syntax and semantics
of $\sh$-logic, we provide a discussion of 
some example formulas.

\begin{example}
Define 
$\phi(x,y,z)$ to be the formula $E(x,y) \wedge F(x,z)$.
The first type of $\sh$-formula is a \emph{casting}
of a fo-formula; define $\psi$ to be
$C(E(x,y), \{ x, y \})$, which is a $\sh$-formula
with free variables $\{ x, y \}$.
Let $\relb$ be a structure, and let
$h: \{ x, y \} \to B$ be an assignment;
$[\relb, \psi](h)$ is equal to $1$ or $0$
depending on whether or not $\relb, h \models E(x,y)$ (respectively).
Once one has a $\sh$-formula, it is possible to define
a further $\sh$-formula by \emph{projecting} free variables.
For example, $P \{ y \} \psi$ is a $\sh$-formula
with free variables $\{ x \}$.
When $g: \{ x \} \to B$ is an assignment,
the value $[\relb, P \{ y \} \psi](g)$ is the
number of extensions $h: \{ x, y \} \to B$ of $g$ such that
$\relb, h \models E(x,y)$.
In an analogous fashion, one may define
$\psi'$ to be the $\sh$-formula
$C(F(x,z), \{ x, z \})$;
then,
when $g: \{ x \} \to B$ is an assignment,
the value $[\relb, P \{ z \} \psi'](g)$ is the
number of extensions $h': \{ x, z \} \to B$ of $g$
such that 
$\relb, h' \models F(x, z)$.

Now, let $g: \{ x \} \to B$ be an assignment.
Observe that the number of extensions 
$h^+: \{ x, y, z \} \to B$ 
of $g$ such that $\relb, h^+ \models E(x,y) \wedge F(x,z)$
is equal to the product of $[\relb, P \{ y \} \psi](g)$
and $[\relb, P \{ z \}  \psi'](g)$.
A product connective $\times$ is provided by $\sh$-logic,
and said product is equal to
$[\relb, (P \{ y \} \psi) \times (P \{ z \} \psi')](g)$.
Finally, $\theta = P \{ x \} ((P \{ y \} \psi) \times (P \{ z \} \psi'))$
is a $\sh$-formula with no free variables;
letting $\emptyset$ denote the empty assignment,
$[\relb, \theta](\emptyset)$ will be equal to 
the sum, over all assignments $g: \{ x \} \to B$,
of $[\relb, (P \{ y \} \psi) \times (P \{ z \} \psi')](g)$,
which is equal to the number of assignments $f: \{ x, y, z \} \to B$
such that $\relb, f \models \phi$.
Phrased in terminology that will be defined precisely, the $\sh$-formula
$\theta$ \emph{represents} the fo-formula $\phi(x,y,z)$.
\end{example}

\licspods{
	\subsubsection{\bf $\sh$-logic, features}
}{
  \subsubsection{$\sh$-logic: features}
}

From the discussed viewpoint of bounded width as
an explanation for the tractability of model checking,
the relationship of the counting query answers problem
to $\sh$-logic is strongly analogous to
the relationship of model checking
to usual first-order logic.  The following parallel of
Observation~\ref{obs:mc-ptime} holds.
(Note that the \emph{width} of a formula in $\sh$-logic
will be defined in a natural way.)

\begin{observation}
\label{obs:count-ptime}
Suppose that $\Psi$ is a class of $\sh$-sentences
having bounded width.
Then, the computational problem of evaluating
a sentence $\psi \in \Psi$ on a finite structure
is polynomial-time computable.
(See Proposition~\ref{prop:bd-width-evaluation-in-sh-logic}
for a precise statement and further information.)
\end{observation}

As expressed, one purpose of $\sh$-logic is to allow for the translation of
a first-order formula $\phi$ to a $\sh$-sentence $\psi$
such that $\psi$ \emph{represents} $\phi$ in that, 
for any structure $\relb$,
the number of answers to $\phi$ on $\relb$
is equal to the quantity that results from evaluating $\psi$ on $\relb$.
The following is an immediate consequence of 
the previous observation, and a parallel of
Observation~\ref{obs:mc-fpt}.

\begin{observation}
\label{obs:count-fpt}
Suppose that $\Phi$ is a first-order formula class.
The following condition, which we will refer to as the \emph{counting condition},
is sufficient for the problem $\countp(\Phi)$ to be 
fixed-parameter tractable:
there exists a $\sh$-sentence class $\Psi$ having bounded width
and an algorithm $f$ that computes,
for each $\phi \in \Phi$, a $\sh$-sentence $f(\phi)$ 
that is in $\Psi$ and
that
represents~$\phi$.
\end{observation}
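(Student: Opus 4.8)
The plan is to obtain the statement by directly composing Observation~\ref{obs:count-ptime} with the fact, recalled above, that a problem is fixed-parameter tractable once an instance can be resolved in polynomial time after arbitrary preprocessing of the parameter~\cite[Theorem 1.37]{FlumGrohe06-parameterizedcomplexity}. In the present setting the ``preprocessing of the parameter'' is precisely the computation of $f(\phi)$, and the ``polynomial time'' phase is the evaluation of the resulting $\sh$-sentence on the structure.

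Concretely, I would argue as follows. Suppose $\Psi$, with width bound $k$, and $f$ are as in the counting condition. Given an instance of $\countp(\Phi)$ consisting of a formula $\phi \in \Phi$ (the parameter) together with a structure $\relb$, first compute $\psi := f(\phi)$; by hypothesis $\psi \in \Psi$ and $\psi$ represents $\phi$. Then apply the polynomial-time algorithm provided by Observation~\ref{obs:count-ptime} for the problem $\countp(\Psi)$ to the pair $(\psi, \relb)$, and output the value it returns. Correctness is immediate from the definition of representation: the integer obtained by evaluating $\psi$ on $\relb$ equals the number of answers to $\phi$ on $\relb$, which is exactly the required output.

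For the running time I would observe that computing $\psi = f(\phi)$ takes time bounded by a function of $\phi$ alone (no running-time bound is imposed on $f$), so in particular $|\psi|$ is bounded by a function of $\phi$; and that the call to the algorithm of Observation~\ref{obs:count-ptime} runs in time polynomial in $|\psi| + |\relb|$, hence in time $h(\phi)\cdot|\relb|^{O(1)}$ for a suitable function $h$. Summing the two phases gives a running time of the form required for fixed-parameter tractability.

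I expect no genuinely difficult step; the text itself describes the statement as an immediate consequence of Observation~\ref{obs:count-ptime}. The only point I would treat with a little care is the running-time accounting: one must use that the bound from Observation~\ref{obs:count-ptime} is polynomial in the \emph{whole} input (for each fixed bounded-width class), so that $|\psi|$ and the width constant $k$ can be absorbed into the parameter-dependent factor $h(\phi)$ rather than into the exponent of $|\relb|$.
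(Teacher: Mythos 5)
Your argument is correct and is exactly the paper's intended one: the observation is obtained by composing Observation~\ref{obs:count-ptime} with the fact that arbitrary preprocessing of the parameter followed by a polynomial-time evaluation yields fixed-parameter tractability (\cite[Theorem 1.37]{FlumGrohe06-parameterizedcomplexity}), which is precisely the paper's (unstated, "immediate consequence") proof. Your remark on absorbing the cost of $f(\phi)$ and the width-dependent constants into the parameter-dependent factor is the only point needing care, and you handle it correctly.
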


Of course, Observation~\ref{obs:count-fpt} is only of interest
if the \emph{counting condition}
possesses \emph{explanatory power},
that is, only if this condition
allows one to explain the fixed-parameter tractability of
problems $\countp(\Phi)$ having interest.
We in fact show that,
in the context of existential positive queries,
the \emph{counting condition} has maximal explanatory power:
\begin{theorem}
Let $\Phi$ be any class of existential positive queries having
bounded arity.
If
$\countp(\Phi)$
is tractable,
the counting condition applies to $\Phi$.
 (See Theorem~\ref{thm:passing-to-rep-under-tractability-condition} for a precise statement.)
\end{theorem}
That is,  the
counting condition is the 
\emph{exclusive} explanation for the tractability of
$\countp(\Phi)$ in this existential positive setting,
providing an analog to the result that
the classical condition is the
exclusive explanation for the tractability of
$\mc(\Phi)$ in the existential positive setting.
On a conceptual level, we view this result as strong evidence that,
for the problem of counting query answers, $\sh$-logic
is a useful, expressive model of computation in which relevant,
efficient algorithms can be presented.
This result is obtained as an immediate consequence of two theorems:
\begin{itemize}

\item We show that when such a problem 
$\countp(\Phi)$ is 
tractable, then there exists a bounded width class $\Psi$ 
of $\sh$-sentences such that each $\phi \in \Phi$ 
has a representation in $\Psi$
(Theorem~\ref{thm:shepequi}).

\item We prove that there is a \emph{minimization algorithm} that,
given an existential positive formula, computes a representation
of minimum width
(Theorem~\ref{thm:shepcompute}).

\end{itemize}

The latter theorem, which we view as a key contribution
in and of itself, can be read as demonstrating that
$\sh$-logic is well-characterized and well-understood
as a model of computation: conceiving of a $\sh$-sentence
representation of an existential positive formula
as a computational procedure for counting query answers,
this theorem provides a minimization algorithm that always outputs
an \emph{optimal} procedure for a given existential positive formula,
where optimality here is measured using~width.

In short, our presentation and study of $\sh$-logic
forwards the discussed use of logic as a means
for expressing computationally desirable procedures;
in particular, $\sh$-logic allows for the direct expression
of procedures for counting query answers.

\licspods{
\subsubsection{\bf Counting homomorphisms}
}{
	\subsubsection{Counting homomorphisms}
}

The problem of counting the number of homomorphisms
from a given \emph{source} structure $\rela$ 
to a given \emph{target} structure $\relb$ 
arises and has been studied in numerous contexts~\cite{GroheThurley11-counting-and-partition,Bulatov13-counting}.
This problem can be viewed as the special case of counting query
answers where the formula is a quantifier-free conjunction of atoms;
there is indeed a correspondence that allows one to pass
from a source structure $\rela$ to such a formula $\phi_{\rela}$
which
originates from the classical work~\cite{ChandraMerlin77-optimal}
(and which is explained in Section~\ref{sect:preliminaries}).
This problem is now well-known to be polynomial-time tractable
under a constant treewidth bound on the permitted
source structures; indeed, the corresponding algorithm,
which performs dynamic programming over a tree decomposition of $\rela$,
has received 
a textbook treatment~\cite[Section 5.3]{FominKratsch10-exact-exponential-algs}.

We discuss how, 
from a tree decomposition for a structure $\rela$,
one can compute, in polynomial time, a $\sh$-sentence $\psi$
that has
width at most the width of the given decomposition (plus one),
and
that represents $\rela$ in the sense that 
evaluating $\psi$ on an arbitrary structure $\relb$
always yields the number of homomorphisms from $\rela$ to $\relb$
(Example~\ref{ex:tree-decomposition-to-sh-formula}).
Combining this result with Observation~\ref{obs:count-ptime},
we obtain that the algorithm of this well-known polynomial-time
tractability result
can be expressed in $\sh$-logic (Proposition~\ref{prop:solving-tw-bounded-homom-counting}).
Indeed, we believe that the resulting $\sh$-sentences
accurately, faithfully, and cleanly describe the execution
of this algorithm.


\licspods{
\subsubsection{\bf A dual perspective on a classical theorem of Lov\'{a}sz}
}{
\subsubsection{A dual perspective on a classical theorem of Lov\'{a}sz}
}
Fix a relational signature $\tau$; in the following discussion,
all structures are finite and on $\tau$.
Let $\relb$ be a structure,
let $\str[\tau]$ denote the class of finite structures on $\tau$,
and let $L(\relb)$ be the vector from $\Q^{\str[\tau]}$
that maps a structure $\rela \in \str[\tau]$ 
to the number of homomorphisms
from $\rela$ to $\relb$.  A classical theorem of Lov\'{a}sz
states that, for any two structures $\relb$, $\relb'$,
it holds that $L(\relb) = L(\relb')$ iff
$\relb$ and $\relb'$ are isomorphic~\cite{Lovasz67-operations-structures}.

When one is concerned with homomorphisms from 
one structure $\rela$ to another structure $\relb$,
sometimes, the structure $\rela$ is referred to as the 
\emph{left-hand} structure and the structure $\relb$
is referred to as the \emph{right-hand} structure.
The vectors studied by Lov\'{a}sz indicate,
for a structure $\relb$, the number of homomorphisms
coming \emph{from} each possible left-hand side structure.
One can naturally formulate a dual vector, as follows.
For any structure $\rela$, define $R(\rela)$
to be the vector from $\Q^{\str[\tau]}$
that maps a structure $\relb \in \str[\tau]$
to the number of homomorphisms from $\rela$ to $\relb$.
That is, the vector $R(\rela)$ indicates,
for a structure $\rela$,
the number of homomorphisms \emph{to}
each possible right-hand side structure.

Our previous work~\cite[Theorem 5.4]{ChenMengel16-pods-ep-counting}
implied
a dual of Lov\'{a}sz's theorem, namely, that 
for any two structures $\rela$, $\rela'$,
it holds that $R(\rela) = R(\rela')$ iff
$\rela$ and $\rela'$ are isomorphic.
In the present work, we prove and use a 
natural generalization of this fact,
namely, that for any finite sequence 
$\rela_1, \ldots, \rela_k$ of pairwise non-isomorphic
structures, the vectors $R(\rela_i)$ are 
linearly independent (Theorem~\ref{thm:linearcombination}).
(We view these vectors as over the rational numbers $\Q$
so as to be able to properly discuss linearly independence.)
This linear independence theorem is used as a key tool
to establish the correctness of our minimization algorithm.
We believe that it should play an important role in future
studies of counting query answers, and that
the techniques and concepts that its proof requires
 may be of independent and future interest.
 Indeed, to prove this theorem,
 we extend notions and techniques from the work
 of Lov\'{a}sz~\cite{Lovasz67-operations-structures}; 
 for example, we introduce and crucially use a notion of multivariate 
 polynomial associated to a primitive positive formula.


\subsection{Discussion}

Logics with counting mechanisms have been considered
in finite model theory and descriptive complexity;
one well-known example is the counting logic studied
by Immerman and Lander~\cite{ImmermanLander90-describing-graphs}.
A typical motivation in these areas for studying such logics
is the desire to extend first-order logic 
in order to capture properties
not expressible in first-order logic.
This motivation contrasts somewhat with ours here;
our objective is to introduce logics that allow for
the relatively direct expression of useful algorithms
for the problem of counting query answers.
We believe that it could be of interest to try to understand
the relationship (if any) between existing counting logics
and the logics studied in the present work.

We wish to emphasize that, as regards our present motivations,
our logic\footnote{Precisely, 
the \emph{$\sh\ep$-formulas}, the
fragment of our logic that we focus on.}
trades off expressivity and computability properties 
in an extremely desirable fashion.
On the one hand, the algorithms
for the tractable cases of $\countp(\Phi)$
(where $\Phi$ is existential positive)
can be expressed in our logic, as described above;
on the other hand, our \emph{minimization algorithm}
described above evidences that the measure of \emph{width}
can be computably minimized in our logic, and is 
thus in a certain sense well-characterized.

Previous work established that
there is no algorithm for minimizing width
in positive first-order logic~\cite[Section 5]{BovaChen14-width-ep}.
As a consequence, there is no algorithm for minimizing
width in any logic 
that (1) includes positive first-order logic as a fragment
and (2) where a width minimization algorithm would imply 
a width minimization algorithm for positive first-order logic.
Thus, such a logic
would provably not exhibit the identified 
expressivity-computability
tradeoff that our logic enjoys.
To the best of our knowledge, our width minimization algorithm
is the first such algorithm for a logic with a 
form of counting mechanism; again, we view this as one contribution of this article.

We believe that our introduction of $\sh$-logic 
may open up further research directions.  
One particular question for future research that we may pose
is whether or not there are 
Ehrenfeucht-Fra\"{i}ss\'{e} style games for proving
inexpressibility in bounded width fragments of our logic.


\section{Preliminaries}
\label{sect:preliminaries}

\subsection{Logic}
We assume basic familiarity with the syntax and semantics of first-order logic.
In this article, we focus on relational first-order logic
where equality is not built-in to the logic.
Hence, each \emph{vocabulary/signature} under discussion consists
only of relation symbols.
We assume structures under discussion to be \emph{finite}
(that is, have finite universe);
nonetheless, 
we sometimes describe structures as \emph{finite} for emphasis.
We use the letters $\rela$, $\relb$, $\ldots$ to denote structures,
and the corresponding letters $A$, $B$, $\ldots$
to denote their respective universes.
When $\rela, \relb$ are structures over the same signature $\tau$,
a \emph{homomorphism} from $\rela$ to $\relb$
is a mapping $h: A \to B$ such that, for each $R \in \tau$
and each tuple $(a_1, \ldots, a_k) \in R^{\rela}$,
it holds that $(h(a_1), \ldots, h(a_k)) \in R^{\relb}$.

We use the term \emph{fo-formula} to refer to a first-order formula.
An \emph{ep-formula} (short for \emph{existential positive formula}) 
is a fo-formula built from
\emph{atoms} 
(by which we refer to predicate applications of the form $R(v_1, \ldots, v_k)$, where $R$ is a relation symbol and the $v_i$ are variables),
conjunction ($\wedge$), disjunction ($\vee$),
and existential quantification ($\exists$).
A \emph{pp-formula}
(short for \emph{primitive positive formula}) 
is an ep-formula where disjunction does not occur.
An fo-formula is \emph{prenex} if it has the form
$Q_1 v_1 \ldots Q_n v_n \theta$ where $\theta$ is quantifier-free,
that is, if all quantifiers occur in the front of the formula.
The set of free variables of a formula $\phi$ is denoted by
$\free(\phi)$ and is defined as usual;
a formula $\phi$ is a \emph{sentence} if $\free(\phi) = \emptyset$.
We define an \emph{ep-sentence} to be an ep-formula that is
a sentence, and define \emph{fo-sentence} and \emph{pp-sentence} similarly.

We now present some definitions and conventions that
are not totally standard.
A primary concern in this article 
is in counting satisfying assignments of fo-formulas
on a finite structure.
The count is sensitive to the set of variables over which
assignments are considered; and, we will 
\emph{sometimes} want to count
relative to a set of variables that is strictly larger than
the set of free variables.
Hence, we will often associate with each fo-formula $\phi$
a set $V$ of variables called the \emph{liberal variables},
denoted by $\lib(\phi)$, for which it is required that
$\lib(\phi) \supseteq \free(\phi)$.
We generally assume that the variables in 
$\lib(\phi) \setminus \free(\phi)$ are not used in $\phi$.
To indicate that $V$ is the set of liberal variables of $\phi$,
we often use the notation $\phi(V)$;
we also use $\phi(v_1, \ldots, v_n)$, where the $v_i$ 
are a listing of the liberal variables.
Relative to a formula $\phi(V)$, when~$\relb$ is a structure,
we will use $\phi(\relb)$ to denote the set of
assignments $f: V \to B$ such that $\relb, f \models \phi$.
We call an fo-formula $\phi$ \emph{free} if $\free(\phi) \neq \emptyset$,
and
\emph{liberal} if 
$\lib(\phi)$ is defined and $\lib(\phi) \neq \emptyset$.

\begin{example}
Consider the formula 
$$\phi(x,y,z) = E(x,y) \vee F(y,z).$$
Define $\psi(x,y,z) = E(x,y)$ and $\psi'(x,y,z) = F(y,z)$.
The notation is intended to indicate that
$$\lib(\phi) = \lib(\psi) = \lib(\psi') = \{ x, y, z \}.$$
In the context of studying $\phi$, it is natural to define
$\lib(\psi)$ and $\lib(\psi')$ to be $\{ x, y, z \}$;
under these definitions, 
it holds that $\phi(\relb) = \psi(\relb) \cup \psi'(\relb)$,
but in general this would not hold in the case that
(say) $\lib(\psi)$ was defined as $\{ x, y \}$
(which set is equal to $\free(\psi)$).
\end{example}

\subsection{pp-formulas}
It is well-known~\cite{ChandraMerlin77-optimal}
that there is a correspondence between prenex pp-formulas
and relational structures.
In particular, 
each prenex pp-formula $\phi(S)$ (on signature $\tau$) 
with $\lib(\phi) = S$
may be viewed as a
pair $(\rela, S)$ 
consisting of a structure $\rela$ (on $\tau$) 
and a set $S$;
the universe $A$ of $\rela$ is
the union of $S$ with the variables appearing in $\phi$,
and the following condition 
defines the relations of $\rela$: for each $R \in \tau$,
a tuple $(a_1, \ldots, a_k) \in A^k$
is in $R^{\rela}$ 
if and only if $R(a_1, \ldots, a_k)$ appears in $\phi$.
In the other direction, 
such a pair $(\rela, S)$ can be viewed as a prenex pp-formula $\phi(S)$
where all variables in $A \setminus S$ are quantified
and the atoms of $\phi$ are defined according to the above condition.
A basic known fact~\cite{ChandraMerlin77-optimal}
that we will use
is that
when $\phi(S)$ is a pp-formula corresponding to the pair $(\rela, S)$,
$\relb$ is an arbitrary structure, and $f: S \to B$ is an arbitrary map,
it holds that 
$\relb, f \models \phi(S)$ if and only if
there is an extension $f'$ of $f$ that is a homomorphism
from~$\rela$ to $\relb$.
\emph{We will freely interchange between the
 structure view
and the usual notion
of a prenex pp-formula.}
For a prenex pp-formula specified as a pair $(\rela,S)$, 
we typically assume that $S \subseteq A$.

\begin{example}
\label{ex:pp-views}
Consider the pp-formula that is given as
$\phi(u,v,w,x) = \exists y( E(u,v) \wedge F(w,y))$.
To convert $\phi$ to a structure $\rela$,
we take the universe $A$ of $\rela$ to be the union of
$\lib(\phi)$ with the variables appearing in $\phi$,
so $A = \{ u, v, w, x, y \}$.
The relations of $\rela$ are as defined above,
so $E^{\rela} = \{ (u,v) \}$ and $F^{\rela} = \{ (w,y) \}$.
The resulting pair representation of $\phi$
is $(\rela, \{ u, v, w, x \})$.
\end{example}

\subsection{Graphs}
Throughout the paper, all graphs under discussion 
should be assumed to be undirected by default.

To every prenex pp-formula $(\rela, S)$
we assign a graph 
whose vertex set is $A \cup S$ 
and where two vertices are connected by an edge if they appear together in a tuple of a relation of $\rela$.
A prenex pp-formula $(\rela, S)$ is called \emph{connected} if its graph is connected. 
A prenex pp-formula $(\rela', S')$ is a
\emph{component} of a prenex pp-formula $(\rela, S)$
over the same signature $\tau$
if there exists a set $C$ that forms a connected component
of the graph of $(\rela, S)$, where
$S' = S \cap C$, and
for each relation $R \in \tau$, 
a tuple $(a_1, \ldots, a_k)$ is in $R^{\rela'}$
if and only if 
$(a_1, \ldots, a_k) \in R^{\rela} \cap C^k$.



Note that when this holds, the graph of $(\rela', S')$
is the connected component of the graph of $(\rela, S)$ on vertices $C$.
We will use the fact that, if $\phi(V)$ is a prenex pp-formula
and $\phi_1(V_1), \ldots, \phi_k(V_k)$ is a list of its components,
then for any finite structure $\relb$,
it holds that 
$|\phi(\relb)| = \prod_{i=1}^k |\phi_i(\relb)|$.

\begin{example}
Consider the formula $\phi$ from
Example~\ref{ex:pp-views}.
The connected components of the graph of $\phi$ 
can be readily verified to be
$\{ u, v \}$,  $\{ w, y \}$, and
$\{ x \}$.
Hence, the pp-formula $\phi$ has 3 components,
which can be readily verified to be
$\phi_1(u,v) = E(u,v)$, 
$\phi_2(w) = \exists y F(w,y)$, and
$\phi_3(x) = \top$;
here, $\top$ denotes the empty conjunction.
\end{example}

\subsection{Treewidth}
We give some basic facts about tree decompositions
and treewidth; see for example~\cite{FlumGrohe06-parameterizedcomplexity} for more details.

A tree decomposition of a graph $G=(V(G),E(G))$ is a pair 
$(T, (B_t)_{t\in V(T)})$ where $T$ is a tree and $(B_t)_{t\in V(T)}$ is a family of subsets of $V(G)$ such that (1)
for every $v\in V(G)$, the set $\{t\in V(T)\mid v\in B_t\}$ is non-empty and connected in $T$, and
(2)
for every edge $uv\in E(G)$, there is a $t\in V(T)$ such that $u,v\in B_t$.
We also denote $B_t$ using the notation $B(t)$.
The \emph{width} of a tree decomposition $(T, (B_t)_{t\in V(T)})$ is 
defined as 
$\max\{|B_t|: t\in V(T)\}-1$. 
The treewidth $\tw(G)$ of $G$ is the minimum width over all the tree decompositions of $G$. 
Computing tree decompositions of minimal width is fixed-parameter tractable parameterized by the 
treewidth~\cite{Bodlaender96}. 

\newcommand{\nspace}{\vspace{4pt}}

A tree decomposition 
is called \emph{nice} if 
its tree $T$ is rooted and
every $t\in V(T)$ is of one of the following types:
\begin{itemize}
 \item (leaf) $t$ has no children and $|B(t)|=1$.

\nspace

 \item (introduce) $t$ has one child $t'$ and $B(t)= B(t') \cup \{v\}$ for a vertex $v\in V\setminus B(t')$.
 
\nspace

 \item (forget) $t$ has one child $t'$ and $B(t)= B(t') \setminus \{v\}$ for a vertex $v\in B(t')$.
 
\nspace

 \item (join) $t$ has two children $t_1, t_2$ with $B(t) = 
 B(t_1)= B(t_2)$.

\end{itemize}
It is well-known that a 
width $k$ tree decomposition  of $G$ can be converted to a 
width $k$ nice tree decomposition, in polynomial time.





%

\section{$\sh$-logic}

In this section, we present the syntax and semantics of $\sh$-logic,
as well as some associated terminology.
Syntactically, $\sh$-logic consists of \emph{$\sh$-formulas};
each $\sh$-formula $\phi$ 
has an associated set of free variables,
denoted by $\free(\phi)$,
as well as an associated set of \emph{closed} variables,
denoted by $\closed(\phi)$.
(At this point, the reader may wish to recall the
preview of $\sh$-logic presented in 
Section~\ref{subsubsect:sh-logic-preview}.)

\subsection{Syntax}

We define $\sh$-formulas inductively, as follows.

\newcommand{\itemspace}{\vspace{6pt}}
\begin{itemize}

\item 
\licspods{
$C(\phi, L)$ is a $\sh$-formula 
if $\phi$ is a fo-formula \\and $L \supseteq \free(\phi)$.	
}{
$C(\phi, L)$ is a $\sh$-formula 
if $\phi$ is a fo-formula \\and $L \supseteq \free(\phi)$.
}

\licspods{
Define
 $\free(C(\phi,L)) = L$ and $\closed(C(\phi,L)) = \emptyset$.
}{
Define
 $\free(C(\phi,L)) = L$\\ and $\closed(C(\phi,L)) = \emptyset$.
}

\itemspace

\item 
$PV \phi$ is a $\sh$-formula
if $\phi$ is a $\sh$-formula and $V$ is
a set of variables with 
$V \cap \closed(\phi) = \emptyset$.

Define $\free(PV \phi) = \free(\phi) \setminus V$\\ 
and $\closed(PV \phi) = V \cup \closed(\phi)$.

\itemspace

\item 
$EV \phi$ is a $\sh$-formula
if $\phi$ is a $\sh$-formula
and $V$ is a set of variables
with $V \cap (\free(\phi) \cup \closed(\phi)) = \emptyset$.

Define $\free(EV \phi) = V \cup \free(\phi)$\\
and $\closed(EV \phi) = \closed(\phi)$.

\itemspace

\item 
$\phi \times \phi'$ is a $\sh$-formula
if $\phi$ and $\phi'$ are $\sh$-formulas
with $\free(\phi) = \free(\phi')$
and $\closed(\phi) \cap \closed(\phi') = \emptyset$.

Define $\free(\phi \times \phi') = \free(\phi)$\\
and $\closed(\phi \times \phi') = \closed(\phi) \cup \closed(\phi')$.

\itemspace

\item 
$\phi + \phi'$ is a $\sh$-formula
if $\phi$ and $\phi'$ are $\sh$-formulas
with $\free(\phi) = \free(\phi')$.

Define $\free(\phi + \phi') = \free(\phi)$\\
and $\closed(\phi + \phi') = \closed(\phi) \cup \closed(\phi')$.

\itemspace

\item $n$ is a $\sh$-formula if $n \in \Z$.

\licspods{
Define $\free(n) = \emptyset$
and $\closed(n) = \emptyset$.	
}{
Define $\free(n) = \emptyset$\\
and $\closed(n) = \emptyset$.
}
\end{itemize}

A formula $C(\phi,L)$ can be thought of as the casting
of a fo-formula $\phi$ into a $\sh$-formula;
the $P$ quantifier can be thought of as projecting or
closing variables; and the $E$ quantifier can be thought of
as expanding the set of free variables.
The connectives $\times$ and $+$ perform the usual arithmetic
operations.
We remark that, for each $\sh$-formula $\phi$,
it holds that $\free(\phi)$ and $\closed(\phi)$ are disjoint;
this is straightforwardly verified by induction.

Let $\psi$ be a $\sh$-formula.
We say that a $\sh$-formula $\theta$ is
a \emph{$\sh$-subformula} of $\psi$ 
if $\theta$ is used in the inductive formation of $\psi$.
We say that a fo-formula $\theta$ is
a \emph{fo-subformula} of $\psi$ 
if $\psi$ contains a subformula $C(\phi,L)$
where $\theta$ is a subformula of $\phi$.

A \emph{subformula} of $\psi$ is a $\sh$-subformula or fo-subformula
of $\psi$.
We define $\width(\psi)$ to be the maximum of $|\free(\theta)|$
over all subformulas $\theta$ of $\psi$,
and $\shwidth(\psi)$ to be the maximum of $|\free(\theta)|$
over all $\sh$-subformulas $\theta$ of $\psi$.
We say that $\psi$ is a \emph{$\sh$-sentence} if $\free(\psi) = \emptyset$.

We define a \emph{$\sh\PP$-formula} to be a $\sh$-formula
where, in each $\sh$-subformula of the form $C(\phi,L)$,
$\phi$ is a pp-formula; the notion of 
\emph{$\sh\EP$-formula} is defined analogously.
We define a \emph{$\sh\PP$-sentence} to be a $\sh\PP$-formula
that is a $\sh$-sentence, and we define
a \emph{$\sh\EP$-sentence} similarly.

\subsection{Semantics}

We define the semantics of our logic.
For each structure $\relb$, each $\sh$-formula $\psi$ on the 
vocabulary of $\relb$, and 
each assignment $h: \free(\psi) \to B$,
we define $[\relb,\psi](h)$ recursively, as follows.



\begin{itemize}

\item When $C(\phi,L)$ is a $\sh$-formula,\\
$[\relb,C(\phi,L)](h) = 1$ if $\relb, h \models \phi$;\\
$[\relb,C(\phi,L)](h) = 0$ otherwise.


\itemspace

\item When $PV \phi$ is a $\sh$-formula,\\
$[\relb,PV \phi](h) = \sum_{h'} [\relb,\phi](h')$,\\
where the sum is over all extensions \\
$h': \free(\phi) \cup V \to B$
of $h$.

\itemspace

\item When $EV \phi$ is a $\sh$-formula,\\
$[\relb, EV \phi](h) = [\relb, \phi](h \res \free(\phi))$.

\itemspace

\item When $\phi \times \phi'$ is a $\sh$-formula,\\
$[\relb, \phi \times \phi'](h) = [\relb,\phi](h) \cdot [\relb,\phi'](h)$.

\itemspace

\item When $\phi + \phi'$ is a $\sh$-formula,\\
$[\relb, \phi + \phi'](h) = [\relb,\phi](h) + [\relb,\phi'](h)$.

\itemspace

\item 
\licspods{
When $n$ is a $\sh$-formula,
$[\relb, n](h) = n$.	
}{
When $n$ is a $\sh$-formula,\\
$[\relb, n](h) = n$.
}



\end{itemize}
We consider two $\sh$-formulas $\phi$, $\phi'$ with
$\free(\phi) = \free(\phi')$ to be \emph{logically equivalent} if
for each structure $\relb$, it holds that
$[\relb, \phi] = [\relb, \phi']$.
A $\sh$-sentence $\psi$ \emph{represents} or is a \emph{representation} of
a fo-formula $\phi(V)$ if
for each finite structure $\relb$,
it holds that $|\phi(\relb)| = [\relb, \psi](\emptyset)$,
where $\emptyset$ is the empty assignment.
For simplicity, when $\psi$ is a $\sh$-sentence,
we will typically write
$[\relb,\psi]$ in place of
$[\relb, \psi](\emptyset)$.
We will use the term \emph{$\sh\pp$-representation} to refer to
a representation that is a $\sh\pp$-formula,
and define \emph{$\sh\ep$-representation} similarly.

We make the basic observation that each fo-formula
has a representation.

\begin{prop}
\label{prop:simple-representation}
For each fo-formula $\phi(V)$, 
the $\sh$-sentence $PV C(\phi,V)$
is a representation of $\phi(V)$.
\end{prop}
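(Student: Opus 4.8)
The plan is to unwind the definitions of the two constructors involved, $C$ and $P$, and check that evaluating $PV\, C(\phi,V)$ on an arbitrary finite structure $\relb$ yields exactly $|\phi(\relb)|$. First I would recall that since $\lib(\phi)=V$ we have $\free(\phi)\subseteq V$ by the standing convention on liberal variables, so $C(\phi,V)$ is a well-formed $\sh$-formula with $\free(C(\phi,V))=V$ and $\closed(C(\phi,V))=\emptyset$; hence $PV\, C(\phi,V)$ is well-formed as well, with $\free(PV\, C(\phi,V))=V\setminus V=\emptyset$, confirming that it is indeed a $\sh$-sentence.

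Next I would compute its value on a structure $\relb$. By the semantics of $P$, writing $\epsilon$ for the empty assignment,
\[
[\relb, PV\, C(\phi,V)](\epsilon) \;=\; \sum_{h'} [\relb, C(\phi,V)](h'),
\]
where the sum ranges over all extensions $h'\colon \free(C(\phi,V))\cup V \to B$ of $\epsilon$; since $\free(C(\phi,V))\cup V = V$ and the only extension of the empty assignment is the map itself, this sum ranges over all assignments $h'\colon V\to B$. By the semantics of the casting constructor, $[\relb, C(\phi,V)](h')$ equals $1$ if $\relb,h'\models\phi$ and $0$ otherwise, so the sum counts exactly the assignments $h'\colon V\to B$ with $\relb,h'\models\phi$. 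By the definition of $\phi(\relb)$ for a formula $\phi(V)$ — the set of assignments $f\colon V\to B$ with $\relb,f\models\phi$ — this sum is precisely $|\phi(\relb)|$. Since $\relb$ was arbitrary, $PV\, C(\phi,V)$ represents $\phi(V)$ by definition.

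There is no real obstacle here; the only point requiring a moment's care is the bookkeeping of which variable set the $P$ quantifier sums over, namely $\free(\phi')\cup V$ rather than $V$ alone, but since the inner formula already has free-variable set exactly $V$ this collapses to summing over all of $V$, as needed.
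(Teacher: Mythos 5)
Your proof is correct and is exactly the routine unwinding of the semantics of $C$ and $P$ that the paper leaves implicit (the proposition is stated there as a basic observation without proof). The well-formedness check and the observation that the $P$ quantifier sums over $\free(C(\phi,V))\cup V = V$ are precisely the points that make the argument go through.
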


We also observe that, when a constant width bound is assumed,
evaluation of $\sh$-sentences can be performed in polynomial time.

\begin{prop}
\label{prop:bd-width-evaluation-in-sh-logic}
For each $k \geq 1$, 
there exists a polynomial-time algorithm that,
given a finite structure $\relb$ and 
a $\sh$-sentence $\psi$ having $\width(\psi) \leq k$,
computes $[\relb, \psi]$.
\end{prop}

\begin{proof}
For each subformula $\theta$ of $\psi$ and each
mapping $h: \free(\theta) \to B$,
the algorithm computes $[\relb,\theta](h)$ in the case 
that $\theta$ is a $\sh$-subformula,
and determines whether or not $\relb, h \models \theta$
in the case that $\theta$ is a fo-subformula.
This computation is performed inductively over the subformulas
of $\psi$, and in the case that $\theta$ is a $\sh$-subformula,
the just-given semantics are used.
\end{proof}

\subsection{Examples}

\begin{example}
Let us define 
$\phi(x_0,x_1,x_2,y_0,y_1,y_2)$
as the formula $\phi_0 \wedge \phi_1 \wedge \phi_2$,
where 
$\phi_i = \exists z_i T_i(x_i, x_{i+1}, y_i, z_i)$
for each $i \in \{ 0,1,2 \}$;
here, the quantity $i+1$ appearing in $x_{i+1}$ is computed modulo $3$.
These formulas are over the vocabulary $\{ T_0, T_1, T_2 \}$
having three relation symbols, each of arity $4$.

Define $\psi_i = P \{ y_i \} C(\phi_i, \{ x_0, x_1, x_2, y_i \})$
for each $i \in \{ 0,1,2 \}$.
Observe that when $\relb$ is a structure and $h: \{ x_0, x_1, x_2 \} \to B$
is a map,
$[\relb, \psi_i](h)$ gives the number of extensions
$h': \{x_0,x_1,x_2,y_i\} \to B$ of $h$ satisfying $\phi_i$ on $\relb$.
We have $\free(\psi_0) = \free(\psi_1) = \free(\psi_2) = \{x_0,x_1,x_2\}$
and $\closed(\psi_i) = \{y_i\}$.
Now consider $\psi = (\psi_0 \times \psi_1) \times \psi_2$.
It can be verified that,
for a structure $\relb$ and a map $h: \{ x_0, x_1, x_2 \} \to B$,
$[\relb, \psi](h)$ gives the number of extensions
$h': \{ x_0, x_1, x_2, y_0, y_1, y_2 \} \to B$ of $h$
satisfying $\phi$ on $\relb$.
It follows that the $\sh$-sentence $\theta = P\{x_0,x_1,x_2\} \psi$
is a representation of $\phi$.
Since 
$$3 = |\free(\psi)| 
= |\free(\psi_0)| 
= |\free(\psi_1)| 
= |\free(\psi_2)|,$$
we obtain that
the representation $\theta$
has width equal to 
$$\max(3,\width(\psi_0),\width(\psi_1),\width(\psi_2))=4.$$

As a further remark, consider, as an example, the subformula
$C(\phi_0,\{x_0,x_1,x_2,y_0\})$ of $\psi_0$.
It holds that $x_2 \notin \free(\phi_0)$,
and so
$E\{x_2\} C(\phi_0,\{x_0,x_1,y_0\})$ 
is a $\sh$-formula and is logically equivalent to
$C(\phi_0,\{x_0,x_1,x_2,y_0\})$.
\end{example}


\begin{example}
\label{ex:tree-decomposition-to-sh-formula}
Consider a prenex pp-formula $\phi$ without quantifiers
whose structure view has the form $(\rela, A)$.
Suppose that $(T, (B(t)))$ is a width $k$ tree decomposition
of the graph of $(\rela, A)$;
without loss of generality, we may assume that this tree decomposition
is nice, and that the root node $r$ of $T$ has $B(r) = \emptyset$.
We explain how to give a representation of $\phi$
having width $\leq k + 1$.

Say that a node $u$ of $T$ is \emph{below} a node $t$ of $T$
if $t$ occurs on the unique simple path from $u$ to the root of $T$
(this is understood to hold in particular when $u = t$).
When $t$ is a node of $T$, define $B(\leq t)$
to be the union of $B(u)$ over all nodes $u$ that are below $t$.

\newcommand{\tspace}{\vspace{6pt}}

We show that, for each node $t$ of $T$,
there exists a $\sh$-formula $\psi_t$ such that:
\begin{itemize}
\item $\free(\psi_t) = B(t)$,

\tspace

\item $\closed(\psi_t) = B(\leq t) \setminus B(t)$, and

\tspace

\item for any structure $\reld$,
the value $[\reld, \psi_t](h)$ is equal to the number
of extensions $h': B(\leq t) \to D$ of $h$ such that
$(\reld, h')$ satisfies 
each atom $R(a_1, \ldots, a_k)$ of $\phi$ 
whose variables
all fall into a bag $B(u)$, with $u$ below $t$.

\tspace

\end{itemize}
We give a $\sh$-formula depending on the type of the node $t$;
we use the notation from the definition of nice tree decomposition.
\begin{itemize}

\item (introduce)\\
$\psi_t = (Ev \psi_{t'}) \times C(\alpha_1, B(t)) \times \cdots \times C(\alpha_m, B(t))$

where the $\alpha_i$ are the atoms of $\phi$
whose variables fall into $B(t)$.
Note that the presence of the $C(\alpha_i, B(t))$
ensures that $[\reld, \psi_t](h) = 0$ if $(\reld, h)$ does not satisfy
all of the $\alpha_i$.

\tspace

\item (leaf) $\psi_t$ is defined as in the previous case,
except $(Ev \psi_{t'})$ is omitted from the product.

\tspace

\item (forget) $\psi_t = P \{ v \} \psi_{t'}$.

\tspace

\item (join) $\psi_t = \psi_{t_1} \times \psi_{t_2}$.
In this case,
$\psi_{t_1} \times \psi_{t_2}$ 
is a $\sh$-formula since
$\free(\psi_{t_1}) = B(t_1) = B(t_2) = \free(\psi_{t_2})$,
and, by the definition of tree decomposition,
it holds that the sets
$B(\leq t_1) \setminus B(t_1) = \closed(\psi_{t_1})$ and
$B(\leq t_2) \setminus B(t_2) = \closed(\psi_{t_2})$ are disjoint.

\tspace

\end{itemize}
The desired representation is $\psi_r$.
The claim on the width of $\psi_r$ holds, for we have the following:
each
$\sh$-subformula $\psi'$ of $\psi_r$
has $\free(\psi') = B(w)$ for a node $w$ of $T$,
and each fo-subformula of $\psi_r$ is 
either $\top$ or an atom
whose variables fall into a bag $B(t)$.
\end{example}

From the discussion in Example~\ref{ex:tree-decomposition-to-sh-formula},
we obtain the following.

\begin{prop}
\label{prop:solving-tw-bounded-homom-counting}
Let $k \geq 1$.
Consider the problem of computing, given a pair $(\rela,\relb)$
of relational structures (over the same signature)
where $\rela$ has treewidth $\leq k$,
the number of homomorphisms from $\rela$ to $\relb$.
This problem can be solved by the polynomial-time algorithm
that computes a nice, width $k$ tree decomposition
(using a known polynomial-time algorithm for this task);
computes, from $\rela$, the representation $\psi_r$
given by Example~\ref{ex:tree-decomposition-to-sh-formula};
and, invokes the algorithm of
Proposition~\ref{prop:bd-width-evaluation-in-sh-logic}
to compute $[\relb,\psi_r]$.
\end{prop}

\section{Main theorems}


\subsection{Statements}

The treewidth of a prenex pp-formula $\phi$, 
denoted here by $\tw(\phi)$, is defined as the treewidth of the graph of $\phi$.
The following notions are adapted from~\cite{ChenMengel15-pp-icdt,ChenMengel16-pods-ep-counting}.
Let $(\rela, S)$ be a prenex pp-formula with graph $G$.
An \emph{$\exists$-component} of $(\rela, S)$
is a graph of the form $G[W']$ where there exists
$W \subseteq A$ that is a connected component of $G[A \setminus S]$
and $W'$ is the union of $W$ with all vertices in $S$ having
an edge to $W$.  Define $\contract(\rela, S)$ to be the
graph on vertices $S$ obtained by starting from $G[S]$
and adding an edge between any two vertices that appear together
in an $\exists$-component of $(\rela, S)$.\footnote{
  Note that in previous articles~\cite{ChenMengel15-pp-icdt,ChenMengel16-pods-ep-counting}
  this graph was defined in terms of the 
  core of the given formula.
}
\begin{example}
Let $\phi(u_1, u_2, u_3, u_4)$ be the pp-formula
\begin{center}
$\exists x_1 \exists x_2 \exists x_3
(R(u_1, x_1, x_2) \wedge
S(u_2, x_2) \wedge
T(u_2, x_3) \wedge 
U(u_3, x_3) \wedge
P(u_3, u_4))$.
\end{center}
The graph $G$ of $\phi$
has vertex set 
\begin{center}
$V = \{ u_1, u_2, u_3, u_4, x_1, x_2, x_3 \}$
\end{center}
and edge set
\begin{center}
$E = \{ 
\{ u_1, x_1 \}, \{ u_1, x_2 \}, \{ x_1, x_2 \},
\{ u_2, x_2 \}, $

$ \{ u_2, x_3 \}, \{ u_3, x_3 \},
\{ u_3, u_4 \}
\}$.
\end{center}
Note that, if we were to view $\phi$ as a pair
$(\rela, S)$, the structure $\rela$ would have universe $V$
and we would have $S = \lib(\phi) = \{ u_1, u_2, u_3, u_4 \}$.
There are two connected components of $G[A \setminus S]$,
namely,
$W_1 = \{ x_1, x_2 \}$ and $W_2 = \{ x_3 \}$.
(Note that, in contrast, $G$ itself is connected.)
The $\exists$-components of $\phi$ are thus
$W'_1 = W_1 \cup \{ u_1, u_2 \}$ and
$W'_2 = W_2 \cup \{ u_2, u_3 \}$.
The graph $\contract(\phi)$ is the graph on vertices 
$S = \{ u_1, u_2, u_3, u_4 \}$ obtained by starting
from $G[S]$, which has the single edge $\{ u_3, u_4 \}$,
and adding the edges $\{ u_1, u_2 \}$ and $\{ u_2, u_3 \}$.
\end{example}
A \emph{core} of $(\rela, S)$ is a prenex pp-formula $(c(\rela),S)$
where $c$ is an $\rela$-endomorphism 
fixing each $s \in S$ 
that has minimum image size.
By $c(\rela)$, we mean the structure with universe $c(A)$
and where $R^{c(\rela)} = \{ c(t) ~|~ t \in R^{\rela} \}$;
here, $c(t)$ denotes the tuple 
obtained by applying $c$ to each entry of $t$.
It is known that any core of a prenex pp-formula $\phi$
is logically equivalent to $\phi$~\cite{ChandraMerlin77-optimal}.
We speak of \emph{the core} of a pp-formula, as it
is unique up to isomorphism; this follows from the
basic theory of cores~\cite{HellNesetril92-core}.

A class $\Phi$ of prenex pp-formulas satisfies
the \emph{contraction condition} 
if the class containing each graph $\contract(\psi)$,
where $\psi$ is the core of a formula in $\Phi$,
has bounded treewidth;
$\Phi$ satisfies the
\emph{tractability condition} if it satisfies the contraction
condition and the cores of formulas in $\Phi$ have bounded treewidth.
Previous work showed that,
for bounded arity $\Phi$,
 $\countp(\Phi)$ is fixed-parameter
tractable if $\Phi$ satisfies the tractability condition;
interreducible with the parameterized clique problem
if $\Phi$ satisfies the contraction condition but not the tractability condition; and as hard as the parameterized counting clique problem
otherwise (see~\cite{ChenMengel15-pp-icdt} for a precise statement).


We first study representations of pp-formulas, obtaining
the following theorems.

\begin{theorem}\label{thm:shppequi}
Let $\Phi$ be a class of prenex pp-formulas.
\begin{itemize}

\item 
The class $\Phi$ satisfies the tractability condition
if and only if 
there exists $k \geq 1$ such that
each formula in $\Phi$ has a 
$\sh \pp$-representation $\phi'$  such that $\width(\phi') \leq k$.

\item
The class $\Phi$ satisfies the contraction condition
if and only if
there exists $k \geq 1$ such that
each formula in $\Phi$ has a
$\sh\pp$-representation $\phi'$ such that $\shwidth(\phi') \leq k$.
\end{itemize}

\end{theorem}

\begin{theorem}\label{thm:shppcompute}
There exists an algorithm that,
given a prenex pp-formula $\phi$, 
outputs a $\sh\pp$-representation $\psi$ of $\phi$ of minimum width.
\end{theorem}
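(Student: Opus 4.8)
The plan is to combine Theorem~\ref{thm:shppequi} with a bound that makes the search space finite. The first observation is that the \emph{minimum} width of a $\sh\pp$-representation of a fixed $\phi$ is a well-defined finite number $w = w(\phi)$: by the proposition that $PV C(\phi,V)$ represents $\phi(V)$, a representation of width $|\lib(\phi)|$ always exists. Thus it suffices to enumerate, for $k = 1, 2, 3, \ldots$, all $\sh\pp$-formulas of width at most $k$ that are candidates for representing $\phi$, test each for being a representation, and halt at the first $k$ for which a valid one is found. For this to be an algorithm I need (i) for each $k$ there are only finitely many relevant candidate $\sh\pp$-formulas to consider, and (ii) the test ``$\psi$ represents $\phi$'' is decidable.

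For (i), the key point is that a $\sh\pp$-representation of $\phi$ can be assumed to use only the relation symbols occurring in $\phi$ and only variables drawn from a bounded pool: any $\sh$-subformula $C(\rho, L)$ appearing in a width-$k$ representation has $|L| \leq k$, and by renaming we may assume the variables come from a fixed set of size $k$; the pp-formula $\rho$ itself, viewed as a pair $(\rela_\rho, L)$, has universe bounded in terms of the number of quantified variables. Here I would argue that without loss of generality each such $\rho$ is a \emph{core} relative to its liberal variables, and moreover that the quantifier-free part can be taken over a universe of size bounded by a function of $k$ and $\phi$'s signature — essentially because enlarging the quantified part beyond what a core requires cannot decrease width and can be collapsed. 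Similarly the arithmetic constants $n \in \Z$ that appear can be bounded (they only need to range over values realizable as answer counts, or one can show a minimum-width representation of a pp-formula can avoid nontrivial constants, or keep them bounded by the size of $\phi$). Combined with the fact that the nesting depth of a minimum-width $\sh\pp$-formula representing $\phi$ need not exceed a bound depending only on $k$ and the number of distinct $C(\rho,L)$-leaves — one can prune: repeated identical subformulas under $+$ can be replaced by multiplication by a constant, and $P$, $E$ can be normalized — this yields, for each $k$, a finite and effectively computable list of candidate $\sh\pp$-sentences to check.

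For (ii), deciding whether a given $\sh\pp$-sentence $\psi$ represents $\phi(V)$ reduces to checking a polynomial identity: both $|\phi(\relb)|$ and $[\relb,\psi]$ are, when $\phi$ is pp and $\psi$ is $\sh\pp$, polynomial-like functions of the structure $\relb$ in a suitable sense, and the dual-Lov\'asz machinery (Theorem~\ref{thm:linearcombination}, the linear independence of the vectors $R(\rela_i)$) guarantees that agreement on sufficiently many structures forces logical equality. Concretely, I would observe that $\relb \mapsto |\phi(\relb)|$ and $\relb \mapsto [\relb,\psi]$ are each finite $\N$-linear combinations of homomorphism-count functions $\relb \mapsto \hom(\rela_i, \relb)$ over structures $\rela_i$ of bounded size (for $|\phi(\relb)|$ this is classical inclusion–exclusion over the pp-formula; for $[\relb,\psi]$ one unfolds the $C$, $P$, $E$, $\times$, $+$ operations inductively, each step preserving this form with a computable size bound), and then Theorem~\ref{thm:linearcombination} says these two combinations are equal as functions iff their coefficient vectors, after reduction to non-isomorphic representatives, coincide — a decidable condition. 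Alternatively, and more simply, it suffices to test equality of $|\phi(\relb)|$ and $[\relb,\psi]$ on all structures $\relb$ up to a computable size bound, which again follows from the linear-independence theorem since the relevant $\rela_i$ have bounded size.

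The main obstacle I anticipate is (i) — proving that for each target width $k$ there is an \emph{a priori computable} bound on the size of a minimum-width $\sh\pp$-representation, so that the enumeration is genuinely finite. The subtlety is that, unlike the sentence case, one could a priori imagine increasingly baroque representations of the same width with deeper nesting or larger internal pp-formulas; I would handle this via a normal-form argument showing that any $\sh\pp$-representation can be rewritten, without increasing width, into one whose syntax tree has bounded fan-in after collapsing duplicates, whose $C$-leaves are cores over a bounded variable pool, and whose constants are bounded, yielding the finiteness. Everything else — the enumeration loop, the per-candidate test via homomorphism-count expansion and Theorem~\ref{thm:linearcombination}, and the termination guarantee from the trivial representation — is then routine. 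This also meshes cleanly with Theorem~\ref{thm:shppequi}, which already tells us \emph{when} the minimum width is bounded across a class; here we only need, for a single formula, that the minimum is attained and locatable.
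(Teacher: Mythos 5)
Your overall architecture (enumerate width-bounded candidates, decide ``$\psi$ represents $\phi$'', stop at the first width that works) is different from the paper's, which never enumerates: the paper shows that any $\sh\pp$-representation can be made \emph{basic} without increasing width (Lemma~\ref{lem:basicisenough}, whose proof flattens the formula via Lemma~\ref{lem:flattening} and then invokes the Independence Theorem~\ref{thm:linearcombination} to collapse the resulting linear combination to a single pp-summand), that basic $\sh\pp$-sentences correspond exactly to pp-formulas with $\width$ matching quantifier-aware width (Lemma~\ref{lem:shppandpp}), and then computes an optimal quantifier-aware tree decomposition directly (Lemma~\ref{lem:computeqaw}). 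Your part (ii) is essentially sound and in fact re-derives the paper's flattening-plus-independence argument: both $|\phi(\cdot)|$ and $[\cdot,\psi]$ unfold to finite $\Z$-linear combinations of pp-answer-count functions, and equality of such functions is decidable by merging counting-equivalent summands (decidable via renaming equivalence, Theorem~\ref{thm:renamingandcountingold}) and applying Theorem~\ref{thm:linearcombination}. (Your ``alternative'' test on all structures up to a computable size bound is not justified by Theorem~\ref{thm:linearcombination} as stated, since it gives no bound on the witnessing structure; stick to the linear-combination comparison.)

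The genuine gap is exactly the point you flag as the ``main obstacle'': finiteness of the candidate set for each width $k$, and your sketch for it does not work. For fixed $k$ there are infinitely many $\sh\pp$-sentences of width at most $k$: non-prenex pp-formulas of width $k$ inside $C$-leaves can have unboundedly many quantified variables, and your fix --- ``$C$-leaves are cores over a bounded variable pool,'' with size ``bounded by a function of $k$ and $\phi$'s signature'' --- is false, since cores of width-$k$ pp-formulas can be arbitrarily large (e.g.\ long paths), so any size bound must depend on $\phi$ itself. Moreover, even replacing a leaf's pp-formula by its core is not obviously width-non-increasing without a bounded-width rewriting of the core (the paper needs~\cite{DalmauKolaitisVardi02-treewidth} for this), and ruling out exotic minimum-width representations in which large leaves cancel through $+$ and negative constants is precisely the content of Lemma~\ref{lem:basicisenough}: one must flatten and apply Theorem~\ref{thm:linearcombination} to conclude that some minimum-width representation is basic and corresponds to a single pp-formula counting-equivalent to $\phi$ (hence of size controllable in terms of $\phi$, e.g.\ via its core). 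You invoke the Independence Theorem only for testing, not for this structural collapse, so the enumeration you describe is not yet an algorithm; once you do prove the collapse, you have essentially reproduced the paper's route, at which point the detour through enumeration is unnecessary because the minimum width is realized by a basic representation read off from an optimal quantifier-aware tree decomposition as in Lemmas~\ref{lem:shppandpp} and~\ref{lem:computeqaw}.
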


Building on this understanding of pp-formulas,
we are then able to achieve general versions of these theorems
for ep-formulas.
Previous work showed that for any class $\Phi$ of ep-formulas,
there exists a class $\Phi^+$ of prenex pp-formulas such that
the problems $\countp(\Phi)$ and $\countp(\Phi^+)$ 
are interreducible
(see~\cite{ChenMengel16-pods-ep-counting}
for a precise statement),
and hence (for example)
whether or not the tractability condition holds on
$\Phi^+$ determines whether or not $\countp(\Phi)$
is fixed-parameter tractable.
For the purposes of this extended abstract,
it is sufficient to know that
the class $\Phi^+$ is essentially defined from $\Phi$ 
in the following way: for each $\phi \in \Phi$,
it is shown that the function $|\phi(\cdot)|$,
which maps a finite structure $\relb$ to $|\phi(\relb)|$,
can be written as a polynomial 
(over the integers)
in unknowns 
$|\phi_1(\cdot)|, \ldots, |\phi_m(\cdot)|$
where the $\phi_i$ are pp-formulas; the class $\Phi^+$
is defined to contain all such formulas $\phi_i$ arising in this way.
(In general, the mentioned polynomial makes use of negative integers;
this is a reason why our definition of $\sh$-logic allows
arbitrary integers, and not just natural numbers.)
For more information, 
we refer the reader to~\cite{ChenMengel16-pods-ep-counting}.

\begin{theorem}\label{thm:shepequi}
Let $\Phi$ be a class of ep-formulas.
\begin{itemize}

\item 
The class $\Phi^+$ satisfies the tractability condition
if and only if 
there exists $k \geq 1$ such that
each formula in $\Phi$ has a 
$\sh \ep$-representation $\phi'$  having $\width(\phi') \leq k$.

\item
The class $\Phi^+$ satisfies the contraction condition
if and only if
there exists $k \geq 1$ such that
each formula in $\Phi$ has a
$\sh\ep$-representation $\phi'$ having $\shwidth(\phi') \leq k$.
\end{itemize}

\end{theorem}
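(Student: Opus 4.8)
The plan is to reduce the ep-case to the pp-case established in Theorem~\ref{thm:shppequi}, using the previously known correspondence between $\countp(\Phi)$ and $\countp(\Phi^+)$ together with structural facts relating an ep-formula to its associated family of pp-formulas. Recall that an ep-formula $\phi$, after being put into a suitable normal form, is logically equivalent to a disjunction of prenex pp-formulas; writing $\phi$ (with liberal variable set $V$) as $\bigvee_{i} \phi_i(V)$, an inclusion–exclusion argument expresses $|\phi(\relb)|$ as an integer linear combination $\sum_{J} c_J |\bigwedge_{i \in J}\phi_i(\relb)|$ of the answer counts of the conjunctions $\bigwedge_{i\in J}\phi_i$, each of which is again a prenex pp-formula (over the common variable set $V$). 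The class $\Phi^+$ is precisely (up to the details fixed in~\cite{ChenMengel16-ep-classification-arxiv}) the class of all these pp-formulas arising from conjunctions of disjuncts of formulas in $\Phi$. The key point is that these operations are faithfully mirrored in $\sh$-logic: a disjunction-free combination is handled by the $+$ connective (with integer coefficients available as the constant $\sh$-formulas $n \in \Z$), and all the constituent $\sh$-formulas share the same free-variable set $V$, so the width contributed by the combining step is just $|V|$, which is bounded over $\Phi$ whenever the arity is bounded (here $|V| = |\free(\phi)|$ plus the bounded number of liberal variables, and $\free$ is bounded by arity once the formula is in normal form).

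The first direction uses Theorem~\ref{thm:shppcompute} (or just the "if" direction of Theorem~\ref{thm:shppequi}): if $\Phi^+$ satisfies the tractability condition, then there is a uniform $k$ such that every $\psi \in \Phi^+$ has a $\sh\pp$-representation of width $\leq k$. For $\phi \in \Phi$, take the finitely many conjunctions $\bigwedge_{i\in J}\phi_i$, replace each by its width-$\leq k$ $\sh\pp$-representation $\rho_J$, and form $\sum_J c_J \rho_J$ (inserting the coefficient $c_J$ via the constant $\sh$-formula and $\times$, or by repeated $+$). This is a $\sh\ep$-formula; its width is $\max(k, |V|)$, which is bounded since $\Phi$ has bounded arity and hence bounded liberal-variable-set size. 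One must check it genuinely represents $\phi$: this follows from the inclusion–exclusion identity together with the defining semantics of $+$, $\times$, and $n$. The $\shwidth$ version is identical, invoking the second bullet of Theorem~\ref{thm:shppequi}.

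For the converse, suppose every $\phi \in \Phi$ has a $\sh\ep$-representation $\phi'$ of width $\leq k$. We must show $\Phi^+$ satisfies the tractability condition. Here the strategy is to go through fixed-parameter tractability: by Observation~\ref{obs:count-ptime}, bounded-width $\sh$-representations give $\countp(\Phi) \in \mathrm{FPT}$; by the interreducibility of $\countp(\Phi)$ and $\countp(\Phi^+)$ from~\cite{ChenMengel16-ep-classification-arxiv}, $\countp(\Phi^+) \in \mathrm{FPT}$; and then the classification of~\cite{ChenMengel15-pp-icdt} — which says $\countp(\Phi^+)$ is FPT iff $\Phi^+$ satisfies the tractability condition (under $\mathrm{FPT}\neq\mathrm{W}[1]$) — yields the conclusion. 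The $\shwidth$/contraction-condition direction is handled the same way, using that bounded $\shwidth$ implies interreducibility with $\pclique$ rather than full tractability, matched against the $\pclique$-interreducible regime of the classification.

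The main obstacle I expect is the bookkeeping around the normal form: pinning down exactly how an arbitrary ep-formula decomposes into prenex pp-formulas over a common variable set, verifying that the class $\Phi^+$ as defined in the cited work really does coincide (up to treewidth-preserving transformations) with the class of conjunctions appearing in the inclusion–exclusion expansion, and confirming that passing to this normal form does not blow up arity or width. A secondary subtlety is ensuring that the contraction-condition direction lines up correctly — one needs the intermediate hardness regime of the trichotomy to be characterized by $\shwidth$ rather than $\width$, which in turn relies on the precise statement of Theorem~\ref{thm:shppequi} and on $\shwidth$ being preserved (up to an additive constant) under the interreducibility. Modulo these alignments, the argument is a routine lifting of the pp-case plus an appeal to the established complexity classification.
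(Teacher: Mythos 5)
Your forward direction is in the same spirit as the paper's: express $|\phi(\relb)|$ as an integer linear combination of answer counts of pp-formulas associated with $\Phi^+$ (the paper does this via Lemma~\ref{lem:classificationdetails} rather than a raw inclusion--exclusion over disjuncts), plug in bounded-width $\sh\pp$-representations from the pp-case, and recombine with $+$, $\times$ and constants. One caveat: you bound the width of the combining step by $|V|$ and then invoke bounded arity, but the theorem does not assume bounded arity, and the tractability condition on $\Phi^+$ does not bound the number of liberal variables (consider $U_1(x_1)\wedge\cdots\wedge U_n(x_n)$); so the recombination has to be organized so that no $\sh$-subformula carries all of $\lib(\phi)$ as free variables, which your sketch does not ensure.

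The serious gap is in the converse. The paper proves ``bounded $\width$ (resp.\ $\shwidth$) of $\sh\ep$-representations $\Rightarrow$ tractability (resp.\ contraction) condition for $\Phi^+$'' unconditionally and purely syntactically/semantically: flatten the representation (Lemma~\ref{lem:flattening}), convert its basic parts to pp-formulas of small quantifier-aware width (Lemma~\ref{lem:shppandpp}), obtain a linear combination computing $|\phi(\relb)|$, build a second canonical linear combination from Lemma~\ref{lem:classificationdetails} in which the target pp-formula $\psi\in\Phi^+$ provably occurs, and then use the Independence Theorem~\ref{thm:linearcombination} to conclude the two combinations agree summand-by-summand up to counting equivalence, so $\psi$ is counting equivalent (hence, via Corollary~\ref{cor:renamingandcounting}, logically equivalent) to a formula of small quantifier-aware width. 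Your route instead goes through complexity: bounded width $\Rightarrow$ $\countp(\Phi)\in\mathrm{FPT}$ $\Rightarrow$ $\countp(\Phi^+)\in\mathrm{FPT}$ $\Rightarrow$ (by the trichotomy) tractability condition. This only yields a statement that is conditional on the complexity hypotheses underlying the hardness directions of the trichotomy, and it requires bounded arity, neither of which appears in the theorem; it also partially undercuts the paper's point, since Theorem~\ref{thm:shepequi} is meant to be an unconditional structural fact that is then \emph{combined} with the classification. Worse, for the second bullet your argument needs an upper bound of the form ``bounded $\shwidth$ implies $\countp(\Phi^+)$ is no harder than (decision) clique,'' which is asserted but established nowhere --- neither in the paper's observations nor in your sketch --- so even the conditional version of the $\shwidth$/contraction equivalence does not go through as written. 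To repair the converse you essentially need the paper's machinery: the flattening normal form, the correspondence between basic $\sh\pp$-formulas and quantifier-aware width, and the Independence Theorem.
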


\begin{theorem}\label{thm:shepcompute}
There exists an algorithm that,
given an ep-formula $\phi$, 
outputs a $\sh\ep$-representation $\psi$ of $\phi$
of minimum width.
\end{theorem}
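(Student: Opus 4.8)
The plan is to reduce the problem of minimizing width over $\sh\ep$-representations of an ep-formula $\phi$ to the pp-formula case already handled by Theorem~\ref{thm:shppcompute}. The first step is structural: using the standard transformation, rewrite $\phi$ into a disjunction $\phi_1 \vee \cdots \vee \phi_m$ of prenex pp-formulas sharing the same liberal variable set $V$ (pushing disjunctions out and quantifiers in as far as possible), so that for every structure $\relb$ we have $\phi(\relb) = \bigcup_i \phi_i(\relb)$. By inclusion--exclusion, $|\phi(\relb)| = \sum_{\emptyset \neq I \subseteq [m]} (-1)^{|I|+1} \, |(\bigwedge_{i \in I} \phi_i)(\relb)|$, and each conjunction $\bigwedge_{i\in I}\phi_i$ is again a prenex pp-formula on liberal variables $V$. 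Hence there is always a $\sh\ep$-representation of $\phi$ built as a $+$/$-$ (i.e.\ $+$ together with integer coefficients $n$) combination of $\sh\pp$-representations of these polynomially-many-in-$2^m$ conjunctions; applying Theorem~\ref{thm:shppcompute} to each conjunction and combining yields \emph{some} representation, and its width is controlled by the max over the pieces. The real content is showing that one cannot do better, i.e.\ that the minimum width of a $\sh\ep$-representation of $\phi$ equals the max over the relevant pp-pieces of the minimum $\sh\pp$-representation width.

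The key step, and the main obstacle, is the lower bound: given an \emph{arbitrary} $\sh\ep$-representation $\psi$ of $\phi$ of width $k$, I must exhibit, for each pp-piece that genuinely contributes, a $\sh\pp$-representation of width $\le k$. This is where the linear independence theorem (Theorem~\ref{thm:linearcombination}) enters, exactly as advertised in the introduction. The idea is that a $\sh\ep$-formula of width $k$, when one expands out its $C(\cdot,\cdot)$ leaves and distributes $\times$ over $+$, is a $\Z$-linear combination of $\sh\pp$-formulas each of width $\le k$ (the $C$-casts of pp-formulas only get smaller free-variable sets under projection, and products of bounded-width $\sh\pp$-formulas stay bounded-width because closed-variable sets are disjoint). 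Thus $[\relb,\psi]$ as a function of $\relb$ lies in the $\Q$-span of the functions $\relb \mapsto |\chi(\relb)|$ for pp-formulas $\chi$ of width $\le k$, and these functions are governed by the vectors $R(\rela)$ of homomorphism counts. Since the $R(\rela_i)$ for pairwise non-isomorphic $\rela_i$ are linearly independent, matching $[\relb,\psi]$ against the inclusion--exclusion expansion of $|\phi(\relb)|$ forces each core component appearing in that expansion to already be realized by a width-$\le k$ $\sh\pp$-formula sitting inside (a rearrangement of) $\psi$. One must be careful here with the bookkeeping: the augmented-structure / core machinery is needed so that $|\chi(\relb)| = \sum_{\rela} (\text{coeff}) \cdot \hom(\rela,\relb)$ expansions are canonical, and the disjointness conditions in the syntax of $\times$ are what keep widths from blowing up when distributing.

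Given the lower bound, the algorithm and its correctness fall out: compute the pp-disjuncts $\phi_1,\dots,\phi_m$, form all conjunctions $\bigwedge_{i\in I}\phi_i$, run the algorithm of Theorem~\ref{thm:shppcompute} on each to get minimum-width $\sh\pp$-representations $\chi_I$, and output the inclusion--exclusion combination $\psi = \sum_{\emptyset\neq I} (-1)^{|I|+1}\,\chi_I$ (read as nested $+$ with leaf constants for signs). Its width is $\max_I \width(\chi_I)$, which by Theorem~\ref{thm:shppcompute} is $\max_I$ of the optimal $\sh\pp$-width for $\bigwedge_{i\in I}\phi_i$; the lower bound argument shows no $\sh\ep$-representation of $\phi$ beats this. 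I would then remark that since disjuncts that are logically redundant contribute a core that is dominated by others, the expression can be pruned without affecting optimality, matching how Theorem~\ref{thm:shepequi} treats $\Phi^+$. The one genuinely delicate point to write carefully is the passage from "arbitrary $\sh\ep$-formula of width $k$" to "$\Z$-linear combination of width-$\le k$ $\sh\pp$-formulas," since the naive distribution of $\times$ over $+$ could a priori duplicate and merge closed-variable sets; I expect this to require a normalization lemma (perhaps already implicit in the proof of Theorem~\ref{thm:shppequi}) guaranteeing that the distribution can be performed while preserving the disjointness invariants and hence the width bound.
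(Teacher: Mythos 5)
Your overall architecture (reduce to pp-pieces, minimize each with Theorem~\ref{thm:shppcompute}, use Theorem~\ref{thm:linearcombination} for the lower bound, and invoke a flattening/normalization lemma, which the paper indeed supplies as Lemma~\ref{lem:flattening}) is the right one, but there is a genuine gap at the point where you declare the output and its optimality. Your algorithm emits the raw inclusion--exclusion combination over \emph{all} nonempty $I\subseteq[m]$ and claims its width, $\max_I \width(\chi_I)$, is optimal. This fails because distinct terms $\bigwedge_{i\in I}\phi_i$ can be counting equivalent with coefficients of opposite sign, so that after grouping they cancel entirely; a canceled class can have strictly larger minimum $\sh\pp$-representation width than everything that survives, in which case your output is not of minimum width. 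Your proposed fix (prune logically redundant disjuncts) does not cover this: take $\phi_1,\phi_2$ to be the two $5$-cycles partitioning the edges of $K_5$ on free variables $x_1,\dots,x_5$ and $\phi_3=\exists u\exists v\exists w\,(E(u,v)\wedge E(v,w)\wedge E(w,u))$; no disjunct entails another, yet the $K_5$-term from $I=\{1,2\}$ (coefficient $-1$) is logically equivalent to the term from $I=\{1,2,3\}$ (coefficient $+1$) and cancels, while all surviving summands have small width. The correct procedure --- and the paper's --- is to first merge counting-equivalent summands (decidable via renaming equivalence, Theorem~\ref{thm:renamingandcountingold}) and discard those with total coefficient $0$, obtaining a reduced linear combination with pairwise non-counting-equivalent summands and non-zero coefficients; Theorem~\ref{thm:linearcombination} only applies to combinations in this reduced form, and it is exactly this reduction that makes the combination \emph{canonical}, i.e., the same (up to counting equivalence) for every flat $\sh\ep$-representation of $\phi$.

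This also corrects the lower-bound half of your argument: when you match an arbitrary width-$k$ representation against ``the inclusion--exclusion expansion,'' you must instead match two \emph{reduced} combinations, since the independence theorem says nothing about expansions containing repeated or canceling classes. The paper's route makes this explicit: flatten the given (or any) representation via Lemma~\ref{lem:flattening}, convert each basic piece to a pp-formula of quantifier aware width at most the piece's width via Lemma~\ref{lem:shppandpp}, reduce to a linear combination as in Lemma~\ref{lem:basicisenough}, conclude from Theorem~\ref{thm:linearcombination} that this reduced combination is unique, and only then minimize each surviving summand with Theorem~\ref{thm:shppcompute} (using Corollary~\ref{cor:renamingandcounting} so that width-minimality is an invariant of the counting-equivalence class). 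With the merge-and-cancel step inserted, and the normalization lemma you flagged actually proved rather than assumed, your proposal becomes essentially the paper's proof.
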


The following is a consequence of the previous two theorems.

\begin{theorem}
\label{thm:passing-to-rep-under-tractability-condition}
Let $\Phi$ be a class of ep-formulas.
If $\Phi^+$ satisfies the tractability condition,
then there exists $k \geq 1$
and an algorithm that, given a formula $\phi \in \Phi$,
computes a $\sh\ep$-representation $\phi'$ of $\phi$
having $\width(\phi') \leq k$.
\end{theorem}

\begin{proof}
The algorithm is that provided by Theorem~\ref{thm:shepcompute}.
The claim on the width follows immediately from the first part of
Theorem~\ref{thm:shepequi}.
\end{proof}

\subsection{Overviews of the proofs}

In order to prove the main theorems just presented, 
we will develop several tools spanning 
Sections~\ref{sect:quantifier-aware-width} to~\ref{sect:independence-theorem} before finally proving the main results in Section~\ref{sect:proofs-of-main-theorems}. 
Here, we offer the reader guidance by
introducing the pieces of the puzzle and 
explaining how they fit together.

As formulated above, the tractability condition 
is defined in part using the contraction condition,
and thus it appears to consist of two independent parts.
In a first step, in Section~\ref{sect:quantifier-aware-width}, 
we simplify the situation by introducing a new notion that we call \emph{quantifier-aware width} of a pp-formula. It is defined to be the minimal width of certain restricted tree decompositions of a pp-formula. We show that having bounded quantifier-aware width is equivalent to the tractability condition (Lemma~\ref{lem:qawcompare});
as a consequence of this lemma,
after Section~\ref{sect:quantifier-aware-width},  
we will not have to deal with the tractability condition directly anymore and can work with the conceptually cleaner notion of quantifier-aware width. We then go on to show that quantifier-aware width can be computed in a fixed-parameter fashion (Lemma~\ref{lem:computeqaw}) which is an important building block for our minimization algorithm. Finally, we prove that quantifier-aware width of a pp-formula is essentially equal to the width of an optimal representation by a $\sh\PP$-formula of a particularly simple type, which we call \emph{basic} (Lemma~\ref{lem:shppandpp}).
Together with the rest of the results of Section~\ref{sect:quantifier-aware-width}, this gives an important connection 
between pp-formulas and $\sh\PP$-formulas and thus bridges the gap between ordinary first-order logic and $\sh$-logic.

Section~\ref{app:normalization} gives insights into the structure of $\sh\EP$-formulas by showing that we can always turn one into 
an equivalent weighted sum of basic $\sh\PP$-formulas. This allows 
us to leverage most of the results of Section~\ref{sect:quantifier-aware-width} to general $\sh\EP$-formulas.

The perhaps most subtle but very important contribution to the proofs of the main theorem comes from the results of 
Section~\ref{sect:independence-theorem}. 
In this section, we consider sums of the form 
$\sum_{i=1}^m c_i |\phi_i(V_i)|$ where each $c_i$ is a non-zero rational constant and the $\phi_i$ are 
pp-formulas 
which are pairwise not counting equivalent.
We call such terms \emph{linear combinations};
each naturally maps a finite structure $\relb$ to the value
$\sum_{i=1}^m c_i |\phi_i(\relb)|$.
The main result of Section~\ref{sect:independence-theorem} is 
the \emph{independence theorem}
(Theorem~\ref{thm:linearcombination}) which states that for every linear combination there is a structure on which the linear combination evaluates to a non-zero value.

The independence theorem
 has concrete applications throughout the proofs of the main results in Section~\ref{sect:proofs-of-main-theorems}. The reasoning is roughly as follows: we assign to a formula two different linear combinations having two different desirable properties (for example, small width and small number of summands) but computing the same value on every finite structure. Applying Theorem~\ref{thm:linearcombination} on the difference of these linear combinations, we obtain that in fact both linear combinations are the same 
 (up to \emph{counting equivalence} of the $\phi_i$,
 a notion of equivalence to be defined). 
 Consequently, both linear combinations have the same properties and thus in particular have both of the two desirable properties. This then allows to reason about the properties of the original formula we started with.

The proofs of the main results in 
Section~\ref{sect:proofs-of-main-theorems} 
use the above tools in a rather black-box fashion. Thus the reader is invited to first skip the proofs in Sections \ref{sect:quantifier-aware-width} to~\ref{sect:independence-theorem} and 
see how everything fits together in 
Section~\ref{sect:proofs-of-main-theorems} before 
reading the proofs of the individual pieces.

\section{Quantifier-aware width}
\label{sect:quantifier-aware-width}

In this section, we introduce a new width measure of pp-formulas which we call \emph{quantifier-aware width}
 and show that it is related to the width of $\sh\PP$-formulas.

We here assume all tree decompositions of pp-formulas to be nice. 
So let $(T, (B(t))_{t\in T})$ be a nice tree decomposition of a pp-formula $\phi$. For every variable $x$ of $\phi$ let $\topp(x)$ be the vertex $t$ of $T$ that is highest in $T$ such that $x\in B(t)$. 
We call a tree decomposition of $\phi$ \emph{quantifier-aware} if for every $\exists$-component $C$ of $\phi$ and for all $x\in V(C)\setminus \free(\phi)$ and all $y\in V(C)\cap \free(\phi)$, we have that $\topp(y)$ is on the path from $\topp(x)$ to the root of $T$. We call the \emph{quantifier-aware width} of a pp-formula $\phi$, denoted by $\qaw(\phi)$, the minimal treewidth of a quantifier-aware tree decomposition of $\phi$ plus~$1$.

\lics{
\begin{remark}\label{rem:star}
The quantifier-aware width can be arbitrarily higher than their treewidth. To see this consider the formula $\phi = \exists z \bigwedge_{i\in [n]} E(x_i, z)$. The primal graph of $\phi$ is a star, so it has treewidth $1$. We claim that the quantifier-aware width of $\phi$ is $n+1$. To see this, observe first that the free variables $x_1, \ldots, x_n$ must appear above $\topp(z)$ in a bag of any quantifier-aware tree decomposition. But since $x_i z$ is an edge in the primal graph for every $i$, the variable $x_i$ must also appear in a common bag with $z$ and consequently also in $\topp(z)$. Thus $\topp(z)$ must contain $n+1$ variables. 
\end{remark}
}


We now demonstrate properties of quantifier-aware width
that will be used to establish our main theorems.

We first show that $\qaw(\phi)$ is, in a sense, characterized by 
$\tw(\phi)$ and $\tw(\contract(\phi))$, for every pp-formula $\phi$. Consequently, quantifier-aware treewidth will allow us to characterize tractable classes of pp-formulas for counting. 

\begin{lemma}\label{lem:qawcompare}
 For every pp-formula $\phi$ we have \begin{align*} &\max\{\tw(\phi), \tw(\contract(\phi))\} + 1 \\\le & \qaw(\phi) \le \tw(\phi) + \tw(\contract(\phi))+1.\end{align*}
\end{lemma}

We now give a lemma that shows how to compute the quantifier-aware treewidth, which will allow us to compute $\sh\PP$-formulas of optimal width.
\begin{lemma}\label{lem:computeqaw}
 The computation of quantifier-aware tree decompositions 
 of pp-formulas 
having minimal width 
 is fixed-parameter tractable, 
 when the parameter is taken as the quantifier-aware width.
\end{lemma}

We call a $\sh\PP$-formula $\phi$ \emph{basic}
if it does not contain $+$ nor subformulas of the form $n$, 
where $n \in \Z$. 
The following lemma demonstrates that basic $\sh\PP$-formulas 
correspond very closely to pp-formulas.

\begin{lemma}\label{lem:shppandpp}
\begin{enumerate}

 \item[a)] 
There exists an algorithm that, given
a basic $\sh\PP$-sentence $\phi'$,
 computes a pp-formula $\phi$ that $\phi'$ represents,
 such that 
$\width(\phi')\ge \qaw(\phi)$, 
and\\
$\shwidth(\phi') \ge \tw(\contract(\phi))+1$.

 \item[b)]  
 There exists an algorithm that, given a pp-formula $\phi$,
computes a basic $\sh\PP$-sentence $\phi'$ that represents $\phi$,
 such that
  $\width(\phi')\le \qaw(\phi)$, and\\
  $\shwidth(\phi') \le \tw(\contract(\phi))+1$.
\end{enumerate}
\end{lemma}

\section{Normalizing $\sh\ep$-formulas}\label{app:normalization}

We call a $\sh\pp$-formula \emph{constant} if it is only constructed from constants in $\mathbb{Z}$, $\times$, and $P$- and $E$-quantifiers.
We call a $\sh\pp$-formula
\emph{flat} if 
it is of the form $\sum_{i\in [\ell]} \psi_i\times \phi_i$ 
where the $\psi_1, \ldots, \psi_n$ are constant and $\phi_1,\ldots , \phi_\ell$ are basic $\sh\PP$-formulas.

The main result of this section is the following normalization lemma.

\begin{lemma}\label{lem:flattening}
There exists an algorithm that computes,
for a given $\sh\EP$-formula $\phi$,
a logically equivalent flat $\sh\PP$-formula $\phi'$ such that $\width(\phi')\le \width(\phi)$.
\end{lemma}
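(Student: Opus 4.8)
\textbf{Proof plan for Lemma~\ref{lem:flattening}.}
The plan is to proceed by structural induction on the $\sh\EP$-formula $\phi$, pushing the $+$ connective and the integer constants toward the root of the syntax tree while keeping all $C$-castings at the leaves applied only to pp-formulas. The base cases are immediate: if $\phi = C(\psi,L)$ with $\psi$ a pp-formula (this is where the $\EP$-to-$\PP$ step happens — note $\psi$ may itself be a disjunction, so we first distribute the casting over the disjunction as below), then $\phi$ is already basic hence trivially flat; if $\phi = n \in \Z$, it is constant hence flat. For the inductive step, assume we have flat equivalents $\phi_1' = \sum_{i} \psi_i^{(1)} \times \chi_i^{(1)}$ and (where applicable) $\phi_2' = \sum_{j} \psi_j^{(2)} \times \chi_j^{(2)}$ for the immediate subformulas, with width no larger than the originals, and handle each connective:

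\begin{itemize}
\item If $\phi = \phi_1 + \phi_2$, then $\phi_1' + \phi_2'$ is already a sum of products of the required shape, and taking the union of the two index sets gives a flat formula; the free variables of every subformula are unchanged, so $\width$ does not increase.
\item If $\phi = \phi_1 \times \phi_2$, distribute: $\phi_1' \times \phi_2'$ is equivalent to $\sum_{i,j} (\psi_i^{(1)} \times \psi_j^{(2)}) \times (\chi_i^{(1)} \times \chi_j^{(2)})$. Here one must check the side conditions of the $\times$ rule (matching free-variable sets and disjoint closed-variable sets), which hold because they held for $\phi_1 \times \phi_2$ and are inherited by the summands after the distributive rewriting; $\psi_i^{(1)} \times \psi_j^{(2)}$ is constant and $\chi_i^{(1)} \times \chi_j^{(2)}$ is basic, as required. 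The free-variable set of each new subformula is contained in $\free(\phi_1) \cup \free(\phi_2) = \free(\phi)$, so the width bound is preserved.
\item If $\phi = EV\phi_1$ or $\phi = PV\phi_1$, use that $E$ and $P$ distribute over $+$ up to logical equivalence (from the semantics: $[\relb, PV(\alpha+\beta)](h) = [\relb, PV\alpha](h) + [\relb, PV\beta](h)$, and similarly for $E$), so $\phi$ is equivalent to $\sum_i QV(\psi_i^{(1)} \times \chi_i^{(1)})$ for $Q \in \{E,P\}$. It then remains to push the quantifier inside each product: when the quantified variables meet only one of the two factors one simply moves the quantifier onto that factor, and when they split one distributes the quantifier onto both factors in the manner used in Example~\ref{ex:tree-decomposition-to-sh-formula}; the result is again of the form constant-times-basic, since quantifiers applied to a constant formula stay constant and quantifiers applied to a basic formula stay basic. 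Width is controlled because $\free$ of each subformula is a subset of that of the corresponding subformula of $\phi$.
\end{itemize}

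For the one point glossed over above: a $\sh\EP$-formula may cast a genuine ep-formula (with $\vee$) rather than a pp-formula, so before the induction even starts we normalize every leaf $C(\psi,L)$ by putting $\psi$ into the form $\exists \bar z (\bigvee_k \delta_k)$ with each $\delta_k$ a conjunction of atoms (prenex disjunctive normal form of the ep-formula), observing that $C(\exists\bar z \bigvee_k \delta_k, L)$ is logically equivalent to $P\bar z (\sum_k C(\delta_k', L \cup \bar z))$ where $\delta_k'$ is the prenex pp-formula extracted from $\delta_k$ — this is exactly the kind of rewriting illustrated in the opening examples of Section~\ref{sect:preliminaries}. Since the disjuncts of an ep-formula are subformulas of it, and the only new free variables introduced are the $\bar z$ which were already quantifier-bound variables of $\psi$, the width of the rewritten leaf is at most the width of $C(\psi,L)$.

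The main obstacle I expect is the $P$-quantifier case of the inductive step: unlike $E$ and unlike $\times$, projecting variables out of a sum genuinely requires the distributivity $[\relb,PV(\alpha+\beta)] = [\relb,PV\alpha] + [\relb,PV\beta]$, and then projecting out of a product requires the ``splitting'' trick that is only sound because the two factors of a legal $\times$-formula share exactly their free variables — so one must be careful that after splitting $PV$ across $\chi_i^{(1)} \times \chi_i^{(2)}$ the two halves still have equal free-variable sets (one adds back as free variables, via $E$, those projected variables that occur in only one half). Verifying that this bookkeeping never enlarges any $\free(\cdot)$ beyond what it was in $\phi$ — and hence that $\width(\phi') \le \width(\phi)$ — is the delicate part; everything else is routine structural induction.
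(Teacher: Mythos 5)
Your overall architecture (push $+$ toward the root, then separate each $+$-free summand into a constant part times a basic part) matches the paper's proof, which proceeds through exactly such normalization lemmas. But your treatment of the base case — the only place where the passage from $\sh\EP$ to $\sh\PP$ actually happens — is wrong. You claim that $C(\exists\bar z\,\bigvee_k \delta_k, L)$ is logically equivalent to $P\bar z\,\bigl(\sum_k C(\delta_k', L\cup\bar z)\bigr)$. The left-hand side is $0/1$-valued: it is the indicator of whether the assignment satisfies the ep-formula. The right-hand side counts the number of pairs (extension of the assignment to $\bar z$, disjunct $k$) that are satisfied, so it overcounts both when an existential variable has several witnesses and when disjuncts overlap. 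For instance, with $\psi=\exists z\,E(x,z)$ and $L=\{x\}$, the left side evaluates to $1$ whenever $x$ has an $E$-successor, while $P\{z\}\,C(E(x,z),\{x,z\})$ evaluates to the out-degree of $x$. The paper avoids this by never dismantling the existential quantifiers: it rewrites the ep-formula as a disjunction $\bigvee_{i=1}^s \psi_i$ of pp-formulas (which keep their $\exists$ inside the casting, so each $C(\psi_i,L)$ stays $0/1$-valued) and then handles the disjunction by inclusion--exclusion, $\sum_{J\subseteq[s],\,J\neq\emptyset} (EL\,(-1)^{|J|+1})\prod_{i\in J} C(\psi_i,L)$, which is exactly the idea your proposal is missing; note that inclusion--exclusion forces constants and $+$ into the formula, which is why ``flat'' rather than ``basic'' is the right target.

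Two further points. First, your width bookkeeping at the leaves does not go through: prenexing an ep-formula can increase width (e.g.\ $\exists z_1 E(x,z_1)\wedge\exists z_2 E(y,z_2)$ has width $2$ but its prenex form has a quantifier-free subformula with $4$ free variables), and your new castings have free-variable sets $L\cup\bar z$, which can exceed $\width(C(\psi,L))$ when $L\supsetneq\free(\psi)$; the paper instead cites a width-preserving transformation of ep-formulas into disjunctions of (non-prenex) pp-formulas. Second, in the quantifier case you say that when the projected variables ``split'' one ``distributes the quantifier onto both factors''; distributing $P$ over $\times$ is unsound in general ($\sum_{h'} f(h')g(h')\neq(\sum f)(\sum g)$). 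The correct move, which the paper uses, is that after the induction one factor is \emph{constant}, hence independent of the assignment, so it can be pulled out of the sum and only its $E$-quantified variable set adjusted, while $PV$ is applied to the basic factor alone.
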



We prove a sequence of lemmas to aid us.

\begin{lemma}\label{lem:turnPP}
There exists an algorithm that computes,
for a given $\sh\EP$-formula $\phi$ of the form $C(\psi,L)$, 
a logically equivalent $\sh\PP$-formula $\phi'$ such that $\width(\phi')\le \width(\phi)$.
\end{lemma}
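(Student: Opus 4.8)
The plan is to transform a $\sh\EP$-formula $\phi$ of the form $C(\psi,L)$ into a logically equivalent $\sh\PP$-formula, using the fact that the only obstacle to being a $\sh\PP$-formula is that the single embedded fo-formula $\psi$ may contain disjunctions. The key is the distributive-law fact that an ep-formula $\psi$ can be rewritten into a disjunction $\psi_1 \vee \cdots \vee \psi_r$ of pp-formulas (by pushing $\vee$ outward past $\wedge$ and past $\exists$), and then the casting $C(\psi_1 \vee \cdots \vee \psi_r, L)$ should be replaced by an inclusion--exclusion expression built from castings $C(\bigwedge_{i \in I} \psi_i, L)$ of pp-formulas, for $I \subseteq [r]$ nonempty, combined with $+$ and integer-constant multipliers $(-1)^{|I|+1}$. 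Concretely, for any structure $\relb$ and assignment $h$, the value $[\relb, C(\psi,L)](h)$ is $1$ or $0$ according to whether $h$ satisfies some $\psi_i$, and this indicator equals $\sum_{\emptyset \neq I \subseteq [r]} (-1)^{|I|+1} \prod_{i\in I}[\relb, C(\psi_i,L)](h)$; since each $\prod_{i \in I}[\relb, C(\psi_i,L)](h)$ is the indicator of $h$ satisfying $\bigwedge_{i\in I}\psi_i$, and a conjunction of pp-formulas is (after suitable variable handling) a pp-formula, each term $C(\bigwedge_{i\in I}\psi_i,L)$ is a legal $\sh\PP$-subformula.

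**First I would** spell out the disjunctive normal form step: given the ep-formula $\psi$, repeatedly apply the rewritings $\alpha \wedge (\beta \vee \gamma) \equiv (\alpha \wedge \beta) \vee (\alpha \wedge \gamma)$ and $\exists x(\beta \vee \gamma) \equiv (\exists x \beta) \vee (\exists x \gamma)$ to obtain a logically equivalent formula $\psi_1 \vee \cdots \vee \psi_r$ where each $\psi_j$ is a pp-formula with $\free(\psi_j) \subseteq \free(\psi) \subseteq L$ (padding liberal variables to $L$ as needed). Crucially, every subformula arising in this process is a disjunction of subformulas of the original $\psi$ up to conjunction/quantification, so no subformula ever has more free variables than some subformula of $\psi$ did; hence the width of the resulting castings is bounded by $\width(C(\psi,L)) = \max(|L|, \width(\psi))$. **Then I would** define $\phi'$ as the $\sh\PP$-formula
\[
\phi' \;=\; \sum_{\emptyset \neq I \subseteq [r]} (-1)^{|I|+1}\, C\!\left(\bigwedge_{i\in I}\psi_i,\; L\right),
\]
using the $\sh$-logic convention that an integer-constant times a formula is shorthand for the appropriate $+$/unary-minus combination (a constant $n$ times $\theta$ is $\theta \times n$), and noting $\free$ of every summand is $L$ so the sums and products are well-formed.

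**The main obstacle** I anticipate is purely a matter of bookkeeping rather than mathematical depth: one must (i) verify that a conjunction $\bigwedge_{i\in I}\psi_i$ of prenex pp-formulas can be re-formed into a single prenex pp-formula without renaming free/liberal variables and without inflating the free-variable count of any subformula beyond what $\psi$ already had — this requires renaming the bound variables of the different $\psi_i$ apart, which is harmless for semantics and width — and (ii) carry out the induction/calculation verifying $[\relb,\phi'](h) = [\relb, C(\psi,L)](h)$ for all $\relb, h$, which reduces to the standard inclusion--exclusion identity $\mathbf{1}[\text{some }\psi_i] = \sum_{\emptyset\neq I}(-1)^{|I|+1}\prod_{i\in I}\mathbf{1}[\psi_i]$ together with the fact that $\prod_{i\in I}\mathbf{1}[\psi_i] = \mathbf{1}[\bigwedge_{i\in I}\psi_i]$. **Finally,** the width bound $\width(\phi') \le \width(\phi)$ follows by checking that every $\sh$-subformula of $\phi'$ has free-variable set $L$ (so contributes $|L| \le \width(\phi)$) and every fo-subformula of $\phi'$ is a subformula of some $\bigwedge_{i\in I}\psi_i$, whose free variables form a subset of the free variables of a subformula of the DNF of $\psi$, which in turn is bounded by $\width(\psi) \le \width(\phi)$; the renaming of bound variables does not affect free-variable counts. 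The whole construction is clearly computable, so the algorithm is as claimed.
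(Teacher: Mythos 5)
Your proposal is correct and takes essentially the same route as the paper: rewrite $\psi$ as a width-preserving disjunction $\psi_1\vee\cdots\vee\psi_r$ of pp-formulas and then build the representation by inclusion--exclusion over nonempty $I\subseteq[r]$ with coefficients $(-1)^{|I|+1}$. The only deviations are cosmetic: the paper multiplies the individual castings $C(\psi_i,L)$ with $\times$ rather than casting the conjunction $\bigwedge_{i\in I}\psi_i$ (semantically identical, and both stay within width $\max(|L|,\width(\psi))$), and the integer coefficient must be lifted to the free-variable set $L$ by writing $E L\,(-1)^{|I|+1}$ before multiplying, since $\theta\times n$ is not well-formed when $L\neq\emptyset$ (which is exactly how the paper handles it).
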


We call an $\sh$-formula \emph{$+$-free} if it does not contain  $+$.

\begin{lemma}\label{lem:normaddition}
  There exists an algorithm that computes, for a given $\sh\EP$-formula $\phi$, a logically equivalent $\sh\PP$-formula $\phi'$ of the form $\sum_{i=1}^s \phi_i$ where the $\phi_i$ are $+$-free such that $\width(\phi')\le \width(\phi)$.
\end{lemma}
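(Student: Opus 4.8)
The plan is to push additions to the top of the formula tree by repeatedly applying distributivity laws to a $\sh\EP$-formula. First I would invoke Lemma~\ref{lem:turnPP} to replace every $\sh$-subformula of the form $C(\psi,L)$, where $\psi$ is an ep-formula, by a logically equivalent $\sh\PP$-formula of no greater width; after substituting these in place, we may assume the input $\phi$ is a $\sh\PP$-formula (so every casting $C(\psi,L)$ occurring in it has $\psi$ a pp-formula). The goal is then to show that any $\sh\PP$-formula can be rewritten, without increasing width, into the shape $\sum_{i=1}^s \phi_i$ with each $\phi_i$ being $+$-free.

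Next I would proceed by structural induction on $\phi$. The base cases $C(\psi,L)$ and $n \in \Z$ are already $+$-free (a single summand). For the inductive step, suppose each immediate subformula has been put into the form $\sum_i \phi_i$ with $\phi_i$ being $+$-free. I then handle each connective:
\begin{itemize}
\item If $\phi = PV\,\psi$ with $\psi = \sum_{i=1}^s \psi_i$, use the distributivity of $P$ over $+$ — which follows directly from the semantics, since $\sum_{h'}([\relb,\psi_1](h') + \cdots + [\relb,\psi_s](h')) = \sum_{h'}[\relb,\psi_1](h') + \cdots + \sum_{h'}[\relb,\psi_s](h')$ — to rewrite $\phi$ as $\sum_{i=1}^s PV\,\psi_i$, and each $PV\,\psi_i$ is $+$-free. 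Here one must check the side condition $V \cap \closed(\psi_i) = \emptyset$ holds for each $i$; since $\closed(\psi_i) \subseteq \closed(\psi)$ (the $\closed$ set only grows under the operations), this is inherited. One also checks $\free(PV\,\psi_i) = \free(\psi_i)\setminus V$ agrees with $\free(\phi)$, so the top-level $+$ is well-formed.
\item The case $\phi = EV\,\psi$ is analogous, using that $E$ distributes over $+$ (both sides restrict $h$ to $\free(\psi)$ and then sum componentwise).
\item If $\phi = \psi \times \psi'$ with $\psi = \sum_{i=1}^s \psi_i$ and $\psi' = \sum_{j=1}^{s'} \psi'_j$, expand via the distributive law $\big(\sum_i [\relb,\psi_i](h)\big)\cdot\big(\sum_j [\relb,\psi'_j](h)\big) = \sum_{i,j}[\relb,\psi_i](h)\cdot[\relb,\psi'_j](h)$ to get $\phi \equiv \sum_{i,j} (\psi_i \times \psi'_j)$, and each $\psi_i \times \psi'_j$ is a product of two $+$-free formulas, hence $+$-free. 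The side conditions — $\free(\psi_i) = \free(\psi'_j)$ and $\closed(\psi_i)\cap\closed(\psi'_j) = \emptyset$ — follow since $\free(\psi_i) = \free(\psi) = \free(\psi') = \free(\psi'_j)$ and $\closed(\psi_i) \subseteq \closed(\psi)$, $\closed(\psi'_j) \subseteq \closed(\psi')$, which are disjoint by hypothesis.
\item If $\phi = \psi + \psi'$, just concatenate the two already-computed sums.
\end{itemize}

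The width bound is the routine but essential bookkeeping: in every rewrite step above, each new $\sh$-subformula $\theta$ that appears either is (a copy of) an existing subformula of $\phi$ — so its $\free$-set is unchanged — or is a $PV$, $EV$, or $\times$ built from such copies, whose $\free$-set equals $\free(\phi)$ or $\free(\phi)\setminus V$, in either case bounded by $|\free(\phi)|$, which is at most $\width(\phi)$. No new fo-subformulas are created by the distributive rewrites (only duplicated), and Lemma~\ref{lem:turnPP} already guarantees its substitution does not raise width. Hence $\width(\phi') \le \width(\phi)$ throughout. I expect the main obstacle to be precisely this careful verification that all the $\sh$-formula well-formedness side conditions (the disjointness and equality constraints on $\free$ and $\closed$) are preserved under each distributive rewrite, together with ensuring the width accounting stays tight; the algorithmic content itself is a transparent recursive application of distributivity, and termination is clear since the rewrites strictly decrease a suitable measure (e.g.\ the number of $+$ nodes lying below some non-$+$ node).
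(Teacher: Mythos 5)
Your proposal is correct and follows essentially the same route as the paper: a structural induction that pushes $+$ upward via the distributivity of $P$, $E$, and $\times$ over $+$, with the casting case $C(\psi,L)$ handled by Lemma~\ref{lem:turnPP}, plus the width and well-formedness bookkeeping. The paper only sketches this argument, and your more detailed verification of the $\free$/$\closed$ side conditions fills it in consistently.
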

\begin{proof} 
 The proof is by straightforward induction on the structure of $\phi$, pushing all occurences of $+$ up in the formula;
 for instance, one proves that $PV (\psi_1 + \psi_2)$
 is logically equivalent to $PV \psi_1 + PV \psi_2$.
 The base case $\phi=C(\psi,L)$ is Lemma~\ref{lem:turnPP}. 
\end{proof}

The proof of the following lemma is by a straightforward induction.

\begin{lemma}\label{lem:normconst}
 There exists an algorithm that computes, for a given constant $\sh\PP$-formula $\phi$, a logically equivalent $\sh\PP$-formula $\phi'=EV_1 PV_2 n$ with $n\in \mathbb{Z}$ such that $\width(\phi')\le \width(\phi)$.
\end{lemma}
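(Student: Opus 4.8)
The plan is to proceed by structural induction on the constant $\sh\PP$-formula $\phi$; the algorithm is simply the recursion read off from this induction. The key observation is that a constant $\sh\PP$-formula computes, on any structure, a value of the form $n \cdot b^{k}$ with $n \in \Z$, $k \ge 0$, and $b = |B|$, and the target shape $EV_1\, PV_2\, n$ is precisely what exposes these two parameters together with the free-variable set $V_1 = \free(\phi)$: one has $[\relb, EV_1\, PV_2\, n](h) = n \cdot |B|^{|V_2|}$ for every $h \colon V_1 \to B$. For the base case $\phi = n$ take $V_1 = V_2 = \emptyset$. For $\phi = EW\,\psi$, the inductive hypothesis gives $\psi \equiv EV_1\, PV_2\, n$ with $V_1 = \free(\psi)$; from the semantics of $E$, the formula $EW\, EV_1\, \chi$ is logically equivalent to $E(W \cup V_1)\, \chi$, so with $\chi = PV_2\, n$ we get $\phi \equiv E(W \cup V_1)\, PV_2\, n$, and the well-formedness conditions on $EW\psi$ force $W$ to be disjoint from $V_1 \cup V_2$, so the result is well-formed with first block $W \cup V_1 = \free(\phi)$. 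For $\phi = \psi_1 \times \psi_2$, applying the hypothesis to both factors gives $\psi_1 \equiv EF\, PU_2\, m$ and $\psi_2 \equiv EF\, PV_2\, n$ with $F = \free(\psi_1) = \free(\psi_2)$ and $U_2 \cap V_2 = \emptyset$ (well-formedness of the product); then $\phi$ and $EF\, P(U_2 \uplus V_2)\, (mn)$ both compute $mn \cdot b^{|U_2| + |V_2|}$ with free variables $F$, and the latter is well-formed since $F$ is disjoint from $U_2$ and from $V_2$.

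The one case needing a little care is $\phi = PW\,\psi$. Here $W$ may overlap $\free(\psi)$, and well-formedness gives only $W \cap \closed(\psi) = \emptyset$; writing $\psi \equiv EV_1\, PV_2\, n$ from the hypothesis (so $V_1 = \free(\psi)$ and $V_2 = \closed(\psi)$, whence $W \cap V_2 = \emptyset$), a direct computation from the semantics of $P$ shows that for any $\relb$ and any $h \colon V_1 \setminus W \to B$ there are exactly $|B|^{|W|}$ extensions of $h$ to domain $V_1 \cup W$, each contributing the constant value $n \cdot |B|^{|V_2|}$, so $[\relb, \phi](h) = n \cdot |B|^{|V_2| + |W|}$. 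Hence $\phi$ is logically equivalent to $E(V_1 \setminus W)\, P(V_2 \uplus W)\, n$, which is of the required shape and well-formed: $V_2 \cup W$ is a disjoint union because $W \cap V_2 = \emptyset$, and $V_1 \setminus W$ is disjoint from $V_2 \cup W$ (from $W$ trivially, and from $V_2$ because $V_1 \cap V_2 = \emptyset$ by well-formedness of $\psi$).

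It remains to note that the width bound comes for free. For any $\sh$-formula $\psi$ we have $\width(\psi) \ge |\free(\psi)|$, since $\psi$ is a subformula of itself; and the produced normal form $\phi' = E(\free(\phi))\, PV_2\, n$ has only $n$, $PV_2\, n$, and $\phi'$ itself as subformulas, the first two with empty free-variable set, so $\width(\phi') = |\free(\phi)| \le \width(\phi)$. The only real obstacle in the whole argument is the bookkeeping of the side conditions on variable sets when merging quantifier blocks --- especially in the $PW$ case, where one must correctly account for the projected variables of $W$ that are not free in $\psi$ --- but, as indicated, all the merged sets stay pairwise disjoint in the way the syntax demands, so this is routine.
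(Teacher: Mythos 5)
Your proof is correct and takes essentially the same approach as the paper, which merely remarks that the lemma follows "by a straightforward induction" on the structure of constant $\sh\PP$-formulas --- exactly the induction you carry out, with the right case analysis and the right width argument. The only point worth making explicit is that the identity $V_2=\closed(\psi)$ you invoke in the $PW$ and $\times$ cases is not a consequence of logical equivalence alone but a strengthened invariant of your construction (each case visibly preserves it; alternatively one can rename the $P$-block to fresh variables, since only $|V_2|$ matters), so it should be stated as part of the induction hypothesis.
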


\begin{lemma}\label{lem:normmult}
  There exists an algorithm that computes, for a given $+$-free $\sh\PP$-formula $\phi$, a logically equivalent $\sh\PP$-formula $\phi'=\psi_1\times \psi_2$ where $\psi_1$ is constant and $\psi_2$ is basic such that $\width(\phi')\le \width(\phi)$.
\end{lemma}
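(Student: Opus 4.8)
The plan is to induct on the structure of the $+$-free $\sh\PP$-formula $\phi$, proving the stronger statement that $\phi$ can be transformed into a logically equivalent formula $\psi_1 \times \psi_2$ with $\psi_1$ constant, $\psi_2$ basic, $\free(\psi_1) = \free(\psi_2) = \free(\phi)$, $\closed(\psi_1) \cap \closed(\psi_2) = \emptyset$, and $\width(\psi_1 \times \psi_2) \leq \width(\phi)$. Throughout I will freely rename the variables bound by $P$ and $E$ so that the various $\closed$-sets in sight are pairwise disjoint; this preserves logical equivalence and all width measures. I will also use that, semantically, $\times$ is pointwise integer multiplication and is hence commutative and associative. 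For the base cases: when $\phi = C(\psi, L)$ with $\psi$ a pp-formula, $\phi$ is itself basic, so I take $\psi_2 = \phi$ and $\psi_1 = E L\, 1$, which is constant, has free variables $L = \free(\phi)$, value $1$, and width $|L| \leq \width(\phi)$; when $\phi = n$ with $n \in \Z$, I take $\psi_1 = n$ and $\psi_2 = C(\top, \emptyset)$, which has value $1$, so both factors have empty free-variable set and width $0 = \width(\phi)$.

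The inductive cases for $\times$ and $E$ are easy rearrangements. For $\phi = \phi_a \times \phi_b$: apply the induction hypothesis to each factor to get $\phi_a \equiv \psi_1^a \times \psi_2^a$ and $\phi_b \equiv \psi_1^b \times \psi_2^b$; by commutativity and associativity of $\times$, $\phi \equiv (\psi_1^a \times \psi_1^b) \times (\psi_2^a \times \psi_2^b)$, where the first factor is constant and the second basic (both classes are closed under $\times$), all four subformulas have free-variable set $\free(\phi)$, and since $\width(\phi) = \max(\width(\phi_a), \width(\phi_b))$ and every new $\times$-node has free-variable set $\free(\phi)$ of size at most $\width(\phi)$, the width does not grow. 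For $\phi = E V\, \phi_0$: the semantics shows that $E V$ distributes over $\times$, so $\phi \equiv (E V\, \psi_1^0) \times (E V\, \psi_2^0)$, with the first factor constant and the second basic; here $\width(\phi) = \max(\width(\phi_0), |\free(\phi_0)| + |V|)$, which dominates the width of the result.

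The main obstacle is the $P$-case, since $P V$ does \emph{not} distribute over $\times$. The key fact is that for any constant $\sh\PP$-formula $\chi$ the value $[\relb, \chi](h)$ is independent of $h$, which one checks by a short induction (projection over a constant just multiplies by a fixed power of $|B|$). Writing $\phi = P V\, \phi_0$ with $\phi_0 \equiv \psi_1^0 \times \psi_2^0$ from the induction hypothesis, and letting $c_{\relb}$ be the assignment-independent value of $\psi_1^0$ on $\relb$, the semantics of $P V$ gives $[\relb, P V(\psi_1^0 \times \psi_2^0)](h) = c_{\relb} \cdot [\relb, P V\, \psi_2^0](h)$. It thus suffices to build a constant formula $\psi_1'$ with $\free(\psi_1') = \free(\phi) = \free(\phi_0) \setminus V$ and value $c_{\relb}$ on every $\relb$: apply Lemma~\ref{lem:normconst} to put $\psi_1^0$ in the form $E(\free(\phi_0))\, P W\, n$, so that $c_{\relb} = |B|^{|W|} n$, and then set $\psi_1' = E(\free(\phi_0) \setminus V)\, P W'\, n$ for a fresh copy $W'$ of $W$. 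Then $\phi \equiv \psi_1' \times (P V\, \psi_2^0)$ has the required form, and the width does not increase since $\width(P V\, \psi_2^0) = \width(\psi_2^0) \leq \width(\phi_0) = \width(\phi)$, $\width(\psi_1') = |\free(\phi)| \leq \width(\phi)$, and the top $\times$-node has free-variable set $\free(\phi)$. Performing this recursion over the syntax tree of $\phi$ is the desired algorithm; besides the $P$-case, the only point that needs care is the systematic renaming of bound variables to keep all the products well-formed.
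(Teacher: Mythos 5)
Your proof is correct and takes essentially the same route as the paper's: a structural induction in which the only nontrivial step is the $P$-quantifier case, handled by observing that a constant factor has an assignment-independent value which can be pulled out of the sum and re-expressed, via Lemma~\ref{lem:normconst}, as a constant formula with the right free-variable set. The extra bookkeeping you supply (the strengthened induction hypothesis on free variables, renaming of closed variables, and the explicit base, $\times$ and $E$ cases) is exactly what the paper's sketch leaves implicit.
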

\begin{proof} The proof is again straightforward induction in the style of Lemma~\ref{lem:normaddition}. We consider only the the case of $P$-quantifiers which is the only case that is not completely clear from the definition.

So let $\phi = PV \phi'$ where $\phi'=\phi_1\times \phi_2$ such that $\phi_1$ is constant and $\phi_2$ is basic. Note that by Lemma~\ref{lem:normconst} we may assume that $\phi_1= PV_1 EV_2 n$ for some $n\in \mathbb{Z}$. We claim that $\phi$ is logically equivalent to $\phi_1' \times PV \phi_2$ where $\phi_1' = PV_1 E(V_2\setminus V)$. 
To see this, consider a structure $\relb$ 
and an assignment $h$ to $\phi$. Then
$[\relb, PV \phi'](h) = \sum_{h'} [\relb, \phi_1\times \phi_2](h')
 = \sum_{h'} ([\relb, \phi_1](h')\cdot [\relb, \phi_2](h'))$
where the $h'$ are as in the definition. Now for an arbitrary assignment $h''$ to $\phi'$, $[\relb, \phi_1](h'')= [\relb, \phi_1](h')= [\relb, \phi_1'](h)$ for all $h'$. 
\licspods{Consequently, we have
\begin{align*}
 [\relb, PV \phi'](h) &= [\relb, \phi_1](h'')\cdot\sum_{h'}  [\relb, \phi_2](h')\\
 &= [\relb, \phi_1'](h)\cdot\sum_{h'}  [\relb, \phi_2](h')\\
 &= [\relb, \phi_1'](h)\cdot[\relb, PV\phi_2](h)\\
 &= [\relb, \phi_1'\times PV\phi_2](h).
 \end{align*}
}{
Consequently, we have
$[\relb, PV \phi'](h) 
= [\relb, \phi_1](h'')\cdot\sum_{h'}  [\relb, \phi_2](h')\\
 = [\relb, \phi_1'](h)\cdot\sum_{h'}  [\relb, \phi_2](h')
 = [\relb, \phi_1'](h)\cdot[\relb, PV\phi_2](h)\\
 = [\relb, \phi_1'\times PV\phi_2](h).$
}
\end{proof}


\licspods{
\begin{proof} (of Lemma~\ref{lem:flattening})
}{
\begin{proof}[of Lemma~\ref{lem:flattening}]
}
 First use Lemma~\ref{lem:normaddition} to turn $\phi$ into a sum of $+$-free $\sh\PP$-formulas. After this, apply 
 Lemma~\ref{lem:normmult}
to each of the summands.
\end{proof}

\section{Independence theorem}
\label{sect:independence-theorem}


In this section, we establish a key tool for reasoning
 about $\sh\EP$-formulas.

Define two fo-formulas 
$\phi(V)$, $\phi'(V')$
to be \emph{counting equivalent} 
if, for each structure $\relb$, it holds that
$|\phi(\relb)| = |\phi'(\relb)|$;
note that a decidable characterization
of counting equivalence on pp-formulas
is known (\cite[Theorem 5.4]{ChenMengel16-pods-ep-counting}).

In the scope of this article, define a \emph{linear combination}
to be an expression of the form
$\sum_{i=1}^m c_i |\phi_i(V_i)|$,
where $m \geq 1$, 
each $c_i$ is a non-zero rational number, 
and the $\phi_i(V_i)$
are pp-formulas that are pairwise not counting equivalent.
Here, the notation $|\phi_i(V_i)|$ 
is intended to indicate the function 
that maps each finite structure $\reld$
to the value $|\phi_i(\reld)|$.
So, each linear combination $\ell$ naturally induces a mapping
$\ell(\cdot)$
from finite structures to $\Q$,
namely, the map given by
$\ell(\reld) = \sum_{i=1}^m c_i |\phi_i(\reld)|$.
The following theorem will be key for our understanding of
equivalence of $\sh\EP$-formulas.

\begin{theorem} (Independence theorem)
\label{thm:linearcombination}
For any linear combination $\ell$,
there exists a finite structure $\reld$ such that
$\ell(\reld) \neq 0$.
\end{theorem}

We devote the rest of this section to proving this theorem.
By multiplying all values $c_i$ by a multiple of their denominators,
we can and will assume that each value $c_i$ is an integer.

In order to establish this theorem, 
we will make use of
the following lemma.

\begin{lemma}
\label{lemma:commutativity} 
(Lov\'{a}sz \cite{Lovasz67-operations-structures})
For
each univariate polynomial $p$ with positive integer coefficients
and each finite structure $\relb$, 
there exists a finite structure $p(\relb)$
such that,
for each connected liberal pp-formula $\phi$,
it holds that
$|\phi(p(\relb))| = p(|\phi(\relb)|)$.
\end{lemma}

The following lemma shows that
certain pp-formulas can be controlled independently of each other.

\begin{lemma} 
\label{lemma:orthogonality}
Let $\phi_1(S_1), \ldots, \phi_n(S_n)$ be connected liberal pp-formulas
(over signature $\tau$)
that are pairwise not counting equivalent.
Then for every $m \geq 2$, there exist structures
$(\relb_{(a_1, \ldots, a_n)} ~|~ (a_1, \ldots, a_n) \in [m]^n)$
and injective functions
$f_1, \ldots, f_n: [m] \to \N$ 
such that for each $(a_1, \ldots, a_n) \in [m]^n$ and 
each $i \in [n]$,
it holds that $|\phi_i(\relb_{(a_1, \ldots, a_n)})| = f_i(a_i)$.

Moreover, when $\rela$ is any structure on which
$|\phi_i(\rela)| > 0$ for each $i \in [n]$, 
each structure
$\relb_{(a_1, \ldots, a_n)}$ can be chosen 
in the form
$\rela \times \rele$,
where $\rele$ is a structure such that 
$|\phi(\rele)| > 0$ for all pp-formulas $\phi$ over $\tau$.
\end{lemma}

\begin{proof} 
As each $\phi_i(S_i)$ is connected and liberal,
the result 
\cite[Theorem 5.14]{ChenMengel16-pods-ep-counting}
ensures
that there exists a structure $\relc'$ such that the values
$|\phi_i(\relc')|$ are pairwise different,
and where $|\phi(\relc')| > 0$ for all pp-formulas $\phi$.
By taking a sufficiently large power $P$ of $\relc'$, 
we may obtain that 
for the structure $\relc = \relc'^P \times \rela$,
the values
$c_i = |\phi_i(\relc)|$ are pairwise different.
For each $(a_1, \ldots, a_n) \in [m]^n$,
define $p_{(a_1, \ldots, a_n)}$ to be a univariate polynomial 
over the rationals that
evaluates to $0$ at $0$, and to $a_i$ at $c_i$ (for each $i \in [n]$).
Define $D$ to be the absolute value of the product of all denominators
of coefficients in the defined polynomials.
Set 
$p'_{(a_1, \ldots, a_n)} = D \cdot 
p_{(a_1, \ldots, a_n)}$; each such polynomial has 
integer coefficients.
Next, set
$p^{-}_{(a_1, \ldots, a_n)}$ 
to be the restriction of
$p'_{(a_1, \ldots, a_n)}$ to summands with negative coefficients.
Define
$p''_{(a_1, \ldots, a_n)}$ 
to be
$p'_{(a_1, \ldots, a_n)} + 
2 \sum_{(a_1, \ldots, a_n) \in [m]^n} 
(-p^{-}_{(a_1, \ldots, a_n)})$.
Now, for each $(a_1, \ldots, a_n) \in [m]^n$,
define the structure $\relb_{(a_1, \ldots, a_n)}$
as $p''_{(a_1, \ldots, a_n)}(\relc)$;
for each $i \in [n]$,
we have
$$|\phi_i(\relb_{(a_1, \ldots, a_n)})| = 
 |\phi_i(p''_{(a_1, \ldots,a _n)}(\relc))| =
 p''_{(a_1, \ldots, a_n)}(c_i);$$ 
 the second equality here holds
 by the Lemma~\ref{lemma:commutativity}.  
 From these equalities and
 the definitions of 
 $p_{(a_1, \ldots, a_n)}$
 and
 $p''_{(a_1, \ldots, a_n)}$,
 it is straightforward to verify that the defined structures
 have the desired property.
Our claim concerning each $B_{(a_1, \ldots, a_n)}$
having the form $\rela \times \rele$ holds,
 as $p''_{(a_1, \ldots, a_n)}(0) = 0$ 
 (for each $(a_1, \ldots, a_n) \in [m]^n$),
 implying that the structures $\relb_{(a_1, \ldots, a_n)}$ provided
 can be obtained in the form $\relc \times \cdot$, which
 has the form $\rela \times \cdot$.
\end{proof}


\newcommand{\llb}{\llbracket}
\newcommand{\rrb}{\rrbracket}

We now introduce a highly useful notion, that of
\emph{component polynomial}.
Fix a set $V$ of liberal variables.
Denote by $\calE$ the set of counting equivalence classes of
liberal connected pp-formulas (with liberal variables from $V$).
A \emph{component polynomial} $q$ is a multivariate polynomial
with integer coefficients over variables
$\{ X_e ~|~ e \in \calE \}$.
For  any finite structure $\relb$,
we define the value of $q$ evaluated on $\relb$,
denoted by $q \llb \relb \rrb$,
as the integer value obtained by evaluating $q$
when each $X_e$ is given the value $|\phi_e(\relb)|$,
for a formula $\phi_e \in e$.
The following is our main theorem on component polynomials.

\begin{theorem}
\label{thm:component-polynomials}
When $q$ is a non-zero component polynomial,
there exists a finite structure $\relb$ such that
$q\llb \relb \rrb \neq 0$.
Moreover, when
$\phi_1(S_1), \ldots, \phi_n(S_n)$ are representatives
of the equivalence classes $e_1, \ldots, e_n \in \calE$ 
whose corresponding variables $X_{e_i}$ appears in $q$,
the structure $\relb$ may be picked as
a structure of the form provided by Lemma~\ref{lemma:orthogonality}.
\end{theorem}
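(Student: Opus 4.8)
The plan is to reduce the multivariate statement to the univariate control provided by Lemma~\ref{lemma:orthogonality}. Write the nonzero component polynomial as $q = \sum_{\mathbf{d}} c_{\mathbf{d}} \prod_{i=1}^n X_{e_i}^{d_i}$, where the sum is over a finite set of exponent vectors $\mathbf{d} = (d_1, \ldots, d_n) \in \N^n$, each $c_{\mathbf{d}}$ is a nonzero integer, and $\phi_1(S_1), \ldots, \phi_n(S_n)$ are representatives of the equivalence classes $e_1, \ldots, e_n$ whose variables actually occur in $q$. Since these representatives lie in distinct counting-equivalence classes, they are pairwise not counting equivalent, so Lemma~\ref{lemma:orthogonality} applies to them. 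First I would fix an integer $m$ large enough to exceed every exponent appearing in $q$ in each coordinate (say $m \geq 2$ and $m > \max_i \max_{\mathbf{d}} d_i$), invoke Lemma~\ref{lemma:orthogonality} to obtain structures $(\relb_{\mathbf{a}} \mid \mathbf{a} \in [m]^n)$ and injections $f_1, \ldots, f_n \colon [m] \to \N$ with $|\phi_i(\relb_{\mathbf{a}})| = f_i(a_i)$, and then consider the function $F \colon [m]^n \to \Z$ defined by $F(\mathbf{a}) = q\llb \relb_{\mathbf{a}} \rrb = \sum_{\mathbf{d}} c_{\mathbf{d}} \prod_{i=1}^n f_i(a_i)^{d_i}$.

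The key step is to argue that $F$ is not identically zero on $[m]^n$, for then some $\relb_{\mathbf{a}}$ witnesses $q\llb \relb_{\mathbf{a}} \rrb \neq 0$, and this $\relb_{\mathbf{a}}$ already has the form guaranteed by Lemma~\ref{lemma:orthogonality}, giving the ``moreover'' clause for free. To see $F \not\equiv 0$, consider the multivariate polynomial $Q(Y_1, \ldots, Y_n) = \sum_{\mathbf{d}} c_{\mathbf{d}} \prod_i Y_i^{d_i}$ over $\Q$; this is exactly $q$ with $X_{e_i}$ renamed to $Y_i$, hence nonzero as a polynomial. Because each $f_i$ is injective, the $n$ sets $\{f_i(1), \ldots, f_i(m)\}$ each have exactly $m$ elements, and $m$ exceeds the degree of $Q$ in each variable $Y_i$. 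A nonzero polynomial of degree less than $m$ in each of $n$ variables cannot vanish on a product grid $S_1 \times \cdots \times S_n$ with $|S_i| = m$ for all $i$ — this is the standard combinatorial nullstellensatz / multivariate Schwartz–Zippel fact, provable by a routine induction on $n$ peeling off one variable at a time. Therefore $Q$ does not vanish on $\prod_i f_i([m])$, i.e.\ there is $\mathbf{a} \in [m]^n$ with $F(\mathbf{a}) = Q(f_1(a_1), \ldots, f_n(a_n)) \neq 0$.

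Two small points need care. First, Lemma~\ref{lemma:orthogonality} produces $f_i$ mapping into $\N$; I should note that values $|\phi_i(\relb)|$ are nonnegative integers and that evaluating $q$ on $\relb$ is literally substituting these integers for the $X_{e_i}$, so $q\llb\relb_{\mathbf{a}}\rrb$ and $F(\mathbf{a})$ coincide — this follows directly from the definition of $q\llb\relb\rrb$ and the fact that counting-equivalent formulas give equal counts, so the value is independent of the chosen representatives. Second, I need $m$ strictly larger than each individual exponent $d_i$, not merely the total degree; taking $m = \max(2, 1 + \max_{\mathbf{d}, i} d_i)$ suffices. The main obstacle is invoking Lemma~\ref{lemma:orthogonality} correctly to simultaneously control all $n$ formulas on a combinatorial product of structures while keeping the ambient structure of the prescribed form $\rela \times \rele$; the rest is the elementary nullstellensatz argument, which I would state as a one-line induction rather than belabor. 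Once the witness $\relb_{\mathbf{a}}$ is produced, the proof of Theorem~\ref{thm:component-polynomials} — and hence, in the following section, of the Independence Theorem~\ref{thm:linearcombination} — is complete.
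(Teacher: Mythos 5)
Your proposal is correct and follows essentially the same route as the paper: choose $m$ exceeding the degree of $q$ in each variable $X_{e_i}$, apply Lemma~\ref{lemma:orthogonality} to the pairwise non-counting-equivalent representatives to realize the grid $f_1([m])\times\cdots\times f_n([m])$ as values of the $|\phi_i(\cdot)|$ on structures of the prescribed form, and conclude nonvanishing on the grid via the standard fact that a nonzero polynomial of degree less than $m$ in each variable cannot vanish on an $m\times\cdots\times m$ product set (the paper cites this as Proposition~\ref{prop:non-zero-multivariate-polynomial} rather than reproving it by induction). The only difference is presentational.
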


In order to establish this theorem,
we will make use of the following known fact 
concerning multivariate polynomials;
see for example~\cite[Lemma 2.1]{Alon99-nullstellensatz} for a proof.

\begin{prop} 
\label{prop:non-zero-multivariate-polynomial}
Let $p(x_1, \ldots, x_n)$ be a multivariate polynomial
in $n$ variables over a field $F$.
For each $i \in [n]$, let $d_i$ denote the degree of $p$
as a polynomial in $x_i$,
and suppose that $T_i \subseteq F$ is a set of size $d_i+1$ or greater.
Then, if $p$ is not the zero polynomial,
there exists a point $(t_1, \ldots, t_n) \in T_1 \times \cdots \times T_n$
such that $p(t_1, \ldots, t_n) \neq 0$.
\end{prop}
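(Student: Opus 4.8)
The plan is to prove the statement by induction on the number $n$ of variables, the only real ingredient being the one-variable case: a nonzero polynomial of degree $d$ over a field has at most $d$ roots in that field, so it cannot vanish on every element of a set of size $d+1$ or larger. For $n = 1$ this is exactly the claim, since $|T_1| \geq d_1 + 1$ forces the existence of some $t_1 \in T_1$ with $p(t_1) \neq 0$.

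For the inductive step with $n \geq 2$, I would regard $p$ as a polynomial in $x_n$ over the coefficient ring $F[x_1, \ldots, x_{n-1}]$, writing $p = \sum_{j=0}^{d_n} c_j(x_1, \ldots, x_{n-1}) \, x_n^{\,j}$. Since $p$ is not the zero polynomial, some coefficient $c_{j_0}$ is a nonzero polynomial in $x_1, \ldots, x_{n-1}$. Each monomial of $c_{j_0}$, multiplied by $x_n^{\,j_0}$, is a monomial of $p$, so $c_{j_0}$ has degree at most $d_i$ in each variable $x_i$ with $i < n$. Hence the sets $T_1, \ldots, T_{n-1}$, having sizes at least $d_1 + 1, \ldots, d_{n-1} + 1$, meet the hypotheses of the $(n-1)$-variable statement for $c_{j_0}$, and the induction hypothesis yields a point $(t_1, \ldots, t_{n-1}) \in T_1 \times \cdots \times T_{n-1}$ with $c_{j_0}(t_1, \ldots, t_{n-1}) \neq 0$.

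It then remains to choose the last coordinate. Substituting the values just found produces the univariate polynomial $g(x_n) = p(t_1, \ldots, t_{n-1}, x_n)$, whose coefficient of $x_n^{\,j_0}$ is $c_{j_0}(t_1, \ldots, t_{n-1}) \neq 0$; thus $g$ is nonzero and has degree at most $d_n$, so the one-variable case applied to $g$ with $T_n$ (of size at least $d_n + 1$) supplies $t_n \in T_n$ with $g(t_n) = p(t_1, \ldots, t_n) \neq 0$, completing the induction. The only point requiring care is the degree bound $\deg_{x_i} c_{j_0} \leq d_i$ that licenses applying the induction hypothesis to $c_{j_0}$ with the original sets $T_i$; everything else is routine. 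Alternatively one can simply invoke the Combinatorial Nullstellensatz of Alon~\cite{Alon99-nullstellensatz}, of which this is a standard consequence, but the direct induction above is self-contained.
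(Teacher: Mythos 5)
Your proof is correct, and it is essentially the argument the paper relies on: the paper gives no proof of its own but cites Lemma~2.1 of Alon's Combinatorial Nullstellensatz paper, whose proof is exactly this induction on the number of variables. Your minor variation of selecting an arbitrary nonzero coefficient $c_{j_0}$ (rather than the leading coefficient in $x_n$) is fine, since it still yields a nonzero univariate polynomial of degree at most $d_n$ in the last step, and your degree bound $\deg_{x_i} c_{j_0} \leq d_i$ is justified correctly.
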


\begin{proof} (Theorem~\ref{thm:component-polynomials})
Let $\phi_1(S_1), \ldots, \phi_n(S_n)$ be as described in the theorem statement.
Let $m \geq 2$ be a value that exceeds the degree of 
each of the variables $X_{e_1}, \ldots, X_{e_n}$ in $q$, and 
apply Lemma~\ref{lemma:orthogonality}
to obtain structures 
$(\relb_{(a_1, \ldots, a_n)} ~|~ (a_1, \ldots, a_n) \in [m]^n)$
 and the corresponding functions
$f_1, \ldots, f_n: [m] \to \N$.
Evaluating $q$ on these structures amounts 
to evaluating $q$ when the variables
$(X_{e_1}, \ldots, X_{e_n})$ are given values in
$f_1([m]) \times \cdots \times f_n([m])$.
By Proposition~\ref{prop:non-zero-multivariate-polynomial},
$q$ must evaluate to a non-zero value on one of these structures.
\end{proof}


\begin{proof} (Theorem~\ref{thm:linearcombination})
Denote $\ell$ by $\sum_{i=1}^m c_i |\phi_i(V_i)|$
and let $(\rela_1, V_1), \ldots, (\rela_m, V_m)$ be the pairs
corresponding to the formulas $\phi_1(V_1), \ldots, \phi_m(V_m)$.
By rearranging the indices, we may assume for the sake of notation
that $\rela_1, \ldots, \rela_k$ are homomorphically equivalent
structures (where $k \in [m]$)
and that for no $i$ with $k < i \leq m$ does $\rela_i$ have
a homomorphism to $\rela_1$.

For any structure $\relb$, it holds that one of the values
$|\phi_1(\relb)|, \ldots, |\phi_k(\relb)|$
is non-zero if and only if all of them are.
Now, for each $i$, 
define $\widehat{\phi_i}(V_i)$ from $\phi_i(V_i)$
by removing non-liberal components,
that is, by removing each atom whose variables are all 
in a non-liberal component.
Note that for every $i \in [k]$ and for every structure $\relb$ 
 such that $|\phi_i(\relb)|>0$, we have that $|\widehat{\phi_i}(\relb)|=|\phi_i(\relb)|$. Since the $\phi_i$ are pairwise not counting equivalent, it follows that the $\widehat{\phi_i}$ are pairwise not counting equivalent.
For each formula $\widehat{\phi_i}(V_i)$, 
by considering its liberal connected components,
we may define $r_i$ to be a component polynomial
which is a product of variables from $\{ X_e ~|~ e \in \calE \}$
such that $|\widehat{\phi_i}(\relb)| = r_i\llb \relb \rrb$
for all finite structures $\relb$.
The products $r_1, \ldots, r_k$ are pairwise distinct,
so $r = c_1 r_1 + \cdots + c_k r_k$ is a non-zero component polynomial.
By applying Lemma~\ref{lemma:orthogonality} with $\rela = \rela_1$
and then invoking Theorem~\ref{thm:component-polynomials},
we obtain a finite structure $\reld$ of the form $\rela \times \relc$
such that $r \llb \reld \rrb \neq 0$. Since there is by assumption a homomorphism from $\rela_i$ to $\rela$ and there is by Theorem 5.4 a homomorphism from $\rela_i$ to $\relc$, we have $|\phi_i(\reld)| > 0$ for every $i\in [k]$. Consequently, $|\phi_i(\reld)| = |\widehat{\phi_i}(\relb)|$ by the observation from above.
Since no structure $\rela_i$ with $k < i \leq m$ maps homomorphically
to $\rela$, we have
$|\phi_{k+1}(\reld)| = \cdots = |\phi_{m}(\reld)| = 0$
and hence $\ell(\reld) = r \llb \reld \rrb \neq 0$.
\end{proof}

\section{Proofs of main theorems}
\label{sect:proofs-of-main-theorems}


The following lemma shows that if we are only interested in $\sh\PP$-representations of pp-formulas, we may restrict ourselves to basic $\sh\PP$-formulas.

\begin{lemma}\label{lem:basicisenoughsketch}
 Let $\phi$ be a pp-formula and let $\phi'$ be a $\sh\PP$-representation of $\phi$. Then there is a basic $\sh\PP$-sentence $\phi''$ that is also a $\sh\PP$-representation of $\phi$ such that $\width(\phi'')\le \width(\phi')$ and $\shwidth(\phi'')\le \shwidth(\phi')$.
\end{lemma}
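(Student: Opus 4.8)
The goal is to show that an arbitrary $\sh\PP$-representation $\phi'$ of a pp-formula $\phi$ can be replaced by a \emph{basic} one $\phi''$ without increasing either width measure. The plan is to proceed by structural induction on $\phi'$, showing how to eliminate both the addition connective $+$ and the integer constants $n$. The key observation is that, since $\phi'$ represents the pp-formula $\phi$ and pp-formulas count homomorphisms, the value $[\relb,\phi']$ is always a ``product-like'' nonnegative quantity; more generally, one expects that a $\sh\PP$-formula, once we push sums and constants up through the formula tree, decomposes as a $\Z$-linear combination of basic $\sh\PP$-formulas. So first I would prove the intermediate claim that every $\sh\PP$-formula $\psi$ is logically equivalent to a finite sum $\sum_{j} c_j \psi_j$ where each $c_j \in \Z$ and each $\psi_j$ is basic (or a constant), with $\width(\psi_j) \le \width(\psi)$ and $\shwidth(\psi_j)\le\shwidth(\psi)$ for all $j$. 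This is shown by induction: the cases $C(\phi,L)$ and $n$ are immediate; for $PV$, $EV$ and $+$ the sum structure is preserved by linearity of the $P$, $E$ and $+$ semantics (distributing $PV$ and $EV$ over a sum, and concatenating two sums); the only genuinely interesting case is $\times$, where one must expand the product of two sums $\big(\sum_j c_j\psi_j\big)\times\big(\sum_k d_k\psi_k'\big)$ into $\sum_{j,k} c_jd_k(\psi_j\times\psi_k')$, using that $\times$ is defined pointwise and distributes over $+$, and check that each $\psi_j\times\psi_k'$ is again a well-formed $\sh\PP$-formula with the free/closed variable conditions intact and with width/$\sh$-width not exceeding that of the original product.

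Having this normal form, I would apply it to $\phi'$ itself to obtain $[\relb,\phi'] = \sum_j c_j [\relb,\psi_j]$ for all finite $\relb$, where each $\psi_j$ is a basic $\sh\PP$-sentence or a constant. Since $\phi'$ represents $\phi$, we have $\sum_j c_j [\relb,\psi_j] = |\phi(\relb)|$ for every finite structure $\relb$. Now I would invoke the Independence Theorem (Theorem~\ref{thm:linearcombination}): each basic $\sh\PP$-sentence $\psi_j$ is itself a representation of some pp-formula (this follows from the third bullet of Section~\ref{sect:quantifier-aware-width}, that basic $\sh\PP$-formulas correspond to pp-formulas), say $|\psi_j| = |\chi_j(\relb)|$, so the identity says that $\sum_j c_j|\chi_j(\relb)| - |\phi(\relb)| = 0$ identically. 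Grouping the $\chi_j$ (and $\phi$) by counting-equivalence classes and cancelling, the fact that this linear combination vanishes on every structure forces, by Theorem~\ref{thm:linearcombination}, that the nonzero coefficients must cancel within each class; in particular exactly one counting-equivalence class survives, namely that of $\phi$, with total coefficient $1$. Hence there is some index $j_0$ with $\chi_{j_0}$ counting equivalent to $\phi$, i.e. $[\relb,\psi_{j_0}] = |\phi(\relb)|$ for all $\relb$, and we set $\phi'' = \psi_{j_0}$. By construction $\phi''$ is basic, it represents $\phi$, and $\width(\phi'')\le\width(\phi')$ and $\shwidth(\phi'')\le\shwidth(\phi')$, as required.

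\textbf{Main obstacle.} I expect the crux to be the bookkeeping in the $\times$ case of the normal-form induction: one has to verify that when expanding a product of sums the distributed products $\psi_j \times \psi_k'$ satisfy the syntactic side conditions of $\sh\PP$-formulas — in particular $\free(\psi_j)=\free(\psi_k')$ (which holds because all summands in a normal form of $\psi$ share $\psi$'s free-variable set) and $\closed(\psi_j)\cap\closed(\psi_k')=\emptyset$ — and that the width and $\sh$-width of each summand is controlled; some care is needed because the normal-form summands may have smaller closed-variable sets than the original factors, which is harmless for the disjointness condition but must be tracked. A secondary subtlety is the interplay with constant summands $n$: a product with a constant summand yields a scalar multiple of a basic formula (or a constant), which is fine, but one should handle the degenerate case where $\phi'$ reduces entirely to a constant — impossible here since $|\phi(\relb)|$ is non-constant for any pp-formula with at least one variable, and the trivial pp-formula is handled directly. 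The use of the Independence Theorem to extract the single surviving summand is then essentially immediate once the normal form and the basic-formula/pp-formula correspondence are in place.
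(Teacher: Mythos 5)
Your overall route is the same as the paper's: normalize $\phi'$ into a linear combination of counts of pp-formulas, observe that this linear combination also computes $|\phi(\relb)|$, invoke the Independence Theorem (Theorem~\ref{thm:linearcombination}) to conclude that after grouping counting-equivalent summands only one class survives, with total coefficient $1$ and counting equivalent to $\phi$, and read off a basic representation with the width bounds. The one step where you genuinely deviate is the last one: the paper converts the surviving pp-formula back into a basic $\sh\PP$-sentence via the pp/basic correspondence (Lemma~\ref{lem:shppandpp}(b), i.e.\ via quantifier-aware tree decompositions), whereas you take the basic summand $\psi_{j_0}$ of your normal form directly. That shortcut is legitimate, but it puts all of the weight on your intermediate claim that every summand of the normal form is basic, has an integer coefficient, and has width and $\sh$-width bounded by those of $\phi'$.

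That intermediate claim, as you state it, is where the gap is. In $\sh\PP$-formulas the ``constant'' material is not just an integer scalar: a constant subformula such as $P\{x\}\,1$ is a $\sh\PP$-formula of width $0$ whose value on $\relb$ is $|B|$, and in general the constant parts normalize to the form $EV_1 PV_2\, n$ with value $n\cdot|B|^{|V_2|}$ --- this is exactly why the paper's flat normal form is ``constant times basic'' rather than ``integer times basic''. So the claim ``every $\sh\PP$-formula is equivalent to $\sum_j c_j\psi_j$ with $c_j\in\Z$, $\psi_j$ basic, and $\width(\psi_j)\le\width(\phi')$'' is false as stated: $P\{x\}\,1$ has width $0$, but no integer combination of width-$0$ basic sentences can have value $|B|$ on every structure (such sentences only take values in $\{0,1\}$ before scaling, so the combination is bounded). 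To repair your argument you must absorb the $|B|^{|V_2|}$ factors, either into the pp-formula side (conjoin $|V_2|$ isolated liberal variables before applying the Independence Theorem, which is what the paper implicitly does) or into the basic summand itself (multiply by $|V_2|$ factors $P\{w\}C(\top,\{w\})$, each of width and $\sh$-width $1$, with pairwise disjoint closed sets); either way the per-summand bound becomes $\max(\width(\phi'),1)$ rather than $\width(\phi')$, which only matters in the degenerate width-$0$ situation --- a wrinkle the paper's own bookkeeping shares, since its proof of this lemma also asserts integer coefficients $c_i$ right after flattening. With that correction, your distributivity argument for $\times$, the grouping step, and the extraction of $\psi_{j_0}$ all go through, and the rest of your proof coincides with the paper's.
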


\begin{proof} (sketch) 
Using the normalization result given by Lemma~\ref{lem:flattening}
and the results of Section~\ref{sect:quantifier-aware-width}, 
we can find  numbers $c'_i\in \mathbb{Z}$ and pp-formulas $\phi_i''$  such that for all structures $\relb$ 
\begin{align}\label{eq:linearcombsketch}[\relb, \phi'] = \sum_{i=1}^{\ell'} c_i' |\phi_i''(\relb)|,\end{align} where the $\phi_i''$ are pairwise not counting equivalent and have $\qaw(\phi_i'')\le \width(\phi')$. 
Note that (\ref{eq:linearcombsketch}) is a linear combination.

Now note that $[\relb, \phi']=|\phi(\relb)|$, which gives another linear combination. If follows with Theorem~\ref{thm:linearcombination} that 
(\ref{eq:linearcombsketch}) consists only of one summand with coefficient $1$. Now applying the results of Section~\ref{sect:quantifier-aware-width} yields the result.
\end{proof}

\licspods{
\begin{proof} (of Theorem~\ref{thm:shppequi} and Theorem~\ref{thm:shppcompute}; sketch) 
}{
\begin{proof}[of Theorem~\ref{thm:shppequi} and Theorem~\ref{thm:shppcompute}] 

\hspace{1pt}
\\
(sketch)
}
Observe that, by Lemma~\ref{lem:basicisenoughsketch}, we may  restrict attention to basic $\sh\PP$-sentence. Then both statements can be shown with the results of Section~\ref{sect:quantifier-aware-width} in a rather straightforward fashion.
\end{proof}



\licspods{
\begin{proof} (of Theorem~\ref{thm:shepequi}; sketch)
}{
\begin{proof}[of Theorem~\ref{thm:shepequi}](sketch)
}
 Let first $\Phi^+$ satisfy the tractability condition. Then there is a constant $k$ such that for all cores $\psi$ of pp-formulas in $\Phi^+$ we have $\tw(\psi)\le k$ and $\tw(\contract(\psi))+1 \le k$. 
 With the results of Section~\ref{sect:quantifier-aware-width}
 it follows that $\qaw(\psi)\le 2k$ for such cores $\psi$.
As discussed prior to the statement of Theorem~\ref{thm:shepequi},
we can express, for each $\phi\in \Phi$,
the function $|\phi(\cdot)|$ as a polynomial in
unknowns $|\psi(\cdot)|$ with $\psi \in \Phi^+$~\cite{ChenMengel16-pods-ep-counting}.
We can then substitute every term $|\psi'(\cdot)|$ by a $\sh\PP$-formula of width at most $2k$ with the results of Section~\ref{sect:quantifier-aware-width}. This yields a $\sh\PP$-formula of width at most $2k$.
 
For the other direction, assume the existence of a constant $k$ such that each ep-formula in $\Phi$ has a $\sh\EP$-representation $\phi'$ with 
$\width(\phi')\le k$. Let $\psi$ be a pp-formula from $\Phi^+$. We will show that $\psi$ is equivalent to a pp-formula of quantifier-aware width at most $k$ which completes the proof with Section~\ref{sect:quantifier-aware-width}. 

We first choose $\phi\in \Phi$ 
that witnesses $\psi \in \Phi^+$.
Let $\phi'$ be the $\sh\ep$-representation of $\phi$ of width at most $k$. We first construct a linear combination $\ell_1(.)$ as in (\ref{eq:linearcombsketch}). Note that all summands of $\ell_1(.)$ have quantfier-aware width at most $k$ as in the proof of Lemma~\ref{lem:basicisenoughsketch}.
 
We then construct a second representation of $|\phi(\relb)|$ as a linear combination $\ell_2(.)$. As in the first direction, for every structure we can express 
 $|\phi(\relb)|$ as a polynomial in unknowns of the form $|\theta(\relb)|$. Arithmetic simplifications and elimination of counting equivalent terms gives the second linear combination $\ell_2(.)$ computing $|\phi(\relb)|$. We then argue that a term of the form $c\cdot |\psi(\relb)|$ must appear in $\ell_2(.)$.
 
We have that $\ell_1$ and $\ell_2$ compute $|\phi(\relb)|$ and are hence equal.
 With Theorem~\ref{thm:linearcombination} it follows that $\ell_1$ contains a term that is counting equivalent to $\psi$. 
 Since all summands of $\ell_1(.)$ have quantifier-aware width at most $k$, the claim follows.
\end{proof}

\licspods{
	\begin{proof} (of Theorem~\ref{thm:shepcompute}; sketch)
}{
\begin{proof}[of Theorem~\ref{thm:shepcompute}] (sketch)
}
 Using the results of Section~\ref{sect:quantifier-aware-width} 
 and the normalization result of Lemma~\ref{lem:flattening},
  $\phi$ can be turned into a linear combination as in (\ref{eq:linearcombsketch}). Moreover, starting with any such representation yields the same linear combination up to counting equivalence of the summands by Theorem~\ref{thm:linearcombination}. Now turning this linear combination into a $\sh\PP$-formula yields a $\sh\PP$-representation and minimizing the width of the summands with Theorem~\ref{thm:shppcompute} gives a representation of optimal width.
\end{proof}

\bibliographystyle{IEEEtran}
\bibliography{hubiebib}

\newpage

\section{Material}

\pods{
\subsection{Remark on quantifier-aware width}

\begin{remark}\label{rem:star}
The quantifier-aware width can be arbitrarily higher than their treewidth. To see this consider the formula $\phi = \exists z \bigwedge_{i\in [n]} E(x_i, z)$. The primal graph of $\phi$ is a star, so it has treewidth $1$. We claim that the quantifier-aware width of $\phi$ is $n+1$. To see this, observe first that the free variables $x_1, \ldots, x_n$ must appear above $\topp(z)$ in a bag of any quantifier-aware tree decomposition. But since $x_i z$ is an edge in the primal graph for every $i$, the variable $x_i$ must also appear in a common bag with $z$ and consequently also in $\topp(z)$. Thus $\topp(z)$ must contain $n+1$ variables. 
\end{remark}
}


\subsection{Proof of Lemma~\ref{lem:qawcompare}}

\begin{proof}[of Lemma~\ref{lem:qawcompare}]
The main idea of the proof is that forcing the free variables of an 
$\exists$-component to appear above the quantified variables has a very similar effect as connecting them to a clique in the construction of the $\contract(\phi)$. So from a quantifier-aware decomposition we get a decomposition of $\contract(\phi)$ by restricting to the tree decomposition to the free variables. For the other direction, we can add the quantified variables to a tree decomposition of $\contract(\phi)$ in a straightforward way. 
We now give the details.

For the first inequality, observe first that any quantifier-aware tree decomposition of $\phi$ is a tree decomposition of $\phi$, so $\tw(\phi) + 1\le \qaw(\phi)$ is obvious. 
To prove the inequality $\tw(\contract(\phi))+1\le \qaw(\phi)$,
 let $(T, (B(t))_{t\in T})$ be a quantifier-aware tree decomposition of $\phi$. Introduce for every $\exists$-component of $\phi$ a new vertex $v_C$. Then substitute in every $B(t)$ every non-free variable $x$ of $\phi$ by $v_C$ where $C$ is the $\exists$-component that contains $x$. Call the result $B'(t)$. We claim that $(T, (B'(t))_{t\in T})$ is a tree decomposition of $\contract(\phi)$. To see this, note that by the same argument as in Remark~\ref{rem:star} we have for every $\exists$-component $C$ a bag that contains 
 $(V(C) \cap \free(\phi)) \cup \{ v_C \}$.

 For the second inequality, first compute a tree decomposition 
 $(T, (B(t))_{t\in T})$ of the contraction $\contract(\phi)$. Note that for every $\exists$-component $C$ of $\phi$ there is a bag $B(t^*)$ that contains $V(C)\cap \free(\phi)$, because this variable set forms a clique in $\contract(\phi)$ and it is well known that for every clique in a graph every tree decomposition must contain a bag that contains this clique completely. Now compute a tree decomposition $(T', (B'(t))_{t\in T'})$ for $G[V(C)\setminus \free(\phi)]$, where $G$ is the primal graph of $\phi$. Then construct  for every $t\in T'$ a new bag $B''(t) := B'(t) \cup (V(C) \cap \free(\phi))$. Finally, connect $T$ to $T'$ by connecting an arbitrary vertex of $T'$ to $t^*$. Doing this for every $\exists$-component yields a quantifier-aware tree decomposition of $\phi$. Moreover, the width of the decomposition is at most $\tw(\contract(\phi))+\tw(\phi)+1$ which completes the proof.
\end{proof}

\subsection{Proof of Lemma~\ref{lem:computeqaw}}
\begin{proof}
The idea of the proof is to add edges to the graph of $\phi$ 
in such a way that the treewidth of the resulting graph is exactly $\qaw(\phi)$; then, we apply standard algorithms for computing treewidth.
 To this end, let $\phi$ be a pp-formula with primal graph $G$ and $S:= \free(\phi)$. For each $\exists$-component $C$ of $\phi$ choose a vertex $x_C\in V(C)\setminus \free(\phi)$ and 
 connect it to all vertices $y\in V(C)\cap \free(\phi)$;
 moreover, connect the vertices in $V(C)\cap \free(\phi)$ by a clique. 
 Call the resulting graph $G'$. We will show that the minimum of $\tw(G')+1$ over the choices of the $x_C$ is $\qaw(\phi)$.
 
 We first show that for every choice of the $x_C$ we have $\tw(G')+1\ge \qaw(\phi)$. To see this, fix a tree decomposition 
 $(T, (B(t))_{t\in T})$ of $G'$. Since $V(C)\setminus \free(\phi)$ is connected in $G'$, the bags containing $V(C)\setminus \free(\phi)$ are contained in a subtree $T'$ of $T$. Moreover, because $\{x_c\}\cup (V(C)\cap \free(\phi))$ is a clique in $G'$, we know that $\{x_c\}\cup (V(C)\cap \free(\phi))\subseteq B(t^*)$ for some $t^*$ in $T'$. Since none of the vertices in $V(C)\setminus \free(\phi)$ have any neighbors outside of $V(C)$, we may assume that $t^*$ is the root of $T'$. Then it is easy to see that $(T, (B_t)_{t\in T})$ can be turned into a quantifier-aware tree decomposition: We only have to potentially add a new bag $B(t^{**}):= V(C)\cap \free(\phi)$ and a vertex $t^{**}$ in the decomposition. Then connect $t^{**}$ to $t^*$ and its parent and delete the edge between $t^*$ and its parent.
 
 For the other direction, let $(T, (B(t))_{t\in T})$ be a quantifier-aware tree decomposition of $\phi$. We will show that it is also a tree decomposition of $G'$ for a choice of the $x_C$. First note that by the same argument as before, the vertices of $V(C)\cap \free(\phi)$ are contained in a subtree $T'$ of $T$. Let $x_C'$ be the only variable of $V(C)\setminus \free(\phi)$ that is contained in $B(r)$ where $B(r)$ is the root of $T'$. Note that by the same argument as in Remark~\ref{rem:star}, we know that $B(r)$ contains $V(C)\cap \free(\phi)$. Thus $B(r)$ covers all edges introduced in the construction of $G'$ when choosing $x_C=x_C'$. Thus $(T, (B(t))_{t\in T})$ is indeed a tree decomposition of $G'$ for the right choice of the $x_C$.
 
 Since computing tree decompositions is fixed parameter tractable 
 when parameterized by the treewidth (see e.g.~\cite{FlumGrohe06-parameterizedcomplexity}), the only problem left to solve is the right choice of the $x_C$. But since the quantified variables of the different $\exists$-components are independent, we can do this choice independently for every $\exists$-component $C$ as follows: Construct $G''$ by choosing a vertex $x_C\in V(C)\cap \free(\phi)$ and proceed as in the construction of $G'$. Now for all other $\exists$-components $C'$ connect $V(C')\cap \free(\phi)$ to a clique and delete all variables in $V(C')\setminus \free(\phi)$. Clearly, trying all potential choices of $x_C$ lets us optimize the choice for $C$. Doing this for all $\exists$-components gives the desired choice and thus the optimal quantifier-aware tree decomposition.
\end{proof}

\subsection{Proof of Lemma~\ref{lem:shppandpp}}

\begin{proof} 
The proof relies on the observation that the condition on $\topp(x)$ in quantifier-aware tree decompositions corresponds closely to the fact that free variables can only be closed by a $P$-quantifier in $\sh\PP$-formulas after the contained pp-formula has been casted by a $C$-quantifier. With this in mind,  a $\sh\PP$-formula is transformed into a pp-formula
by making use of and inducting on
the $\sh\PP$-formula's structure (viewed as a tree). 
The other direction is similar.

a) 
Let $\phi'$ be a basic $\sh\PP$-sentence. By potentially renaming
quantified variables, make sure that every variable in $\phi'$ is either
free or quantified exactly once. We construct $\phi$ by deleting all
$C$-, $E$- and $P$-quantifiers and substituting all $\times$ by $\land$.
Obviously, the result is a pp-formula. Note that for every subformula
$\psi'$ of $\phi'$, we have an associated subformula $\psi$ of $\phi$.
For every subformula $\psi$ of $\phi$ we define $\lib(\psi)$ to be the
variables of $\psi'$ that are not quantified in $\psi$. We claim that
for all $\sh$-subformulas $\psi'$ and every assignment $h$ to
$\free(\psi')$,
 \begin{equation}\label{eq:shppandpp2} [\relb,
\psi'](h) = |\{ h': \lib(\psi) \rightarrow B \mid h' \text{ extends } h,
(\relb, h')\models \psi\}|.\end{equation}

We show (\ref{eq:shppandpp2}) by induction on the structure of basic
$\sh\PP$-formulas. If $\psi' = C \psi''$ for a pp-formula $\psi''$, then
we actually have $\psi = \psi''$. Moreover, $h$ assigns to values to all
liberal variables of $\psi$, so both sides of (\ref{eq:shppandpp2}) are
$1$ if and only if $h$ satisfies $\psi$. If $\psi'= EV\psi''$ or $\psi'=
PV \psi''$, we get (\ref{eq:shppandpp2}) directly from the semantics of
$\sh$-formulas and induction. Finally, if $\psi'=\psi_1' \times
\psi_2'$, we have that $\lib(\psi_1)\cap \lib(\psi_2) \subseteq
\lib(\psi)$.
Therefore, $\lib(\psi_1)$ and $\lib(\psi_2)$ overlap 
only at variables where a mapping $h': \lib(\psi) \to B$ 
is defined, and
(\ref{eq:shppandpp2}) follows easily.

 
 
 It remains to show the inequalities of the width measures. To this end, consider the syntax tree $T$ of $\phi'$. For every node $t$ of $T$, define $B(t) := \free(\phi_t')$ where $\phi_t'$ is the subformula of $\phi'$ that has $t$ as its root. Note that $(T, (B(t))_{t\in T})$ satisfies the connectivity condition and is thus a tree decomposition of $\phi$ of width $\width(\phi')-1$. Also, $(T, (B(t))_{t\in T})$ is quantifier-aware because in $\phi'$ existential quantification is only allowed in the pp-part in which all free variables of $\phi$ are still free. This shows $\width(\phi')\ge \qaw(\phi)$. Now observe that by deleting all bags that contain quantified variables we end up with a tree decomposition for $\contract(\phi)$. This shows $\shwidth(\phi) \ge \tw(\contract(\phi))+1$.
 
 b) Let now $\phi$ be a pp-formula and let $(T, (B(t))_{t\in T})$ be a nice quantifier-aware tree decomposition of $\phi$ of width $k-1$. For every $\exists$-component $C$ of $\phi$, the vertices $V(C)\setminus \free(\phi)$ all lie in the bags of a subtree $T_C$ of $T$. Moreover, we may w.l.o.g.~assume that the bags in $T_C$ do not contain any vertices not in $V(C)$. Finally, we have that the bag $B(r_C)$ where $r_c$ is the root of $T_C$ contains $V(C)\cap \free(\phi)$, because $(T, (B(t))_{t\in T})$ is quantifier-aware. The results of \cite{DalmauKolaitisVardi02-treewidth} assure that there is a pp-formula $\phi_C$ of width $k$ that is logically equivalent to the pp-formula that we get by restricting $\phi$ to the atoms that have all of their variables in $V(C)$.
 
 We now construct for every $t\in T$ such that $B(t)$ does not contain any quantified variables of $\phi$ a basic $\sh\PP$-sentence $\phi_t'$. So let $t$ be a node of $t$ with the desired properties. Let $\atom(t)$ be the atoms of $\phi$ containing only variables in $B(t)$ and set $\bar{\phi}_{t}:= \prod_{\psi \in \atom(t)} C \psi$. If $t$ has no children, set $\phi_t := \bar{\phi}_t$. 
 
 If $t$ has a child $t'$ such that $B(t')\setminus \free(\phi)\ne \emptyset$, then $t$ has only that one child because $(T, (B(t))_{t\in T})$ is nice. Let $C$ be the unique $\exists$-component of the variable in $B(t')\setminus \free(\phi)$. We set $\phi_t:= \bar{\phi}_t \times C \phi_C$.
 
 If $t$ has a child $t'$ such that $B(t')\setminus \free(\phi)= \emptyset$ and a variable $x$ is forgotten when going from $t'$ to $t$, then set $\phi_t := P x \phi_{t'}$.
 
 If $t$ has a child $t'$ and a variable $x$ is introduced when going from $t'$ to $t$, then set $\phi_t := \bar{\phi}_t \times E x \phi_{t'}$.
 
 If $t$ has two children $t_1$ and $t_2$, then note that 
 $$B(t_1)\setminus \free(\phi)= B(t_2)\setminus \free(\phi)= \emptyset.$$ 
 Moreover, $\free(\phi_{t_1}) = \free(\phi_{t_2})$. We define $\phi_t := \phi_{t_1} \times \phi_{t_2}$.
 
 Set $\phi':= P \free(\phi_r) \phi_r$ where $r$ is the root of $\phi$.
 
 An easy induction along the construction of $\phi$ similar to that in a) shows that $\phi'$ does indeed compute the correct value for every structure $\relb$. Moreover, the width of $\phi'$ is at most $k$ which completes the proof.
 
 If we do not have a bound on $\qaw(\phi)$ but only on $\tw(\contract(\phi))$, the same construction as above yields the the bound $\tw(\contract(\phi)) + 1 \ge \shwidth(\phi')$. The only difference is that we do not have to bound the width of the pp-formulas with \cite{DalmauKolaitisVardi02-treewidth}.
\end{proof}

\section{Proof of Lemma~\ref{lem:turnPP}}

\begin{proof}
In a first step, we transform $\psi$ into a logically equivalent 
ep-formula
$\psi^d = \bigvee_{i=1}^s \psi_i$ where the $\psi_i$ are pp-formulas;
this can be done without increasing width~\cite{Chen14-existentialpositive}.
 Then we claim that $\phi$ is logically equivalent to 
 $$\phi'=\sum_{J\subseteq [s], J\ne \emptyset} (EL (-1)^{|J|+1}) \prod_{i\in J} C(\psi_i, L).$$ First note that this is a well-formed $\sh$-formula, because for all additions and multiplications the free variables of all operands are $L$. It remains to show that $\phi'$ is logically equivalent to $\phi$. So fix $\relb$ and $h:L\rightarrow B$. 
 
 If $h$ does not satisfy $\psi$, then $[\relb, \phi](h)=0$. Since $h$ does not satisfy any $\psi_i$, it is easy to see that $[\relb, \phi](h)=0$ as well.
 
Assume $h$ satisfies $\psi$; say that $\psi_1, \ldots, \psi_\ell$
are the disjuncts that it satisfies.
By definition $[\relb, \phi](h)=1$.
Also,
$[\relb, \phi'](h)$
is equal to
$$\sum_{J\subseteq [s], J\ne \emptyset} (-1)^{|J|+1} \prod_{i\in J} [\relb,C(\psi_i, L)](h)$$ 
which in turn is equal to $\sum_{J\subseteq [\ell], J\ne \emptyset} (-1)^{|J|+1}  $;
this latter quantity is equal to the sum 
$$\ell - {\ell \choose 2} + {\ell \choose 3} - \ldots \pm {\ell \choose \ell}  = (1-1)^\ell + 1 = 1.$$
 It is readily seen that the width of $\phi'$ is not bigger than that of $\phi$.
\end{proof}

\subsection{Proofs for Section~\ref{sect:independence-theorem}}


\begin{proof}[Explanation for Lemma~\ref{lemma:commutativity}]
Let $p$ be a univariate polynomial with positive integer coefficients 
and variable $X$.
Fix a representation of $p$ as a term with $1$ and $X$ as the inputs
and where addition and multiplication are the operations.
For a vocabulary $\tau$, let $\reli_{\tau}$ 
 denote 
the $\tau$-structure 
with universe $\{ a \}$ and where
each relation symbol $R \in \tau$ has 
$R^{\reli_{\tau}} =\{ (a,\ldots, a)\}$. 
For each structure $\relb$ over vocabulary $\tau$,
we define $p(\relb)$ as the $\tau$-structure obtained by
evaluating the representation of $p$ by interpreting $1$
as $\reli_\tau$, $X$ as $\relb$,
addition as the disjoint union $\uplus$ of two structures,
and multiplication as the product of two structures.

Here, by the disjoint union $\rela \uplus \relb$ of two structures
$\rela$, $\relb$, we mean the structure obtained as follows:
rename the elements of $\rela$, $\relb$ so that their universes
$A, B$ are disjoint, and then take the structure 
with universe $A \cup B$
 where $R^{\rela \uplus \relb} = R^{\rela} \cup R^{\relb}$
 for each relation symbol $R$.
 By the product $\rela \times \relb$ 
 of two structures $\rela, \relb$, we mean the structure
 with universe $A \times B$ and 
 where $R^{\rela \times \relb} = \{ ((a_1, b_1), \ldots, (a_k, b_k)) ~|~ (a_1, \ldots, a_k) \in R^{\rela}, (b_1, \ldots, b_k) \in R^{\relb} \}$.

Lemma~\ref{lemma:commutativity}
can be proved by a straightforward induction on the structure of $p$,
using the observations that 
$|\phi(\reli_\tau)| = 1$,
$|\phi(\reld \uplus \reld')| = |\phi(\reld)| + |\phi(\reld')|$,
and
$|\phi(\reld \times \reld')| = |\phi(\reld)|\cdot|\phi(\reld')|$.
\end{proof}


\subsection{Full proofs for Section~\ref{sect:proofs-of-main-theorems}}

Remember that we call a $\sh\PP$-formula $\phi$ basic
if it does not contain $+$ nor subformulas of the form $n$, 
where $n \in \Z$. The following lemma shows that if we are only interested in $\sh\PP$-representations of pp-formulas, we may restrict ourselves to basic $\sh\PP$-formulas.

\begin{lemma}\label{lem:basicisenough}
 Let $\phi$ be a pp-formula and let $\phi'$ be a $\sh\PP$-representation of $\phi$. Then there is a basic $\sh\PP$-sentence $\phi''$ that is also a $\sh\PP$-representation of $\phi$ such that $\width(\phi'')\le \width(\phi')$ and $\shwidth(\phi'')\le \shwidth(\phi')$.
\end{lemma}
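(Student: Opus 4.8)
The proof expands the sketch of Lemma~\ref{lem:basicisenoughsketch}, with additional care taken to control $\shwidth$ alongside $\width$. First I would put the given $\sh\PP$-representation $\phi'$, which is a $\sh\PP$-sentence, into flat normal form: by Lemma~\ref{lem:flattening} (together with Lemma~\ref{lem:normconst}) we may rewrite $\phi'$ as a logically equivalent flat $\sh\PP$-sentence $\sum_{i=1}^{s}\psi_i\times\phi_i$ in which each $\psi_i$ is a constant $\sh\PP$-sentence and each $\phi_i$ is a basic $\sh\PP$-sentence, and in which $\width$ has not increased; inspecting the explicit transformations in Appendix~\ref{app:normalization} shows that $\shwidth$ does not increase either, so $\width(\psi_i\times\phi_i)\le\width(\phi')$ and $\shwidth(\psi_i\times\phi_i)\le\shwidth(\phi')$ for each $i$. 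By Lemma~\ref{lem:normconst} there are $n_i\in\Z$ and $m_i\ge 0$ with $[\relb,\psi_i]=n_i\,|B|^{m_i}$ for every finite $\relb$, and $|B|^{m_i}=|\eta_i(\relb)|$ for the atom-free pp-formula $\eta_i$ whose liberal set consists of $m_i$ fresh variables. Writing $\sigma_i$ for the pp-formula that the basic $\sh\PP$-sentence $\phi_i$ represents via Lemma~\ref{lem:shppandpp}(a), and setting $\rho_i:=\sigma_i\uplus\eta_i$ (disjoint union of pp-formulas), we obtain $[\relb,\psi_i\times\phi_i]=n_i\,|\rho_i(\relb)|$ for every finite $\relb$. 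Since adjoining the isolated liberal vertices coming from $\eta_i$ changes neither $\qaw$ nor $\tw(\contract(\cdot))$, Lemma~\ref{lem:shppandpp}(a) gives $\qaw(\rho_i)\le\width(\phi_i)\le\width(\phi')$ and $\tw(\contract(\rho_i))+1\le\shwidth(\phi_i)\le\shwidth(\phi')$.

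Next I would turn $\sum_i n_i\,|\rho_i(\relb)|$ into a genuine linear combination: group the $\rho_i$ into counting-equivalence classes, sum the coefficients within each class, discard the classes whose total coefficient is $0$, and keep one representative per surviving class. This produces an expression $\ell=\sum_j c_j'\,|\rho_j'(V_j)|$ with nonzero $c_j'$ and pairwise non-counting-equivalent $\rho_j'$, each still satisfying $\qaw(\rho_j')\le\width(\phi')$ and $\tw(\contract(\rho_j'))+1\le\shwidth(\phi')$. Since $\phi'$ represents $\phi$, we have $\ell(\relb)=[\relb,\phi']=|\phi(\relb)|$ for every finite $\relb$, so the linear combination obtained by subtracting $1\cdot|\phi(\relb)|$ from $\ell$ --- decrementing the coefficient of the unique $\rho_j'$ counting equivalent to $\phi$, if there is one, or otherwise adjoining a fresh summand with coefficient $-1$ for $\phi$ --- evaluates to $0$ on every finite structure. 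By the Independence theorem (Theorem~\ref{thm:linearcombination}) a non-empty linear combination cannot vanish everywhere, so this difference is empty; this forces $\ell$ to be exactly $1\cdot|\rho(\relb)|$ for a single pp-formula $\rho$ counting equivalent to $\phi$, with $\qaw(\rho)\le\width(\phi')$ and $\tw(\contract(\rho))+1\le\shwidth(\phi')$.

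Finally I would apply Lemma~\ref{lem:shppandpp}(b) to $\rho$, obtaining a basic $\sh\PP$-sentence $\phi''$ that represents $\rho$ with $\width(\phi'')\le\qaw(\rho)\le\width(\phi')$ and $\shwidth(\phi'')\le\tw(\contract(\rho))+1\le\shwidth(\phi')$; since $\rho$ is counting equivalent to $\phi$ we get $[\relb,\phi'']=|\rho(\relb)|=|\phi(\relb)|$ for every finite $\relb$, so $\phi''$ is the desired $\sh\PP$-representation of $\phi$. I do not expect any single step to be the genuine obstacle; the delicate part is the bookkeeping that simultaneously threads $\width$, $\shwidth$, and $\qaw$ through the flattening --- in particular verifying that the constant factors contribute only isolated liberal variables and that merging counting-equivalent summands spoils neither width bound --- while the one substantive input is the Independence theorem, which is exactly what collapses the linear combination down to a single coefficient-$1$ term.
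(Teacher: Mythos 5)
Your proof is correct and follows essentially the same route as the paper's: flatten $\phi'$ via the normalization lemmas, use Lemma~\ref{lem:shppandpp}(a) to turn it into a linear combination of pp-formula counts while tracking $\qaw$ and $\tw(\contract(\cdot))$, merge counting-equivalent summands, invoke Theorem~\ref{thm:linearcombination} to collapse the combination to a single coefficient-$1$ term counting equivalent to $\phi$, and convert back with Lemma~\ref{lem:shppandpp}(b). Your explicit treatment of the constant factors (as $n_i|B|^{m_i}$ absorbed via atom-free liberal variables) and of the $\shwidth$ bookkeeping is only a more careful spelling-out of steps the paper leaves implicit.
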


\begin{proof}
 With Lemma~\ref{lem:flattening} we may assume that $\phi'$ is flat, so let $\phi' = \sum_{i=1}^\ell \psi_i'\times \phi_i'$ 
 where each $\psi_i'$ is constant and each $\phi_i'$ is basic. 
As the $\sh\PP$-formulas $\phi'_i$ are basic,
by Lemma~\ref{lem:shppandpp} (a)
we obtain that there are pp-formulas $\phi_i$ representing them,
where for each finite structure $\relb$
we have $|\phi_i(\relb)| = [\relb, \phi_i']$.
Set $c_i$ to be the value that the constant formula
$\psi_i'$ evaluates to.
We have that, for every structure $\relb$, 
$[\relb, \phi'] = \sum_{i=1}^\ell c_i |\phi_i(\relb)|$,
where $c_i\in \mathbb{Z}$ and $\phi_i$ is a pp-formula.



Now combine the summands of counting equivalent pp-formulas to get a linear combination with 
 \begin{align}\label{eq:linearcomb}[\relb, \phi'] = \sum_{i=1}^{\ell'} c_i' |\phi_i''(\relb)|,\end{align} 
 where the $\phi_i''$ are pairwise not counting equivalent pp-formulas and $c_i'\in \mathbb{Z}\setminus \{0\}$. Note that for all $\phi_i''$ we have by Lemma~\ref{lem:shppandpp} that 
 $\qaw(\phi_i'')\le \width(\phi')$
 and
 $\tw(\contract(\phi_i''))+1 \le \shwidth(\phi')$.
 
 Since $\phi'$ is a $\sh\PP$-representation of $\phi$, we have 
 $[\relb, \phi']=|\phi(\relb)|$
 for all structures $\relb$. 
So, we have two linear combinations, $|\phi(\relb)|$
and that of (\ref{eq:linearcomb}) that evaluate to the same value.
By taking the difference and invoking 
Theorem~\ref{thm:linearcombination}, we obtain
that these two linear combinations are equal, up to 
counting equivalence of the pp-formulas.
Since the linear combination $|\phi(\relb)|$ only has one summand
with coefficient $1$,
it follows that the linear combination in (\ref{eq:linearcomb}) consists only of one summand with coefficient $1$. Let $|\psi(\relb)|$ be that summand. We have 
that $\psi$ and $\phi$ are counting equivalent.
   Now we apply Lemma~\ref{lem:shppandpp} (b) on input $\psi$ to construct a formula with the desired properties.
\end{proof}

Another ingredient that we will use in the proofs of the main theorems is a syntactic characterization of counting equivalence. To this end, we say that
two pp-formulas
$(\rela, V)$, $(\rela', V')$ 
 over the same signature
 are \emph{renaming equivalent}
if there exist surjections 
$h_1: V \to V'$ and $h_2: V' \to V$
 that can be extended to homomorphisms 
 $\bar{h} : \rela \rightarrow \rela'$ and 
 $\bar{h'}:\rela'\rightarrow \rela$, respectively.
 
 A crucial result of~\cite{ChenMengel16-pods-ep-counting}
is that renaming equivalence is a syntactic characterization of counting equivalence.
 
 \begin{theorem}
 \label{thm:renamingandcountingold}
\cite[Theorem 5.4]{ChenMengel16-pods-ep-counting}
Any two pp-formulas $\phi_1(V), \phi_2(V')$ are counting equivalent
if and only if they are renaming equivalent.
\end{theorem}

We will make use of the following easy corollary.

\begin{corollary}\label{cor:renamingandcounting}
 Let $\phi(V)$ and $\phi'(V')$ be counting equivalent pp-formulas. Then there is a pp-formula $\psi$ that is logically equivalent to $\phi$ such that $\qaw(\psi)\le \qaw(\phi')$.
\end{corollary}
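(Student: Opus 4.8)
The plan is to derive, from the syntactic characterization of counting equivalence (Theorem~\ref{thm:renamingandcountingold}), a bijective renaming of $\phi'$ that is already \emph{logically} equivalent to $\phi$, and then to observe that a bijective renaming of variables leaves $\qaw$ unchanged.

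First I would apply Theorem~\ref{thm:renamingandcountingold} to conclude that $\phi(V)$ and $\phi'(V')$ are renaming equivalent; writing $\phi = (\rela, V)$ and $\phi' = (\rela', V')$ in the structure view, there are surjections $h_1 : V \to V'$ and $h_2 : V' \to V$ that extend to homomorphisms $\bar h : \rela \to \rela'$ and $\bar{h'} : \rela' \to \rela$. Since $V$ and $V'$ are finite and admit surjections in both directions, $|V| = |V'|$, so $h_1$ is in fact a bijection; only $h_1$ and $\bar h$ are needed below. Let $\psi = (\rela_1, V)$ be the pp-formula obtained from $\phi'$ by renaming each liberal variable $v' \in V'$ to $h_1^{-1}(v') \in V$ and renaming the remaining (quantified) variables of $\phi'$ to fresh variables; formally, if $\iota$ denotes the resulting bijection of universes, then $\iota : \rela' \to \rela_1$ is an isomorphism with $\iota \res V' = h_1^{-1}$, and $\lib(\psi) = V$.

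Next I would verify that $\psi$ is logically equivalent to $\phi$. The composite $\iota \circ \bar h : \rela \to \rela_1$ is a homomorphism that is the identity on $V$, since for $v \in V$ we have $\bar h(v) = h_1(v) \in V'$ and $\iota(h_1(v)) = h_1^{-1}(h_1(v)) = v$. Hence, using the basic fact about pp-formulas recalled in Section~\ref{sect:preliminaries}, for any structure $\relb$ and any $f : V \to B$ with $\relb, f \models \psi$, the map $f$ extends to a homomorphism $g : \rela_1 \to \relb$, and then $g \circ \iota \circ \bar h : \rela \to \relb$ is a homomorphism extending $f$, so $\relb, f \models \phi$; thus $\psi(\relb) \subseteq \phi(\relb)$ for every finite $\relb$. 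On the other hand, $\psi$ is a bijective renaming of $\phi'$ (with the liberal variables matched up by $h_1^{-1}$), so $|\psi(\relb)| = |\phi'(\relb)|$ for every $\relb$, and $|\phi'(\relb)| = |\phi(\relb)|$ by the assumed counting equivalence. Combining $\psi(\relb) \subseteq \phi(\relb)$ with $|\psi(\relb)| = |\phi(\relb)|$ forces $\psi(\relb) = \phi(\relb)$ for all $\relb$, i.e.\ $\psi$ is logically equivalent to $\phi$.

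Finally, since $\psi$ is obtained from $\phi'$ by the bijective renaming $\iota$, which maps $V'$ onto $V$, the graph of $\psi$ is the $\iota$-image of the graph of $\phi'$; consequently the connected components of the graph after deleting the liberal variables, the $\exists$-components, the graph $\contract(\cdot)$, and (for nice decompositions) the function $\topp$ all correspond under $\iota$. Therefore $\iota$ carries any quantifier aware tree decomposition of $\phi'$ to a quantifier aware tree decomposition of $\psi$ of the same width, so $\qaw(\psi) = \qaw(\phi') \le \qaw(\phi')$, as required. I expect this last bookkeeping — checking that each gadget entering the definition of $\qaw$ is invariant under variable renaming — to be the only slightly tedious point; conceptually, the crux is the third step, where the mere \emph{equality of counts} between $\psi$ and $\phi$, together with one-sided entailment, is leveraged into genuine logical equivalence.
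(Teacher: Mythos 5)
Your proposal is correct and follows essentially the same route as the paper: invoke Theorem~\ref{thm:renamingandcountingold}, note that the witnessing surjections between the finite liberal-variable sets are bijections, rename $\phi'$ along the bijection to obtain $\psi$ with $\lib(\psi)=V$, and observe that $\qaw$ is invariant under renaming. The only (harmless) difference is the last verification: the paper gets logical equivalence of $\phi$ and $\psi$ from homomorphisms in both directions fixing the liberal variables via Chandra--Merlin, whereas you use just one homomorphism direction for a containment and upgrade it to equality using the equality of the (finite) counts.
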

\begin{proof}
By Theorem~\ref{thm:renamingandcountingold}, we have that $\phi$ and $\phi'$ are renaming equivalent. 
Construct $\psi(V)$ from $\phi'$ by renaming variables
so that $\phi(V)$ and $\psi(V)$ are renaming equivalent
via the identity mapping $i: V \to V$ on $V$.
We claim that $\psi$ and $\phi$ are logically equivalent, that is,
for any structure $\relb$
and any mapping $f: V \to B$,
it holds that $\relb,f \models \psi$ iff $\relb,f \models \phi$.

We prove the forward direction; the backward direction is analogous.
Let $[\rela_\psi, V]$ and $[\rela_\phi, V]$ be the structure
views of the pp-formulas $\psi$ and $\phi$, respectively.
Suppose that $\relb,f \models \psi$.
By the work of Chandra and Merlin~\cite{ChandraMerlin77-optimal} discussed in the preliminaries,
we obtain that there is an extension $g: A_\psi \to B$
of $f$ that is a homomorphism from $\rela_\psi$ to $\relb$.
Since $\phi$ and $\psi$ are renaming equivalent via the identity
mapping $i$ on $V$,
there exists a homomorphism $h$ from $\rela_\phi$ to $\rela_\psi$
that extends $i$.
By composing $h$ and $g$,
we obtain an extension of $f$ that is a homomorphism
from $\rela_\phi$ to $\relb$.
By the work of Chandra and Merlin~\cite{ChandraMerlin77-optimal} discussed in the preliminaries,
we obtain that $\relb,f \models \phi$.
\end{proof}

\begin{proof}[of Theorem~\ref{thm:shppequi}]
 We start with the second statement. Let first $\Phi$ satisfy the contraction condition. Then there is a constant $k$ such that for all cores $\phi$ of pp-formulas in $\Phi$ we have 
 $\tw(\contract(\phi))+1 \le k$. But then Lemma~\ref{lem:shppandpp} yields basic $\sh\PP$-representations $\phi'$ with $\shwidth(\phi')\le k$.
 

 Now assume there is a constant $k$ such that every formula $\phi$ in $\Phi$ has a $\sh\PP$-representa\-tion $\phi'$ such that $\shwidth(\phi')\le k$. By Lemma~\ref{lem:basicisenough} we may assume that $\phi'$ is basic. Then by Lemma~\ref{lem:shppandpp} we obtain a pp-formula $\phi''$ that is counting equivalent to $\phi$ and such that we have 
 $\tw(\contract(\phi'')) +1 \le k$. 
 As $\phi$ and $\phi''$ are counting equivalent, by 
 Theorem~\ref{thm:renamingandcountingold},
 they are renaming equivalent.
 Consequently, the treewidth of $\contract(\cdot)$
 applied to the cores of $\phi$ and $\phi''$, is the same.
 Since passing to the core does not increase the treewidth of
 $\contract(\cdot)$, the treewidth of
 $\contract(\cdot)$ applied to the core of $\phi$ is $\leq k-1$.
 By Corollary~\ref{cor:renamingandcounting}, 
 we have that $\phi$ is logically equivalent to a formula $\psi$ 
 with $\tw(\contract(\psi))+1 \le k$. Consequently, $\Phi$ satisfies the contraction condition.
 
 For the first statement, let first $\Phi$ satisfy the tractability condition. Then there is a constant $k$ such that for all cores $\phi$ of pp-formulas in $\Phi$ we have $\tw(\phi)\le k$ and $\tw(\contract(\phi))+1\le k$. It follows that $\qaw(\phi)\le 2k$ by Lemma~\ref{lem:qawcompare}. Then Lemma~\ref{lem:shppandpp} yields a basic $\sh\PP$-representation 
 $\phi'$ with $\width(\phi')\le 2k$.
 
 Now assume there is a constant $k$ such that every formula $\phi$ in $\Phi$ has a $\sh\PP$-representa\-tion $\phi'$ such that $\width(\phi')\le k$. We may again assume that $\phi'$ is basic. Then Lemma~\ref{lem:shppandpp} gives a pp-formula $\phi''$ that is counting equivalent to $\phi$ such that $\qaw(\phi) \le k$. Using Corollary~\ref{cor:renamingandcounting} it follows that $\phi$ is logically equivalent to a formula $\psi$ with $\qaw(\phi)\le k$. Now applying Lemma~\ref{lem:qawcompare} shows that $\Phi$ satisfies the tractability condition.
\end{proof}

\begin{proof}[of Theorem~\ref{thm:shppcompute}]
 By Lemma~\ref{lem:basicisenough} we may assume that the desired $\sh\PP$-representa\-tion is basic. It then suffices to minimize the width of the factor that is not constant. To do so, we translate to a pp-formula with Lemma~\ref{lem:shppandpp} (a), then minimize the width with Lemma~\ref{lem:computeqaw} and finally translate back to a basic $\sh\PP$-formula with Lemma~\ref{lem:shppandpp} (b).
\end{proof}

In what follows,
we will use the following lemma that summarizes the main construction of the proof of Theorem~3.1 in~\cite{ChenMengel16-pods-ep-counting}.

\begin{lemma}\label{lem:classificationdetails}
 For every set $\Phi$ of ep-formulas there is a set $\Phi^+$ of pp-formulas such that the following holds: 
  For every $\phi\in \Phi$ there are pp-formulas $\phi_1, \ldots, \phi_\ell\in \Phi^+$, pp-sentences $\psi_1, \ldots, \psi_t\in \Phi^+$ and non-zero integers $c_1, \ldots, c_l$ such that the following holds for every structure $\relb$:
 \begin{itemize}
 \item If $\relb$ satisfies one of the sentences $\psi_i$, then 
 $|\phi(\relb)| = |B|^{|\free(\phi)|}$.
 \item Otherwise, we have $|\phi(\relb)| = \sum_{i=1}^\ell c_i |\phi_i(\relb)|$.
 \end{itemize}
 Moreover, the $\phi_i$ are pairwise not counting equivalent and there is no homomorphism from any sentence $\psi_i$ to any $\psi_j$ for $i\ne j$ and no homomorphism to any $\phi_j$.
 
 Finally, for every $\phi'\in \Phi^+$ we have that $\phi'$ appears as a $\phi_i$ or $\psi_i$ as above for a $\phi\in \Phi$.
\end{lemma}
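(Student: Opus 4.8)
\medskip
\noindent\textbf{Proof plan.}
The plan is to reconstruct the construction underlying Theorem~3.1 of~\cite{ChenMengel16-ep-classification-arxiv}, of which this lemma is essentially a repackaging. First I would put each $\phi \in \Phi$ into a disjunctive normal form: every ep-formula $\phi$ with $V := \free(\phi)$ is logically equivalent to a finite disjunction $\bigvee_{j \in [s]} \rho_j$ of prenex pp-formulas, all on the common set $V$ of free variables and on pairwise disjoint sets of bound variables (the standard ``disjunction of conjunctive queries'' form, cf.~\cite{Chen14-existentialpositive}); passing to cores of the $\rho_j$ is harmless and streamlines the homomorphism bookkeeping later. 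Call $\rho_j$ \emph{trivial} if no atom of $\rho_j$ mentions a variable of $V$; such a $\rho_j$ is, after discarding unused free variables, a pp-sentence, and any $\relb$ satisfying it has $\rho_j(\relb)$ equal to the set of all maps $V \to \relb$, hence $\phi(\relb)$ is that full set as well. Let $T \subseteq [s]$ collect the trivial disjuncts and let the $\psi_i$ be the associated pp-sentences; this already delivers the first bullet of the statement.

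Next, fix a structure $\relb$ satisfying none of the $\psi_i$. Then $\rho_j(\relb) = \emptyset$ for every $j \in T$, so $\phi(\relb) = \bigcup_{j \in [s]\setminus T} \rho_j(\relb)$, and inclusion--exclusion yields
\[
|\phi(\relb)| \;=\; \sum_{\emptyset \ne J \subseteq [s]\setminus T} (-1)^{|J|+1}\, |\rho_J(\relb)|, \qquad \rho_J := \bigwedge_{j \in J}\rho_j,
\]
where each $\rho_J$ is again a prenex pp-formula with free variables $V$. This is a $\Z$-linear combination of counts of pp-formulas, which I would then normalize: group the $\rho_J$ into counting-equivalence classes --- this is decidable and coincides with renaming equivalence by Theorem~\ref{thm:renamingandcountingold} --- choose a representative $\phi_i$ per class, add up the $\pm1$ coefficients, and discard every class with total coefficient $0$. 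The result is an identity $|\phi(\relb)| = \sum_{i=1}^{\ell} c_i\, |\phi_i(\relb)|$, valid for all $\relb$ satisfying no $\psi_i$, with the $\phi_i$ pairwise not counting equivalent and each $c_i \in \Z\setminus\{0\}$.

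It then remains to enforce the homomorphism conditions by pruning, without disturbing either branch of the case distinction. For the sentences: whenever the current list contains $\psi_i \ne \psi_{i'}$ with a homomorphism from $\psi_i$ to $\psi_{i'}$, then $\psi_{i'}$ is entailed by $\psi_i$, so deleting $\psi_{i'}$ leaves unchanged the set of structures satisfying ``some $\psi_i$''; iterating terminates with a list of pairwise homomorphism-incomparable sentences, in particular with no homomorphism $\psi_i \to \psi_j$ for $i \ne j$. For the $\phi_j$: if there is a homomorphism from a surviving $\psi_i$ to some $\phi_j$, then every $\relb$ with $|\phi_j(\relb)| > 0$ admits a homomorphism from $\phi_j$, hence from $\psi_i$, hence $\relb \models \psi_i$; thus $|\phi_j(\relb)| = 0$ throughout the ``otherwise'' branch, and $\phi_j$ together with its coefficient may be dropped from the second identity without affecting it. Removing terms from a pairwise-not-counting-equivalent list keeps it such, so all ``moreover'' conditions hold simultaneously. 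Finally set $\Phi^* := \bigcup_{\phi \in \Phi}\big(\{\phi_1,\dots,\phi_\ell\}\cup\{\psi_1,\dots,\psi_t\}\big)$, using the data produced for each $\phi$; then every element of $\Phi^*$ occurs as a $\phi_i$ or a $\psi_i$ for some $\phi \in \Phi$ by construction.

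The main obstacle is the last step: one must interleave the merging of counting-equivalent terms with the homomorphism-based pruning of the sentences and of the $\phi_j$'s, and verify at each stage that pruning preserves exactly the correct branch (``all assignments'' versus ``linear combination''). A secondary point requiring care is pinning down the right syntactic notion of \emph{trivial disjunct}: a disjunct with a nontrivial free part but a satisfiable sentence part must \emph{not} be treated as one, since satisfying its sentence part does not force $\phi(\relb)$ to be the full set of assignments.
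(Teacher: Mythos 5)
The paper itself contains no proof of Lemma~\ref{lem:classificationdetails}: it is imported as a summary of the main construction in the proof of Theorem~3.1 of~\cite{ChenMengel16-ep-classification-arxiv}, so there is no in-paper argument to compare against line by line. Your reconstruction follows what is essentially that construction's natural route --- rewrite $\phi$ as a union of prenex pp-formulas on a common set of liberal variables, split off as the sentences $\psi_i$ those disjuncts whose atoms avoid the liberal variables, apply inclusion--exclusion to the remaining disjuncts, merge counting-equivalent conjunctions (Theorem~\ref{thm:renamingandcountingold} makes this well defined and effective) and discard zero coefficients, then prune via homomorphism arguments, and finally take $\Phi^*$ to be the union of all produced lists --- and as far as I can check it delivers every clause of the statement: the ``all assignments'' branch, the linear combination on the complementary branch, non-zero integer coefficients, pairwise counting-inequivalence, the homomorphism conditions, and the closure property of $\Phi^*$.

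One sentence needs repair, although the pruning rule it justifies is the right one: if there is a homomorphism from $\psi_i$ to $\psi_{i'}$, then (by composing homomorphisms, or by Chandra--Merlin~\cite{ChandraMerlin77-optimal}) it is $\psi_{i'}$ that \emph{entails} $\psi_i$, i.e.\ the models of $\psi_{i'}$ are contained in those of $\psi_i$ --- not ``$\psi_{i'}$ is entailed by $\psi_i$'' as you wrote. The direction matters: deleting the sentence with the \emph{smaller} model set (here $\psi_{i'}$) is exactly what keeps the union of models, and hence both branches of the case distinction, unchanged; deleting the other one would shrink the first branch and invalidate the second identity on the structures lost. With that wording corrected, and together with your (correct) observation that a homomorphism from a surviving $\psi_i$ into some $\phi_j$ forces $|\phi_j(\relb)|=0$ on the ``otherwise'' branch so that the term can be dropped, your argument establishes the lemma.
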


\begin{proof}[of Theorem~\ref{thm:shepequi}]
First suppose that $\Phi^+$ satisfies the tractability condition. Then there is a constant $k$ such that for all cores $\psi$ of pp-formulas in $\Phi^+$ we have $\tw(\psi)\le k$ and $\tw(\contract(\psi))+1 \le k$. 
 With Lemma~\ref{lem:qawcompare} it follows that $\qaw(\psi)\le 2k$ for the cores $\psi$ of the pp-formulas in $\Phi$. 
 Then Lemma~\ref{lem:classificationdetails} provides
pp-formulas $\phi_1, \ldots, \phi_{\ell} \in \Phi^+$,
coefficients $c_1, \ldots, c_\ell$,
and
pp-sentences
$\psi_1, \ldots, \psi_t \in \Phi^+$ 
such that the following holds for every structure $\relb$: If $\relb$ satisfies any sentence $\psi_i$, then $|\phi(\relb)|= |B|^{|\free(\phi)|}$; otherwise, $|\phi(\relb)| = \sum_{i=1}^\ell c_i |\phi_i(\relb)|$.
Let $|\free(\phi)|=r$. 
 Then 
 $|\phi(\relb)|$ is equal to 
$(\prod_{i=1}^t (1-|\psi_i(\relb)|)) (\sum_{i=1}^\ell c_i |\phi_i(\relb)|) + |B|^r (1-\prod_{i=1}^t (1-|\psi_i(\relb)|))$.
 Now let $\phi_1', \ldots, \phi_\ell'$ be the $\sh$-formulas we get by applying Lemma~\ref{lem:shppandpp} on $\phi_1, \ldots, \phi_\ell$
 and then multiplying respectively by the $\sh$-formulas
 \[E \free(\phi) c_1, \ldots, E \free(\phi) c_\ell\]
 that correspond to the coefficients $c_i$.
 By Lemma~\ref{lem:shppandpp}, we have that the width of each $\psi'_i$ is
 bounded above by $2k$.

 Then set 
$\psi$ to be $P\lib(\phi)$
applied to the sum of \[(E\free(\phi)(1-C(\bigvee_{i=1}^t \psi_i, \emptyset))) \times (\sum_{i=1}^\ell \phi'_i)\] and
\[E\free(\phi) C(\bigvee_{i=1}^t \psi_i, \emptyset).\]
 Clearly, $\psi$ is a $\sh\EP$-representation of $\phi$. Moreover, the width of $\psi$ is the maximal width of the $\psi_i$ and 
 the $\phi'_i$, which is bounded above by $2k$.

 The first direction of the second item can be proved completely analogously using $\tw(\contract(.))$ and $\sh\width$ of $\qaw$ and $\width$.

For the other direction, assume that there is a constant $k$ such that every ep-formula in $\Phi$ possesses a $\sh\EP$-representation $\phi'$ with $\width(\phi)\le k$. Let $\psi$ be a pp-formula from $\Phi^+$. We will show that $\psi$ is equivalent to a pp-formula of quantifier-aware width at most $k$ which completes the proof with Lemma~\ref{lem:shppandpp}. 

We first choose $\phi\in \Phi$ such that $\psi$ appears as a $\phi_i$ or $\psi_i$ as in Lemma~\ref{lem:classificationdetails}. Let $\phi'$ be the $\sh\ep$-representation of $\phi$ of width at most $k$. With Lemma~\ref{lem:flattening} we may assume that $\phi'$ is flat, i.e., it has the form $\sum_{i=1}^\ell \theta_i'\times \phi_i',$ where the $\theta_i'$ are constant and the $\phi_i'$ are basic. As in the proof of Lemma~\ref{lem:basicisenough}, this yields for every $\relb$
 \begin{align}\label{eq:linearcomb2}|\phi(\relb)| = \sum_{i=1}^{\bar{\ell}} \bar{c}_i |\bar{\phi}_i(\relb)|,\end{align} where the $\phi_i$ are pairwise not counting equivalent. Moreover, $\qaw(\bar{\phi}_i)\le k$.
 
We will now construct a second representation of $|\phi(\relb)|$ as a linear combination. To this end, let $|\lib(\phi)|=r$ and let $\phi_1, \ldots, \phi_\ell$, $\psi_1, \ldots, \psi_t$ and $c_1, \ldots, c_\ell$ be as in Lemma~\ref{lem:classificationdetails}. Then as before, for every structure 
 $|\phi(\relb)| = \prod_{i=1}^t(1- |\psi_i(\relb)|) (\sum_{i=1}^\ell c_i |\phi_i(\relb)|) + |B|^r (1-\prod_{i=1}^t (1-|\psi_i(\relb)|)).$ 
Now multiplying the righthand side out, we get that $|\phi(\relb)|$ can be expressed as a weighted sum of  terms of the form $|\phi_i(\relb)| \cdot \prod_{j\in J} |\psi_j(\relb)|$ and of the form $|B|^r \prod_{j\in J} |\psi_j(\relb)|$. 
These terms are equivalent to \[|(\phi_i \land \bigwedge_{j\in J}\psi_j)(\relb)|\] and \[|(\psi^r\land \bigwedge_{j \in J}\psi_j)(\relb)|,\]
since the $\psi_j$ are sentences; here, $\psi^r$ is the true formula with liberal variables $\lib(\phi)$.

Now combine counting equivalent summands to get a linear combination \begin{align}\label{eq:linearcomb3}|\phi(\relb)| = \sum_{i=1}^\ell c_i' |\phi_i'(\relb)|.\end{align} 

We claim that the linear combination in (\ref{eq:linearcomb3}) contains $c \cdot |\psi(\relb)|$ or $c \cdot |\psi^r \land \psi(\relb)|$for some $c\ne 0$ as a summand. To see this, consider first the case that $\psi$ is not a sentence. Note that in this case $c\cdot \psi(\relb)$ appears in the weighted sum from Lemma~\ref{lem:classificationdetails} and thus (for $J= \emptyset$) also in the sum we get before combining counting equivalent summands. Moreover, $\psi$ is not counting equivalent to any other summand $\psi'$ in this sum. To see this, note first that by Lemma~\ref{lem:classificationdetails} there is no homomorphism from any of the $\psi_j$ to $\psi$. Moreover, the $\phi_i$ are pairwise not counting equivalent. Thus $c|\psi|$ appears in (\ref{eq:linearcomb3}) in this case. If $\psi$ is one of the sentences $\psi_i$, then $\psi^r \land \psi$ is contained in the sum before combining counting equivalent summands (for $J=\{i\}$). Obviously, $\psi^r\land \psi$ is not counting equivalent to any conjunction that contains a non-sentences $\psi_i$. Moreover, since there are no homomorphism between $\psi=\psi_i$ and $\psi_j$ by Lemma~\ref{lem:classificationdetails}, $\psi^r\land \psi$ is not counting equivalent to any other summand, so it must be contained in (\ref{eq:linearcomb3}).

It follows that (\ref{eq:linearcomb2}) and (\ref{eq:linearcomb3}) give two linear combinations that are equal for every structure $\relb$. Using Theorem~\ref{thm:linearcombination} it follows that (\ref{eq:linearcomb2}) contains a summand $c\cdot |\bar{\psi}|$ that is counting equivalent to $\psi$. Moreover, $\qaw(\bar{\psi})\le k$. 
Invoking Corollary~\ref{cor:renamingandcounting},
it follows 
that $\psi$ is logically equivalent to a formula with quantifier-aware width at most $k$.

Again, the second direction of the second item follows completely analogously.
\end{proof}

\begin{proof}[of Theorem~\ref{thm:shepcompute}] 
The algorithm first comptues a
$\sh\ep$-representation $\psi$
of the input $\phi \in \Phi$,
such as that given by Proposition~\ref{prop:simple-representation}.
 By Lemma~\ref{lem:flattening} we may assume that $\psi$ is flat, i.e., it has the form 
 $\psi=\sum_{i=1}^\ell \psi_i \times \phi_i$ where the $\psi_i$ are constant and the $\phi_i$ are basic. Note that $\width(\psi)=\max_{i\in \ell}(\width(\phi_i))$.
 
 Note that every flat $\sh\EP$-representation of $\phi$ can be turned into a linear combination as in (\ref{eq:linearcomb}). Moreover, starting with any such representation yields the same linear combination up to counting equivalence of the summands by Theorem~\ref{thm:linearcombination}. Now turning this linear combination into a flat $\sh\PP$-formula,
 one obtains a $\sh\PP$-representation; minimizing the width of the summands with Theorem~\ref{thm:shppcompute} gives a representation of optimal width.
\end{proof}

\end{document}